\newtheorem{theorem}{Theorem}[section]
\newtheorem{lemma}{Lemma}[section]
\newtheorem{proposition}{Proposition}[section]
\newtheorem{observation}{Observation}[section]
\theoremstyle{definition}
\newtheorem{definition}{Definition}[section]
\newtheorem{remark}{Remark}[section]
\newtheorem{example}{Example}[section]
\newcommand{\contract}{\langle t,i \rangle}
\newcommand{\ambcontract}{\langle \tau, i \rangle}
\newcommand{\ambcontractfull}{\langle \{t^1, \ldots, t^k\},i \rangle}
\newcommand{\ambcontractprime}{\langle \tau', i \rangle}
\newcommand{\UA}[1]{U_A(#1)}
\newcommand{\UAmid}[2]{U_A({#1} \mid {#2})}
\newcommand{\UP}[1]{U_P(#1)}
\newcommand{\UPmid}[2]{U_P({#1} \mid {#2})}
\newcommand{\Reward}{R}
\newcommand{\Welfare}{W}
\newcommand{\Payment}[2]{T_{#1}(#2)}
\DeclareMathOperator*{\argmax}{arg\,max}
\DeclareMathOperator*{\argmin}{arg\,min}
\newcommand{\Wel}[0]{W\xspace}
\DeclareRobustCommand{\stirling}{\genfrac\{\}{0pt}{}}
\newcommand{\nminusi}{[n] \setminus \{i\}}
\newcommand{\tauactioni}[0]{\langle \tau, i \rangle}
\newcommand{\tausizek}[0]{\tau = \{t^1, \ldots, t^k\}}
\newcommand{\tauhatsizek}[0]{\hat{\tau} = \{\hat{t}^1, \ldots, \hat{t}^k\}}
\newcommand{\Matching}{\textsc{3D-Matching}\xspace}
\title{Succinct Ambiguous Contracts\thanks{We would like to thank Gil Cohen, Daniel Peretz, Larry Samuelson, and Tami Tamir for insightful discussions and pointers. This project has been partially funded by the European Research Council (ERC) under the European Union's Horizon 2020 research and innovation program (grant agreement No. 866132), and under the 
European Union's Horizon Europe Program (grant agreement No.~101170373), by an Amazon Research Award, by the NSF-BSF (grant number 2020788), by the Israel Science Foundation Breakthrough Program (grant No.2600/24), and by a grant from TAU Center for AI and Data Science (TAD).}}
\author{Paul D\"utting\thanks{Google Research, Zurich, Switzerland. Email: \texttt{duetting@google.com}} \and Michal Feldman\thanks{Blavatnik School of Computer Science, Tel Aviv University, Israel. Email: \texttt{mfeldman@tauex.tau.ac.il}} \and Yarden Rashti\thanks{Blavatnik School of Computer Science, Tel Aviv University, Israel. Email: \texttt{rashtiyarden@mail.tau.ac.il}}}
\date{}
\begin{document}

\maketitle

\begin{abstract}
Real-world contracts are often ambiguous. While recent work by D\"utting, Feldman, Peretz, and Samuelson (EC 2023, Econometrica 2024) demonstrates that ambiguous contracts can yield large gains for the principal, their optimal solutions often require deploying an impractically large menu of contracts. This paper investigates \emph{succinct} ambiguous contracts, which are restricted to consist of at most $k$ classic contracts. By letting $k$ range from $1$ to $n-1$, this yields an interpolation between classic contracts ($k=1$) and unrestricted ambiguous contracts ($k=n-1$).
This perspective enables important structural and algorithmic results.
First, we establish a fundamental separability property: for any number of actions $n$ and any succinctness level $k$, computing an optimal $k$-ambiguous contract reduces to finding optimal classic contracts over a suitable partition of the actions, up to an additive balancing shift acting as a base payment. Second, we show bounds on the principal's loss from using a $k$-ambiguous rather than an unrestricted ambiguous contract, which uncover a striking discontinuity in the principal's utility regarding contract size: lacking even a single contract option may cause the principal's utility to drop sharply by a multiplicative factor of $2$, a bound we prove to be tight. Finally, we characterize the tractability frontier of the optimal $k$-ambiguous contract problem.
Our separability result yields a poly-time algorithm whenever the number of partitions of $n-1$ actions into $k$ sets is polynomial, recovering and extending known results for classic and unrestricted ambiguous contracts.
We complement this with a tight hardness result, showing that the problem is \textsf{NP}-hard whenever the number of partitions is super-polynomial. Moreover, when $k \approx n/3$, the problem is even hard to approximate.
\end{abstract}

\section{Introduction}
\label{sec:intro}
Contract design is a cornerstone of microeconomic theory \cite[][]{Holmstrom79,Ross73,Mirrlees99,GrossmanHart83}, as evidenced by the 2016 Nobel Prize in Economics awarded to Hart and Holmstr\"om \citep{NobelPrize16}. Just as mechanism design is fundamental to markets for goods, contract design serves a crucial role in markets for services. In recent years, it has also gained growing interest within the computer science community, see the recent surveys of \cite{DuttingFT24} and \cite{Feldman26}.

Real-world contracts are frequently ambiguous, as reflected in the inherently ambiguous terms they often contain. A software development contract might stipulate that the code be written in a ``maintainable and efficient" manner, an employment contract might state that an employee must demonstrate ``professionalism and commitment," and a university tenure policy may require a faculty member to show ``impactful contributions in their field." In each of these cases, the precise meaning of the terms is open to interpretations. 

The potential role of ambiguity in contracts was first noted by \citet{BernheimW98}. 
More recently, \citet{DuettingFPS24} proposed a model of ambiguous contracts, inspired by a similar framework for ambiguous auctions \citep{DiTillioEtAl17}.

Their model builds on a  
standard principal-agent framework with $n$ actions and $m$ outcomes. Each action $i$  
has an associated cost $c_i$, and an outcome distribution represented by a vector $p_i$, where $p_{ij}$ denotes the probability of outcome $j$ under action $i$. The principal observes only the realized outcomes, and, to incentivize the agent, designs a contract --- a payment function $t:[m]\rightarrow \mathbb{R}_+$ that specifies a payment for each possible outcome. The agent's utility for action $i$ is given by the expected payment for action $i$ minus the action's cost $c_i$.

The key innovation in their model is the introduction of ambiguity. An ambiguous contract $\tau=\{t^1, t^2, \ldots\}$ consists of a collection of payment functions, with the agent evaluating each action based on the worst-case utility over the contract’s support. This max-min evaluation captures the agent’s ambiguity aversion \citep{Schmeidler89,GilboaS93}. 
To ensure the credibility of the principal's offer, an ambiguous contract must satisfy a \emph{consistency} condition, which requires that the principal be indifferent among all contracts within the ambiguous contract, given the agent's best response. 
By leveraging this form of ambiguity, the principal can typically incentivize actions at a lower payment, or implement actions that would otherwise be infeasible to implement with  
a classic contract. 

As demonstrated by \citet{DuettingFPS24}, 
optimal ambiguous contracts exhibit structural simplicity: 
they are composed of 
at most $n-1$ 
single-outcome payment (SOP) functions --- payment functions that assign payment to only a single outcome. However, while each SOP contract is inherently simple, a contract composed of many such functions can become increasingly complex.
Under the Monotone Likelihood Ratio Property (MLRP) condition,
two classic contracts are sufficient to fully exploit the benefits of ambiguity. 
However, MLRP is a very strong assumption, much stronger than First-Order Stochastic Dominance (FOSD), and many practical scenarios 
violate it.
In contrast, in the general case, achieving the full potential of ambiguous contracts may require (roughly) as many contracts as there are actions or outcomes.\footnote{The MLRP condition \cite[see, e.g.,][Definition 2]{DuettingFPS24} requires that higher output must always be more likely to have come from high effort than low effort. In contrast, FOSD only requires that higher effort shifts the distribution to better outcomes.}  

\paragraph{Succinct Ambiguous Contracts.} 
A natural way to manage this growing complexity is to focus on ambiguous contracts with small support. Specifically, a 
$k$-ambiguous contract is an ambiguous contract consisting of at most 
$k$-classic contracts. Indeed, while it may be unrealistic for a contract to offer a vast number of different compensation schemes, it is quite common for contracts to specify a limited number of possibilities.

For example, sales commission contracts often define multiple bonus structures: one for meeting a baseline sales target and another, higher bonus, for exceeding a stretch goal. However, the exact criteria for receiving the higher bonus may not be fully transparent, as it may depend on subjective performance evaluations or discretionary factors determined by management.

It is not difficult to see that when ambiguous contracts are restricted to a limited size, the optimal contract is no longer composed solely of SOP classic contracts (see, e.g., Example~\ref{ex:separation}). This raises interesting structural and computational questions: What is the structure of optimal $k$-ambiguous contracts? 
What's the impact of using ambiguous contracts of different sizes? Can we efficiently compute optimal or near-optimal $k$-ambiguous contracts?

\subsection{Our Results}

\paragraph{Characterization and Separability.}

Our first result gives a structural characterization of optimal $k$-ambiguous contracts for any $k \ge 1$.
Its key structural contribution is a surprising \emph{separability} property: up to a balancing operation, designing an optimal $k$-ambiguous contract reduces to solving optimal classical contracts on an appropriate partition of the action set.
We emphasize that a priori there is no reason to expect such a separability property to hold.

Formally, our characterization is in terms of \emph{shifted min-pay contracts}, these are optimal classic contracts obtained by solving the standard min-pay LP (see Figure~\ref{fig:minpaylp}) for a subset of the actions, and then shifting the resulting contracts by an additive amount to ensure consistency.

\vspace{0.1in}
\noindent {\bf Theorem}
(Theorem~\ref{thm:opt-k-ambiguous}).
Fix any $k$. Suppose action $i^\star$ is implementable with a $k$-ambiguous contract. Then there is an optimal IC $k$-ambiguous contract $\langle \tau = \{t^1, \ldots, t^k\}, i^\star\rangle$ for implementing action $i^\star$ 
that takes the following form: 
\begin{itemize}[noitemsep]
\item There is a partition of the actions $[n]\setminus \{i^\star\}$ into $k$ sets $\{S^1, \ldots, S^k\}$. 
\item For each $\ell \in [k]$, contract $t^\ell$ is a shifted min-pay contract for action $i^\star$ 
for the subinstance $\mathcal{I}_{\{i^\star\} \cup S^\ell}$, obtained by restricting the original instance to actions $\{i^\star\} \cup S^\ell$.
\end{itemize}

To prove this characterization, we first define the concept of a set of $k$ payment functions that ``protect" a target action against a partition of the remaining actions into $k$ sets.
We then show how such a protective set of payment functions can be turned into an IC $k$-ambiguous contract. 
Using these observations, we show how to compute an optimal IC $k$-ambiguous contract for a given target action and a fixed partition of the remaining actions into $k$ sets. 
We conclude by showing that, by iterating over all partitions of the actions other than the target action into $k$ sets, we obtain an optimal IC $k$-ambiguous contract for the target action.

We also establish the following characterization of the {\em implementability} of an action $i$ by an IC $k$-ambiguous contract, meaning that there exists a $k$-ambiguous contract under which the agent prefers action $i$ over any other action.

\vspace{0.1in}
\noindent {\bf Proposition}
(Proposition~\ref{prop:implementable-with-k-ambiguous}). 
    An action $i$ in an instance $\mathcal{I}$ is implementable by a $k$-ambiguous contract if and only if there exists a partition $\mathcal{S}$ of $[n]\setminus\{i\}$ to $k$ sets, such that for every set $S\in \mathcal{S}$, 
    action $i$ is implementable by a classic contract in subinstance $\mathcal{I}_{\{i\} \cup S}$. 
\vspace{0.1in}

This characterization shows  
that increasing 
$k$ from $1$ to $n-1$ gradually relaxes the implementability constraints, with classic contracts and unrestricted ambiguous contracts serving as the two endpoints of this spectrum. It also shows that the procedure, which, given a target action, iterates over all partitions of the remaining actions into $k$ sets, allows us to decide whether the target action is implementable with a $k$-ambiguous contract. Namely, it is implementable if (and only if) for at least one partition all min-pay LPs are feasible.

Notably, both our characterization and implementability results generalize the characterizations established in the classic contract setting \citep{HermalinK91}, and in the setting of unrestricted ambiguous contracts \citep{DuettingFPS24} (see Remark~\ref{rem:comparison}).

\paragraph{Ambiguous Contracts of Different Sizes.}
To explore the impact of using ambiguous contracts of different sizes, we introduce the \emph{succinctness gap}.
The succinctness gap of an instance is defined as the ratio between the principal's optimal utility under $k$-ambiguous contracts and their optimal utility under unconstrained ambiguous contracts. 
The succinctness gap of a class of instances is the worst-case  
succinctness gap across all instances in the class.


We establish upper and lower bounds for this gap for all values of $k$ between $k=1$ (corresponding to classic contracts) and $k=n-1$ (corresponding to unrestricted ambiguous contracts). 
Our first result shows a general lower bound (i.e., positive result) on the succinctness gap, for all values of $k$. It shows that the principal's utility from a $k$-ambiguous contract is at least a $\frac{1}{n-k}$ fraction of the principal's utility in the unrestricted case.

\vspace{0.1in}
\noindent {\bf Theorem}
(Theorem~\ref{thm:k-lower}).
For any instance with $n$ actions, for any $k=1,\ldots,n-1$, there exists an IC $k$-ambiguous contract whose principal's utility is at least $\frac{1}{n-k}$ of the principal's utility in the unrestricted case.
\vspace{0.1in}

We also establish the following general upper bound (i.e., negative result), showing that there are instances where the optimal $k$-ambiguous achieves at most a $\frac{1}{\lfloor\frac{n-2}{k}\rfloor + 1}$ fraction of the principal's utility from an optimal unrestricted ambiguous contract.

\vspace{0.1in}
\noindent {\bf Theorem}  
(Theorem~\ref{thm:kovern-new}). 
For any $n,k$ such that $n \geq 3$ and $1 \leq k \leq n-2$ there exists an instance such that any IC $k$-ambiguous contract obtains at most $\frac{1}{\lfloor\frac{n-2}{k}\rfloor + 1}$ of the principal's utility in the unrestricted case. 

\medskip

For the case of classic contracts ($k=1$), our results thus yield a tight bound of $\frac{1}{n-1}$, recovering the ambiguity gap of \cite{DuttingFT24}.
Interestingly, our bounds are also tight for contracts with one fewer payment function than the maximum needed ($k=n-2$), where we obtain a tight bound of $\frac{1}{2}$.
Thus, in addition to recovering the tight gap between classic contracts and optimal unrestricted ambiguous contracts, 
we show 
a striking discontinuity of the principal's utility regarding ambiguous contracts of different sizes: lacking even a single contract option may cause the utility to drop sharply by a multiplicative factor of $2$.

\paragraph{Tractability and Hardness.}

On the computational side, our structural characterization leads to a natural, but potentially na\"{\i}ve, algorithm for computing optimal $k$-ambiguous contracts: this algorithm 
iterates over all actions $i$ and, for each $i$, enumerates all partitions of the remaining actions $[n]\setminus\{i\}$ into $k$ sets, solving an optimal classic contract for each part.
The running time of this algorithm is proportional to the number of partitions of a set of size $n-1$ into $k$ non-empty sets, which is given by the Stirling number of the second kind $\stirling{n-1}{k}$.
While this number is polynomial for the two extremes --- classic contracts ($k=1$) and unrestricted ambiguous contracts, where $k=n-1$ (or more generally any $k$ such that $n-k=O(1)$) ---
it is super-polynomial for any other value of $k$. 
This naturally raises the question of whether a more efficient algorithm can bypass this  combinatorial explosion.

We show that this computational barrier is inherent, implying that the na\"ive algorithm that follows from our characterization 
is essentially best possible.
We do so by establishing a
reduction from the classic $(n,k)$-\textsc{Makespan Minimization} problem \cite[e.g.,][Section~A5.2]{GareyJohnson79}, showing that computing an optimal $k$-ambiguous contract with $n$ actions is computationally equivalent to this problem. As a consequence, the optimal $k$-ambiguous contract problem is $\mathsf{NP}$-hard for any $k$ such that our algorithm's running time is super-polynomial.

\vspace{0.1in}
\noindent {\bf Theorem}
(Theorem~\ref{thm:hardness-k}). 
    The $k(n)$-ambiguous contract problem is \textsf{NP}-hard for all functions $k: [n] \rightarrow [n]$ such that $(n,k(n))$-\textsc{Makespan Minimization} is \textsf{NP}-hard.
\vspace{0.1in}

We further complete the picture of the computational landscape, by showing a hardness of approximation result for intermediate values of $k$. Namely, 
via a reduction from \textsc{3D-Matching} \cite[e.g.,][Section~3.1.2]{GareyJohnson79}, we show that when $k \approx n/3$, even approximating the optimal contract is hard.

\vspace{0.1in}
\noindent {\bf Theorem}
(Theorem~\ref{thm:hardness-approx-util}). 
There exists a $k \in \Theta(n)$ 
such that no poly-time algorithm can approximate the optimal IC $k$-ambiguous contract to within a better factor than $3/4$, 
unless $\textsf{P} = \textsf{NP}$.

\medskip

Notably, our hardness results complement the structural characterization by explaining its inherent limits.
While the characterization identifies that optimal $k$-ambiguous contracts partition the action set into groups, determining these groups explicitly 
would entail 
solving classical combinatorial optimization problems that are known to be intractable.

\paragraph{Extension to Monotone Ambiguous Contracts.}
Finally, in Appendix~\ref{app:monotone}, we show that many of our results extend to settings in which contracts are required to be \emph{monotone} (non-decreasing) in the realized reward, meaning that larger rewards must entail larger payments. We show that our main characterization of optimal $k$-ambiguous contracts extends naturally to monotone $k$-ambiguous contracts by considering monotone min-pay LPs, which augment the standard min-pay LPs with 
$m-1$ additional monotonicity constraints. Moreover, both our positive result on the succinctness gap and our hardness result for computing an optimal $k$-ambiguous contract continue to hold under this monotonicity requirement.

\subsection{Related Work}

\paragraph{Ambiguous Contracts and Ambiguity-Aversion.}
The role of strategic ambiguity as a tool in contracting was first highlighted by \citet{BernheimW98} (albeit in a rather different model and sense than in this work). They consider two-player normal-form games, in which the actions of the two players are partitioned into sets, with the interpretation that a court/some third-party can distinguish between actions from different sets but not between actions that belong to the same set. A contract in their model is a restriction of the allowed actions to a subset of the actions, with a restriction on feasible contracts arising from the partial verifiability of which actions were chosen. A complete contract narrows down the action sets to a single set in each agent's partition. They find that in static games, under pure Nash equilibria, it is without loss to consider complete contracts. However, they also show that in dynamic settings, where agents choose actions sequentially, 
optimal contracts may be incomplete.

Our model and notion of ambiguous contracts build on but differ from those in \cite{DuettingFPS24}. Like \citet{DuettingFPS24}, we augment a standard hidden-action principal-agent model with ambiguity and analyze how a principal can optimally introduce ambiguity when interacting with an ambiguity-averse agent with max-min utility. 
A main departure of our work from \citet{DuettingFPS24} 
is the explicit simplicity constraint: we limit the number of classic contracts that an ambiguous contract may be composed of. 
Our results for the size-constrained setting rely on a substantial array of new techniques and insights. In particular, all the ingredients underlying our main characterization result are novel, and they reveal a previously-hidden separability structure governing optimal ambiguous contracts of varying sizes. The same holds for our hardness results, which are the first hardness results in the context of ambiguous contracts, and establish new connections to classic combinatorial optimization problems.

The notion of ambiguous contracts in \cite{DuettingFPS24} and this work is inspired by the work of \citet{DiTillioEtAl17}, 
who introduce an analogous notion in auction design, showing that a seller can exploit ambiguity to achieve higher surplus when selling an object to an ambiguity-averse buyer. They characterize optimal ambiguous mechanisms, but also note their complexity and suggest that this may limit their applicability in practice. To address this, they exhibit a class of simple ambiguous mechanisms (which they call na\"ive ambiguous mechanisms) and show that these dominate all non-ambiguous mechanisms in terms of the expected revenue they generate. In contrast, we formulate an explicit (parametric) simplicity constraint, and characterize the \emph{optimal} ambiguous design for \emph{all} ambiguity levels.

Several other works have considered settings in which the agent holds ambiguous beliefs that the principal can potentially exploit. \citet{BSRL14} examine mechanism design problems with ambiguous communication devices in which agents have max-min preferences.  \citet{BeaucheneEtAl19} and \citet{cheng2020ambiguous}  consider Bayesian persuasion, in which the sender exploits the ambiguity aversion of the receiver. \citet{BodohCreed12} 
examine screening problems with agents who have max-min preferences.
 \citet{BoseOP06} consider auctions in which the seller and bidders may both be ambiguity-averse. \citet{lopomo2011knightian} explore moral hazard problems with
agents who have Bewley preferences. 
Our work is distinguished from these studies by examining moral hazard problems in which the agent faces
ambiguity concerning the payments attached to outcomes.

\paragraph{Simplicity and Robustness in Contract Design.}
A recurring theme in contract theory (as well as mechanism design) is the trade-off between simplicity and performance. 

\citet{Carroll15}, \citet{CarrollW22}, \citet{Kambhampati23}, and \citet{BoTang24}
examine moral hazard problems in which the principal has ambiguous beliefs about the
set of actions the agent can choose from, and establish max-min optimality of linear (a.k.a.~commission-based) contracts. \citet{DaiT22} examine a principal
who writes contracts to shape the actions of a team of agents, with the principal holding ambiguous beliefs about the actions available to the agents, again showing that linear contracts are robustly optimal.

\citet{DuttingRT19} examine moral hazard problems in which the principal has
ambiguous beliefs about the distribution of outcomes induced by the agent’s actions, once again showing max-min optimality of linear contracts. In addition, \citet{DuttingRT19} show (tight) worst-case approximation guarantees of linear contracts, relative to optimal contracts.

Our work adds to this line of work, by formulating an explicit simplicity constraint, and characterizing optimal ambiguous contracts under this constraint. However, a fundamental difference is that in our work the principal intentionally injects ambiguity into the contract, and it is the agent who is ambiguity averse.

\paragraph{Algorithmic Approaches to Contracts.} 
Another adjacent trajectory in the literature takes an
algorithmic approach to contracts. An important such direction is the literature on combinatorial contracts. This includes contracting problems with multiple agents \citep{BabaioffFN06,BNW12,DuttingEFK23,CastiglioniM023,DuttingEFK25}, many actions 
\citep{DuttingEFK21,DuttingFGT24,EzraFS24}, or with complex outcome spaces \citep{DuttingRT21}.
Another important direction considers typed contract settings, either with multi-dimensional types \citep{GuruganeshSW21,CastiglioniM021,CastiglioniM022} or single-dimensional types \citep{AlonDT21,AlonDLT23}. 
\citet{CastiglioniCLXZ25} have recently established reductions between the two settings. What we share with this line of work is the algorithmic lens on contracts, however the structural reasons and sources of complexity that arise in the design of ambiguous contracts are very different from those in combinatorial or typed contracts.

\section{Model and Preliminaries}
\label{sec:model}
We consider a hidden-action principal-agent contracting problem.
The basic ingredients of the principal-agent model apply to both classic and ambiguous contracts. For ease of exposition, we focus 
on instances, where all outcomes have non-negative rewards.\footnote{Our results in Section~\ref{sec:alg} hold for any real rewards, including negative ones. The same is true for our hardness results in Section~\ref{sec:hardness}.
However, for our results concerning gaps between ambiguous contracts of different sizes (Section~\ref{sec:gap}) non-negative rewards are essential, as was already demonstrated in \cite{DuettingFPS24}.} 

\begin{definition}[Instance] An instance $\mathcal{I} = (c,r,p)$ of the principal-agent problem with $n$ actions and $m$ outcomes is specified by:
\begin{itemize}[noitemsep]
\item A cost vector $c = (c_1, \ldots, c_n)$, where for every action $i\in [n]$, $c_i \in \mathbb{R}_+$ denotes action $i$'s cost. Costs are assumed to be sorted so that $c_1 \leq c_2 \leq \ldots \leq c_n$.
\item A reward vector $r = (r_1, \ldots, r_m)$, where for every outcome $j \in [m]$, $r_j \in \mathbb{R}_+$ denotes outcome $j$'s reward. Rewards are assumed to be sorted so that $r_1 \leq r_2 \leq \ldots \leq r_m$.
\item A probability distribution matrix 
$p=(p_1,\ldots,p_n)$, where for every action $i\in [n]$, $p_i \in \Delta^m$ denotes the probability distribution of action $i$ over outcomes, where $p_{ij}$ denotes the probability of outcome $j$ under action $i$. 
\end{itemize}
\end{definition}

The expected reward of action $i$ is given by 
$\Reward_i=\sum_{j=1}^{m}p_{ij} r_j$, and the expected welfare of action $i$ is defined as $\Welfare_i = \Reward_i - c_i$.
The agent retains the option of not participating. To capture this, we assume throughout that action $1$  
is a zero-cost action, with expected reward $\Reward_1 = 0$.

The instance $\mathcal{I}$ is known to both the principal and the agent.  The agent's action is known only by the agent, while realized outcomes are observed by both the principal and the agent.

A payment function  $t : [m]\rightarrow \mathbb R_+$ identifies a payment made by the principal to the agent upon the realization of an outcome, with the payment $t(j)$ made in response to outcome $j$ typically denoted by $t_j$.  Note that the payment is a function of the outcome rather than the actions, since the principal cannot observe the agent's action, only the outcome. As standard, we impose the \emph{limited liability} assumption, requiring that the payments are non-negative.

\subsection{Classic Contracts}\label{sec:classic-contracts}

We first describe the classic setting, in which a contract,  denoted by $\contract$, is a payment function $t$  and a recommended action $i\in [n]$.  

The interpretation is that the principal posts a contract, the agent observes the contract and chooses an action and bears the attendant cost, an outcome is drawn from the distribution over outcomes induced by that action, and the principal  makes the payment to the agent specified by the contract.   The inclusion of a recommended action in the contract allows us to capture the common presumption that the agent ``breaks ties in favor of the principal.''  

More precisely, an agent who chooses action $i'$ when facing a contract $\contract$ garners expected utility
\[
\UAmid{i'}{t}~~=~~ \sum_{j=1}^mp_{i'j} t_j-c_{i'} ~~=~~ \Payment{i'}{t}-c_{i'},
\]
given by the difference between the expected payment $\Payment{i'}{t}$ and the cost $c_{i'}$.  The resulting principal's utility is $\UPmid{i'}{t} = \Reward_{i'} - T_{i'}(t)$, defined as the difference between expected reward $\Reward_{i'}$ and  payment $\Payment{i'}{t}$.

A central definition is the following definition of an incentive compatible (or IC) contract, and the associated notion of ``implementing an action.''

\begin{definition}[IC contract] 
A contract $\contract$ is incentive compatible (IC) if 
$
i\in \argmax_{i'\in [n]} \UAmid{i'}{t},
$
in which case we say that contract $\contract$ implements action $i$.
\end{definition}

Because payments are non-negative and action $1$ has zero cost, incentive compatibility ensures that the agent secures an expected utility of at least zero, and hence implies individual rationality.

We assume the agent follows the recommendation of an incentive-compatible contract.  If the principal posts the IC contract $\contract$, the payoffs to the principal and agent are then 
\begin{align*}
  \UP{\contract}~=~\UPmid{i}{t} = \Reward_i-\Payment{i}{t} \quad \text{and} \quad 
  \UA{\contract}~=~\UAmid{i}{t} = \Payment{i}{t}-c_i.
\end{align*}

It is without loss of generality to restrict the principal to incentive compatible contracts.  The idea is that an agent facing contract $\contract$  will choose an action that maximizes her expected utility given $t$, and hence the principal might as well name such an action in the contract.

\paragraph{Optimal Classic Contracts.}

An \emph{optimal} classic contract maximizes the principal's expected utility. It's well known that optimal contracts can be computed in polynomial time by solving one linear program (LP) for each action \citep[e.g.,][]{GrossmanHart83,DuttingRT19}. The LP for action $i$ minimizes the expected payment that is required to incentivize the agent to take action $i$, and thus identifies the contract that maximizes the principal's utility from action $i$. 

The LP and its dual are given in Figure~\ref{fig:minpaylp}. We refer to these as \textsf{MIN-PAY-LP}$(i)$ and \textsf{DUAL-MIN-PAY-LP}$(i)$. The variables in the primal LP are the payments $\{t_j\}_{j \in [m]}$. The variables in the dual are $\{\lambda_{i'}\}_{i' \in [n]\setminus\{i\}}.$

\begin{figure}[ht]
\centering
\begin{subfigure}[b]{0.5\textwidth}
        \centering
        \begin{align*}
        \min \quad &\sum_{j} p_{ij} t_j \\
        \text{s.t.} \quad & \sum_j p_{ij} t_j -c_i \geq \sum_j p_{i'j} t_j -c_{i'}  &&\forall i' \neq i\\
        & t_j \geq 0 &&\forall j
    \end{align*}
    \caption{\textsf{MIN-PAY-LP}$(i)$}
    \end{subfigure}%
    ~ 
    \begin{subfigure}[b]{0.5\textwidth}
        \centering
    \begin{align*}
    \max \quad & \sum_{i'\neq i}\lambda_{i'}(c_i-c_{i'})\\  
    \text{s.t.} \quad& 
    \sum_{i'\neq i}\lambda_{i'}(p_{ij}-p_{i'j})\le p_{ij} && \forall j\\
    & \lambda_{i'}\ge 0 &&\forall i'\neq i
    \end{align*}
        \caption{\textsf{DUAL-MIN-PAY-LP}$(i)$}
    \end{subfigure}
\vspace*{-0.5cm}
\caption{The minimum payment LP for action $i$ and its dual.}
\label{fig:minpaylp}
\end{figure}

We note that \textsf{MIN-PAY-LP}$(i)$ may be feasible or infeasible. It is feasible precisely when action $i$ can be implemented by some classic contract. Whenever it is feasible, we refer to any contract $\langle t, i \rangle$ that optimally solves the LP, as a \emph{min-pay contract} for action $i$. The resulting contract is an optimal classic contract for action $i$, and by iterating over all actions we can find a globally optimal classic contract.

\begin{proposition}
[\cite{GrossmanHart83,DuttingRT19}]
    An optimal classic contract can be found by solving $n$ linear programs, namely \textsf{MIN-PAY-LP}$(i)$ for each action $i \in [n]$.
\end{proposition}

\paragraph{Implementability by Classic Contracts.}
The LP formulation can also be leveraged to yield a characterization of  actions that are implementable with classic contracts. This is cast in the following proposition.

\begin{proposition}[Implementability with classic contracts, \citet{HermalinK91}]
\label{prop:implementable}
Action $i \in [n]$ is implementable with a classic contract if and only if there does not exist a convex combination $\lambda_{i'} \in [0,1]$ of the actions $i' \neq i$ that yields the same distribution over rewards $\sum_{i' \neq i}\lambda_{i'} p_{i'j} = p_{ij}$ for all $j$ but at a strictly lower cost $\sum_{i'} \lambda_{i'} c_{i'} < c_i$.
\end{proposition}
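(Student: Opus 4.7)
The plan is to derive the characterization directly from LP duality applied to \textsf{MIN-PAY-LP}$(i)$, since action $i$ is implementable by a classic contract exactly when this LP is feasible. The easy direction (convex combination exists $\Rightarrow$ not implementable) reduces to a direct calculation. Suppose $\{\lambda_{i'}\}_{i'\neq i}$ is a convex combination with $\sum_{i'\neq i}\lambda_{i'}p_{i'j}=p_{ij}$ for every $j$ and $\sum_{i'\neq i}\lambda_{i'}c_{i'}<c_i$. Then for any payment vector $t\ge\mathbf{0}$, the probability identity gives $\sum_{i'\neq i}\lambda_{i'}\Payment{i'}{t}=\Payment{i}{t}$, whence
\begin{align*}
\sum_{i'\neq i}\lambda_{i'}\UAmid{i'}{t}=\Payment{i}{t}-\sum_{i'\neq i}\lambda_{i'}c_{i'}>\Payment{i}{t}-c_i=\UAmid{i}{t}.
\end{align*}
Hence at least one action $i'\neq i$ strictly beats $i$ under every payment vector, ruling out implementability.

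For the converse, I would invoke LP duality on \textsf{MIN-PAY-LP}$(i)$. Its dual \textsf{DUAL-MIN-PAY-LP}$(i)$ is always feasible (take $\lambda=\mathbf{0}$), so primal infeasibility is equivalent to dual unboundedness. The latter is witnessed by a ray $\lambda\ge\mathbf{0}$ satisfying $\sum_{i'\neq i}\lambda_{i'}(p_{ij}-p_{i'j})\le 0$ for every $j$ together with the strict inequality $\sum_{i'\neq i}\lambda_{i'}(c_i-c_{i'})>0$ on the dual objective. The latter forces $\Lambda:=\sum_{i'\neq i}\lambda_{i'}>0$, so I can rescale by $1/\Lambda$ and reduce to the case $\Lambda=1$.

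The only substantive step that remains, and the one where I expect the main subtlety to lie, is to promote the coordinate-wise domination $\sum_{i'\neq i}\lambda_{i'}p_{i'j}\ge p_{ij}$ (what the dual feasibility constraint becomes after rescaling) to an equality. But both sides are probability distributions over $[m]$, so summing over $j$ gives $1\ge 1$, forcing each individual inequality to be tight. The rescaled $\lambda$ is therefore a convex combination of the actions $i'\neq i$ that reproduces the distribution $p_i$ at cost strictly below $c_i$, which is exactly the witness claimed in the statement. Everything else is a routine invocation of LP duality.
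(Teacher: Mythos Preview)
Your argument is correct. The paper does not actually supply a proof of this proposition; it is stated as a known characterization due to \citet{HermalinK91} and used as a black box. Your LP-duality derivation is the natural one given that the paper has already introduced \textsf{MIN-PAY-LP}$(i)$ and its dual in Figure~\ref{fig:minpaylp}, and every step checks out: the dual is feasible at $\lambda=0$ because $p_{ij}\ge 0$; primal infeasibility therefore forces dual unboundedness; the recession direction has $\Lambda>0$ by the strict objective inequality; and the ``promotion to equality'' step is exactly the observation that a coordinate-wise domination between two probability vectors must be an equality.
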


\subsection{Ambiguous Contracts}
\label{sec:model-ambigious}

We next recall the notion of \emph{ambiguous contracts}, introduced by \cite{DuettingFPS24}.
As in the classic model, a principal commits to payments contingent on outcomes; however, she now posts a \emph{collection} of payment functions rather than a single one. 

While \cite{DuettingFPS24} allow an arbitrary number of payment functions, we consider a parameterized variant: a
\emph{$k$-ambiguous contract} contains at most $k$ payment functions.
For technical reasons, it will be convenient to define them as ambiguous contracts that contain \emph{precisely} $k$ payment functions.
This is without loss, because we treat collections of payment functions as multisets (which is, in turn, without loss). So we can always ``fill up'' a collection of payment functions that contains fewer than $k$ payment functions.\footnote{We note that, one may wish to identify the smallest $k' \leq k$ that achieves a given utility. In this case, one can run our algorithms for each $k' \leq k$ and determine the smallest $k'$ satisfying the utility requirement.}

\begin{definition}[$k$-ambiguous contract]
For $k\in \mathbb{N}$, a \emph{$k$-ambiguous} contract $\ambcontract = \ambcontractfull$ is a collection of $k$
payment functions and a recommended action.  
 If $t \in \tau$, we say $t$ is in the support of $\tau$.
\end{definition}

When the number of payment functions is unlimited, as in \cite{DuettingFPS24}, we refer to this as an \emph{unrestricted ambiguous contract}.

The agent is assumed to be ambiguity-averse, employing a max-min utility \citep{Schmeidler89,GilboaS93}. Formally, we define the agent's utility for action $i$ given a collection of payment functions $\tau$ as 
\[
U_A(i \mid \tau) = \min_{t \in \tau} U_A(i \mid t).
\]

The agent's best response to a collection of payment functions is then the action $i$ that maximizes the agent's (minimum) utility.

\begin{definition}[Best response]
Action $i$ is a \emph{best response} to a collection of payment functions $\tau$ if it holds that
\begin{align*}
U_A(i \mid \tau) \geq U_A(i' \mid \tau) \quad \text{for all $i' \neq i$.}
\end{align*}
\end{definition}

Following \cite{DuettingFPS24} we also insistent on ``consistency,'' defined as follows:

\begin{definition}[Consistency]\label{def:consistent}
A collection of payment functions $\tau$ is \emph{consistent} w.r.t.~action $i$ if
\begin{align}
\label{basic-consistency-new}
\UPmid{i}{t^{\ell}} = \UPmid{i}{t^{\ell'}} \quad\text{for all $t^\ell,t^{\ell'} \in \tau$.}
\end{align} 
\end{definition}

In other words, consistency requires that, under the given target action $i$, the principal's utility is the same for \emph{all} payment functions in the collection of payment functions $\tau$. Without this assumption, the principal's ``threat'' to use any contract in the support of the ambiguous contract is implausible.

\begin{definition}[IC $k$-ambiguous contract]\label{def:ic-ambiguous}
A $k$-ambiguous contract $\langle \tau, i \rangle$ is \emph{incentive compatible (IC)} if (i) action $i$ is a best response to $\tau$, and (ii) $\tau$ is consistent with respect to action $i$. 
In this case, we say that $\langle \tau, i \rangle$ {\em implements} action $i$.
\end{definition}

If the principal posts the IC $k$-ambiguous contract $\ambcontract = \langle\{t^1,\ldots,t^k\},i\rangle$, then, by consistency, the induced payment $T_i(\tau)$ can be defined as
\[
\Payment{i}{\tau} =\Payment{i}{t^{\ell}} = \Payment{i}{t^{\ell'}} \quad \mbox{for every} \quad \ell,\ell'
\in [k]
, 
\]
and since the agent's utility is $\UAmid{i}{t} = \Payment{i}{t} - c_i$, it holds that
\[
\UAmid{i}{t^\ell} = \UAmid{i}{t^{\ell'}} \quad \mbox{for every} \quad \ell,\ell' 
\in [k]
.
\]
Consequently, the principal's and agent's expected utilities satisfy the following for any $t \in \tau$: 
\begin{align*}
U_P(\ambcontract) = \UPmid{i}{t} ~=~ \Reward_i-\Payment{i}{\tau} \quad\text{and}\quad 
U_A(\ambcontract) =  \UAmid{i}{t} ~=~ \Payment{i}{\tau}-c_i.
\end{align*}

While insisting on consistency may appear restrictive, it is in fact without loss of generality.
\citet{DuettingFPS24} establish this for unrestricted ambiguous contracts; the following lemma
extends the result to $k$-ambiguous contracts. For completeness, we include a proof of Lemma~\ref{lem:principal-utility} in Appendix~\ref{app:model}.

\begin{lemma}[Cf.~\citet{DuettingFPS24}]
\label{lem:principal-utility}
Consider $\ambcontract$ with $|\tau| = k$ such that action $i$ is a best response to $\tau$. 
Then there exists a (consistent) incentive compatible ambiguous contract $\ambcontractprime$ with $|\tau'| = k$ from which the principal obtains expected payoff at least $\max_{t \in \tau} \UPmid{i}{t}$.
\end{lemma}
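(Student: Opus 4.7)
The plan is to construct $\tau'$ by uniformly scaling each payment function in $\tau$ downward, so that every contract in the resulting collection induces the same expected payment to the agent under action $i$. This achieves consistency without spoiling $i$'s status as a best response.

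Concretely, pick $t^\star \in \arg\min_{t \in \tau} \Payment{i}{t}$ and set $p^\star = \Payment{i}{t^\star}$; note that $t^\star \in \arg\max_{t \in \tau} \UPmid{i}{t}$, since $\UPmid{i}{t} = \Reward_i - \Payment{i}{t}$. For each $t \in \tau$ with $\Payment{i}{t} > 0$, define $\alpha(t) = p^\star / \Payment{i}{t} \in [0,1]$ and let $\tilde t = \alpha(t) \cdot t$; if $\Payment{i}{t} = 0$, limited liability forces $p^\star = 0$ and we simply take $\tilde t = t$. Let $\tau' = \{\tilde t : t \in \tau\}$.

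The verification then has three parts. For consistency, $\Payment{i}{\tilde t} = \alpha(t)\,\Payment{i}{t} = p^\star$ for every $\tilde t \in \tau'$, so the principal's utility $\UPmid{i}{\tilde t} = \Reward_i - p^\star$ is the same for every member of $\tau'$. Limited liability is immediate since $\tilde t_j = \alpha(t)\, t_j \geq 0$. For the best-response condition, we have $\UAmid{i}{\tau'} = p^\star - c_i = \UAmid{i}{\tau}$; for any $i' \neq i$, the inequality $\alpha(t) \leq 1$ combined with $\Payment{i'}{t} \geq 0$ yields $\Payment{i'}{\tilde t} \leq \Payment{i'}{t}$, hence $\UAmid{i'}{\tilde t} \leq \UAmid{i'}{t}$ for each $t \in \tau$. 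Taking the minimum over $t$ on both sides then gives $\UAmid{i'}{\tau'} \leq \UAmid{i'}{\tau} \leq \UAmid{i}{\tau} = \UAmid{i}{\tau'}$, where the middle inequality is exactly the hypothesis that $i$ is a best response to $\tau$.

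Finally, the principal's payoff is $\UP{\langle \tau', i \rangle} = \Reward_i - p^\star = \UPmid{i}{t^\star} = \max_{t \in \tau}\UPmid{i}{t}$, as required. I do not anticipate a substantive obstacle: the single idea driving the argument is that uniform downscaling is monotone in the expected payment to the agent for every deviation, so the max-min utility of every alternative action can only weakly decrease, while the equality $\Payment{i}{\tilde t} = p^\star$ is engineered by construction. The only edge case worth a brief remark is $\Payment{i}{t} = 0$, which by limited liability collapses to the all-zero payment function and is trivially compatible with all conditions above.
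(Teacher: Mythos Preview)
Your proof is correct. The paper does not give its own proof of this lemma (it is cited from \citet{DuettingFPS24}), so there is no in-paper argument to compare against; your multiplicative-scaling construction is a clean and self-contained verification. One minor inaccuracy in your closing remark: $\Payment{i}{t} = 0$ does not force $t$ to be the all-zero payment function---it only forces $t_j = 0$ on outcomes $j$ with $p_{ij} > 0$---but this is harmless, since taking $\tilde t = t$ in that case still yields $\UAmid{i'}{\tilde t} = \UAmid{i'}{t}$ and $\Payment{i}{\tilde t} = 0 = p^\star$, so all the required inequalities hold.

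It may be worth noting, as a point of contrast, that the paper's own ``balancing'' construction (Algorithm~\ref{alg:shifting}) shifts payments \emph{upward} additively to the maximum expected payment; that direction preserves limited liability trivially but lands at the worst principal utility among the input contracts. Your construction instead scales \emph{downward} multiplicatively to the minimum expected payment, which is the natural way to reach the best principal utility while keeping all payments nonnegative---additive downshifting would risk violating limited liability.
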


An \emph{optimal} ambiguous contract maximizes the principal’s expected utility.
\citet{DuettingFPS24}  show that optimal (unrestricted) ambiguous contracts are composed of at most $\min\{n-1,m\}$ \emph{single–outcome payment} (SOP)
functions, defined as follows.

\begin{definition}[SOP payment function]
\label{def:SOP}
A payment function $t=(t_1,\ldots,t_m)$ is a \emph{single–outcome payment function} if there exists
$j\in[m]$ with $t_j>0$ and $t_{j'}=0$ for all $j'\neq j$.
\end{definition}

\begin{theorem}[Optimal ambiguous contracts, \citet{DuettingFPS24}]\label{thm:wlogSOP}
There is an optimal ambiguous contract that is composed of at most $\min\{n-1,m\}$ SOP payment functions. 
\end{theorem}

This characterization in fact applies to each action. That is, provided that action $i$ can be implemented with an ambiguous contract, an optimal ambiguous contract for action $i$ is composed of at most $\min\{n-1,m\}$ SOP payment functions.

\cite{DuettingFPS24} also show that (unrestricted) ambiguous contracts relax the requirements for implementability relative to classic contracts:

\begin{proposition}[Implementability with ambiguous contracts, \citet{DuettingFPS24}]\label{hadyn}
An action $i$ is implementable with an ambiguous contract if and only if there is no other action $i' \neq i$ such that $p_{i'} = p_{i}$ and $c_{i'} < c_i$. 
\end{proposition}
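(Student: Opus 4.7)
The proof naturally splits into two directions.

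\textbf{Necessity.} For the forward direction, the plan is to show the contrapositive directly. Suppose there exists $i' \neq i$ with $p_{i'} = p_i$ and $c_{i'} < c_i$. Because $p_{i'} = p_i$, every payment function $t$ yields $\Payment{i'}{t} = \Payment{i}{t}$, so $\UAmid{i'}{t} = \Payment{i}{t} - c_{i'} > \Payment{i}{t} - c_i = \UAmid{i}{t}$ for every $t$. Taking a minimum over any collection $\tau$ preserves this strict inequality, so $\UAmid{i'}{\tau} > \UAmid{i}{\tau}$, ruling out action $i$ as a best response. This part is routine.

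\textbf{Sufficiency.} For the reverse direction, I would use the explicit SOP construction from Proposition~\ref{prop:opt-ambiguous} and verify it is IC whenever the hypothesis holds. Let $A = \{i' \neq i \mid p_{i'} \neq p_i\}$. First handle the degenerate case $A = \emptyset$: then $p_{i'} = p_i$ for all $i' \neq i$, including action $1$, giving $\Reward_i = \Reward_1 = 0$, and the hypothesis plus $c_1 = 0$ forces $c_i = 0$, so the all-zero payment function implements $i$. When $A \neq \emptyset$, for each $i' \in A$ the distinctness of $p_i$ and $p_{i'}$ guarantees some outcome $j$ with $p_{ij} > p_{i'j}$, so $j(i') := \argmax_j p_{ij}/p_{i'j}$ satisfies $p_{ij(i')}/p_{i'j(i')} > 1$, ensuring the minimum in the definition of $T$ is well-defined (and finite, using $c_i > c_{i'}$ only on the relevant coordinates, and $0$ otherwise). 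Then define $t^{i'}$ as the SOP payment function paying $T/p_{ij(i')}$ on outcome $j(i')$, so that $\Payment{i}{t^{i'}} = T$, giving consistency immediately.

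\textbf{Best response verification.} The main technical step is confirming that action $i$ beats every other action $i''$ under the collection $\tau = \{t^{i'} \mid i' \in A\}$. I would split into two cases. If $i'' \notin A$, then $p_{i''} = p_i$ gives $\Payment{i''}{t^{i'}} = T$ for every $i' \in A$, and the hypothesis forces $c_{i''} \geq c_i$, so $\UAmid{i''}{t^{i'}} \leq \UAmid{i}{t^{i'}}$ uniformly. If $i'' \in A$, the key observation is that in the min defining $\UAmid{i''}{\tau}$ one may take the specific term $t^{i''}$: by the maximality of $j(i'')$, the ratio $p_{i''j(i'')}/p_{i j(i'')}$ is minimized over $j$, and the defining inequality for $T$ at index $i''$ is exactly $T - c_i \geq T \cdot p_{i''j(i'')}/p_{ij(i'')} - c_{i''}$, which gives $\UAmid{i}{\tau} \geq \UAmid{i''}{t^{i''}} \geq \UAmid{i''}{\tau}$.

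The main obstacle, and the only nontrivial step, is this last verification — in particular, the observation that it suffices to compare against the single term $t^{i''}$ in the min, which is exactly why choosing $j(i')$ as the \emph{maximizer} of $p_{ij}/p_{i'j}$ is the right choice. Once this is in place, incentive compatibility and consistency together give an IC ambiguous contract implementing $i$, completing sufficiency.
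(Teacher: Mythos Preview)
Your proof is correct. Necessity is the straightforward contrapositive, and for sufficiency you directly verify that the explicit SOP construction of Proposition~\ref{prop:opt-ambiguous} yields an IC ambiguous contract under the stated hypothesis; the key step---that for $i'' \in A$ it suffices to compare against the single payment function $t^{i''}$ via the defining inequality for $T$---is correctly identified and handled.

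The paper itself cites this proposition from \citet{DuettingFPS24} without proof in the preliminaries, but remarks (after Proposition~\ref{prop:implementable-with-k-ambiguous}) that its partition-based characterization furnishes an alternative argument: specialize to $k = n-1$, so that every set in the partition is a singleton $\{i'\}$, and observe that the ``no dominating convex combination'' condition on a singleton collapses exactly to ``not ($p_{i'} = p_i$ and $c_{i'} < c_i$).'' Your route is the direct constructive one, essentially reconstructing the original \citet{DuettingFPS24} argument via the closed-form optimal ambiguous contract; the paper's route instead goes through its new divide-and-conquer framework of Section~\ref{sec:alg}. Your approach is self-contained and avoids that machinery; the paper's approach situates the result as the $k=n-1$ endpoint of the $k$-ambiguous implementability hierarchy.
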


The structure of optimal ambiguous contracts becomes more complex when  they are constrained to contain fewer than $\min\{n-1,m\}$ payment functions. The next example demonstrates this by showing a strict separation between classic contracts, $k$-ambiguous contracts, and unrestricted ambiguous contracts. The example also shows that optimal $k$-ambiguous contracts are \emph{not} composed of SOP payment functions.

\begin{figure}
    \centering
    \begin{tabular}{|c|c|c|c|c|c|c|c|c|c|}
    \hline
        \rule{0pt}{3ex} \hspace{2.0mm} rewards:  \hspace{2.0mm}&\hspace{2.0mm} $r_1 = 0$ \hspace{2.0mm}&\hspace{2.0mm} $r_2 = 0$ \hspace{2.0mm}&\hspace{2.0mm} $r_3 = 0$ \hspace{2.0mm}&\hspace{2.0mm} $r_4 = 0$ \hspace{2.0mm}&\hspace{2.0mm} $r_5 = 4$ \hspace{2.0mm}&\hspace{2.0mm} costs \hspace{2.0mm} \\[1ex] \hline
        \rule{0pt}{3ex}  action $1$: & $1/4$ & $0$ & $1/12$ & $5/12$ & $1/4$ & $c_1 = 0$  \\[1ex]
        \rule{0pt}{3ex}  action $2$: & $1/4$ & $1/12$ & $0$ & $5/12$ & $1/4$ &  $c_2 = 0$  \\[1ex] 
        \rule{0pt}{3ex}  action $3$: & $1/4$ & $1/4$ & $1/4$ & $0$ & $1/4$ & $c_3 = 0$   \\[1ex] 
        \rule{0pt}{3ex}  action $4$ & $0$ & $1/6$ & $1/6$ & $1/6$ & $1/2$ &  $c_4 = 2/3$   \\[1ex] 
          \hline
    \end{tabular}
    \caption{Separation example.}
    \label{fig:separation}
\end{figure}

\begin{example}[Separation]\label{ex:separation}
Consider the example in Figure~\ref{fig:separation}. An optimal (unrestricted) ambiguous contract is $\langle \tau, 4\rangle$ with $\tau = \{(0,4,0,0,0),(0,0,4,0,0),(0,0,0,4,0)\}$. Note that this contract is composed of three SOP payment functions, each of which is used to ``protect'' action $4$ against one of the other actions. The expected payment for action $4$ is $2/3$, and the resulting principal's utility is $2-2/3 = 4/3$. 

As we will show below, an optimal $2$-ambiguous contract in this example 
is $\langle \tau,4 \rangle$ with $\tau = \{(0,8/3,8/3,0,0),(2/9,2/9,2/9,38/9,2/9)\}$. The rationale behind this contract is as follows. The first payment function $t = (0,8/3,8/3,0,0)$ ensures that the agent prefers action $4$ over actions $1$ and $2$, while the second payment function $t = (2/9,2/9,2/9,38/9,2/9)$ makes the agent prefer action $4$ over action $3$. The expected payment for action $4$ under this $2$-ambiguous contract is $8/9$, resulting in a principal's utility of  $2 - 8/9 = 10/9 < 4/3$.

Finally, the best classic contract for implementing action $4$ is $\langle t,4\rangle$ with $t = (0,0,0,0,8/3)$, which leads to an expected payment of $4/3$ and a principal's utility of $2-4/3 = 2/3$. So with a classic contract it is better to implement any of the other actions, with $t = (0,0,0,0,0)$ for a principal's utility of $1 < 10/9$. Thus in this example, the optimal principal utility from a classic contract, $2$-ambiguous contract, and an unrestricted ambiguous contract strictly differs between the three contract classes.
\end{example}

We provide a more thorough analysis of the principal's worst-case loss in utility from using a $k$-ambiguous contract rather than an unrestricted one in Section~\ref{sec:gap}.

\section{Optimal Succinct Ambiguous Contracts}
\label{sec:alg}
Our main result in this section (Theorem~\ref{thm:opt-k-ambiguous}) gives a structural characterization of optimal $k$-ambiguous contracts for any $k \ge 1$. From this, we also derive a characterization of the actions implementable by $k$-ambiguous contracts (Proposition~\ref{prop:implementable-with-k-ambiguous}). Both characterization results hold for any (possibly negative) real rewards, and extend the corresponding characterizations for the unrestricted setting. 

Our proofs reveal a particularly interesting, and perhaps surprising, \emph{separability} property of optimal $k$-ambiguous contracts: up to a balancing step, computing an optimal $k$-ambiguous contract reduces to solving optimal classical contracts on a suitable partition of the action set.

To formally state our results, it will be useful to define the notion of a subinstance.
Fix an instance $\mathcal{I} = (c,r,p)$ of the principal-agent problem with $n$ actions and $m$ outcomes. 
Then for any subset of actions $A \subseteq [n]$ we define \emph{subinstance} $\mathcal{I}_A$ to be the original instance $\mathcal{I}$ restricted to  the set of actions $A$.
Note that a subinstance may lose the property 
of admitting 
a zero-cost action.

\subsection{Characterization and Separability}

Our characterization shows that for any succinctness level $k$ and any action $i^\star$, if action $i^\star$ can be implemented with a $k$-ambiguous contract, then there is an optimal such contract $\langle \tau, i^\star\rangle$ that takes a particularly simple form. Namely, there exists a partition of the actions other than $i^\star$ into $k$ sets $\{S^1, \ldots, S^k\}$ for $1 \leq \ell \leq k$, such that each payment function $t \in \tau$ is obtained by solving the ``min-pay LP'' in Figure~\ref{fig:minpaylp} to find an optimal classic contract for action $i^\star$ for the subinstance $\mathcal{I}_{\{i^\star\} \cup S^\ell}$, and then possibly adding a fixed amount $c_\ell \geq 0$ to each outcome $j \in [m]$. We refer to such contracts as \emph{shifted min-pay contracts}.

The purpose of the additive shifts is to ensure that all payment functions have the same expected payment for action $i^\star$, and hence lead to the same principal utility (which is required for consistency, see Definition~\ref{def:consistent}).

This show that, up to leveling, the optimal $k$-ambiguous contract problem reduces to the problem of computing classic contracts on a suitable partition of the actions.

\begin{theorem}[Optimal $k$-ambiguous contracts]\label{thm:opt-k-ambiguous}
Fix any $k$. Suppose action $i^\star$ is implementable with a $k$-ambiguous contract. Then there is an optimal IC $k$-ambiguous contract $\langle \tau = \{t^1, \ldots, t^k\}, i^\star\rangle$ for implementing action $i^\star$ 
that takes the following form: 
\begin{itemize}[noitemsep]
\item There is a partition of the actions $[n]\setminus \{i^\star\}$ into $k$ sets $\{S^1, \ldots, S^k\}$. 
\item For each $\ell \in [k]$, contract $t^\ell$ is a shifted min-pay contract for action $i^\star$ 
for the subinstance $\mathcal{I}_{\{i^\star\} \cup S^\ell}$, obtained by restricting the original instance to actions $\{i^\star\} \cup S^\ell$.
\end{itemize}
\end{theorem}

Let's illustrate Theorem~\ref{thm:opt-k-ambiguous} by applying it to Example~\ref{ex:separation}. In that example, the overall optimal $2$-ambiguous contract implements action $4$ and is obtained by considering the partition 
$\{S^1, S^2\}$ with $S^1=\{1,2\}$ and $S^2=\{3\}$. 
The optimal classic contract for implementing action $4$ in the subinstance with actions $\{4\} \cup S^1$ (i.e., actions $\{4,1,2\}$) is $(0,8/3,8/3,0,0)$, while the optimal classic contract in the subinstance with actions $\{4\} \cup S^2$ (i.e., actions $\{4,3\}$) is $(0,0,0,4,0)$. The former leads to an expected payment for action $4$ of $8/9$, while the latter has an expected payment of $6/9$. The optimal $2$-ambiguous contract is given by shifted versions of the two contracts, namely $(0,8/3,8/3,0,0)$ and $ (2/9,2/9,2/9,38/9,2/9)$, which increases the payments of the second contract by $2/9$ for all outcomes. This way the expected payment of both contracts for action $4$ is $8/9$, and the principal achieves the same expected utility of $2-8/9 = 10/9$ from either of the two contracts.

\paragraph{Proof Outline.} 
To prove Theorem \ref{thm:opt-k-ambiguous} we proceed as follows. First, in Section~\ref{sec:protecting-an-action} we define the concept of protecting a target action against a partition of the actions different from the target action into $k$ sets. Then, in Section~\ref{sec:balancing-routine}, we describe a balancing routine that combines a collection of payment functions that protect a target action against a partition of the other actions into $k$ sets into an IC $k$-ambiguous contract. Afterwards, in Section~\ref{sec:optimal-for-partition}, we show how to find an optimal IC $k$-ambiguous contract for the target action and a fixed partition of the other actions into $k$ sets. Finally, in Section~\ref{sec:optimal-overall}, we show that by iterating over all actions and partitions, we obtain an optimal ambiguous contract.

Notably, we formulate the proof of Theorem~\ref{thm:opt-k-ambiguous} in terms of algorithms mostly for clarity. We return to the computational complexity of the optimal (or approximately optimal) $k$-ambiguous contract problem in 
Section~\ref{sec:hardness}.

\begin{remark}[Comparison with Theorem~\ref{thm:wlogSOP}]\label{rem:comparison}
The attentive reader may notice an important difference between Theorem~\ref{thm:opt-k-ambiguous} and Theorem~\ref{thm:wlogSOP} from \cite{DuettingFPS24}. Without additional work, Theorem~\ref{thm:opt-k-ambiguous} does not specialize to Theorem~\ref{thm:wlogSOP} for the unrestricted case with $k \geq \min\{n-1,m\}$. This is because, as stated, our theorem does \emph{not} imply that the optimal ambiguous contract in the unrestricted setting is composed of single-outcome payment (SOP) contracts. The technical reason for this is that our balancing routine adds a fixed amount to the payments for \emph{all} outcomes. However, as we discuss in Remark~\ref{rem:adjustments} below, this difference is easy to reconcile. Through a small adjustment to the balancing routine for the special case when $k$ is not subject to any restrictions, our proof can be turned into an alternative proof of Theorem~\ref{thm:wlogSOP}.
\end{remark}

\subsection{Protecting Actions against Sets of Actions and Partitions of the Actions}\label{sec:protecting-an-action}

Our proof of Theorem~\ref{thm:opt-k-ambiguous} revolves around the idea of protecting a target action $i$ against a subset of the actions other than $i$, and more generally a partition of the actions other than $i$ into $k$ sets.
We first define the notion of protecting an action against a set of actions.

\begin{definition}\label{def:protect-against-set}
    A payment function $t$ is said to protect action $i \in [n]$ against a set of actions $S \subseteq [n] \setminus \{i\}$ if $U_A(i' \mid t) \leq U_A(i \mid t)$ for all $i' \in S$.
\end{definition}

We next define what it means for a set of $k$ payment functions to protect an action against a partition of size $k$.
Let $\mathcal{T}_k$ be the set of 
all collections of $k$ payment functions. Let $\mathcal{S}_{k,i}$ denote all partitions of $[n] \setminus \{i\}$ into $k$ sets $\{S^1, \ldots, S^k\}$. That is, for each $\ell \in [k]$, $S^\ell \subseteq [n] \setminus \{i\}.$ Moreover, $S^\ell \cap S^{\ell'} = \emptyset$ for all $\ell,\ell' \in [k]$ with $\ell \neq \ell'$ and $\bigcup_{\ell=1}^{k} S^\ell = [n] \setminus \{i\}.$

\begin{definition}\label{def:potect-against-partition}
    A collection of payment functions 
    $\{t^1,\ldots,t^k\} \in \mathcal{T}_k$ 
    is said to protect action $i \in [n]$ against a partition 
    $\{S^1,\ldots, S^k\} \in \mathcal{S}_{k,i}$ 
    if, for each $j\in [k]$, $t^j$ protects action $i$ against $S^j$ .
\end{definition}

Note that Definition~\ref{def:potect-against-partition} does not yet imply that the ambiguous contract $\langle \{t^1,\ldots,t^k\} , i \rangle $ is IC. This is because different payment functions in the support of $\tau$ may entail different expected payments (and hence principal and agent utilities) for action $i$.

\subsection{A Balancing Routine that Ensures Consistency}\label{sec:balancing-routine}

As a first step, we show how to turn a collection of payment functions $\{t^1,\ldots,t^k\}$ that protects action $i$ against a partition $\{S^1,\ldots, S^k\} \in \mathcal{S}_{k,i}$ into an IC ambiguous contract $\langle \tau, i \rangle$ of support size $k$. Namely, we show how to turn it into a consistent collection of payment functions for which action $i$ is a best response.

To this end, consider Algorithm~\ref{alg:shifting}. The algorithm is given a target action $i$ and a collection of payment functions $\{t^1, \ldots, t^k\}$. The algorithm first computes the expected payment $T^\ell_i$ for action $i$ of each payment function $t^\ell \in \{t^1,\ldots,t^k\}$, 
it then sets $\theta:=\max\{T_i^1, \ldots, T_i^k\}$, and raises each component of the payment function $t^\ell$ by the same amount $c^\ell = \theta - T_i^\ell$.

\begin{algorithm}[t]
	\SetAlgoNoLine
	\KwIn{An action $i$, and a collection of payment functions $\{ 
t^1, t^2, \ldots t^k\}$}
	\KwOut{An ambiguous contract $\langle \tau, i \rangle$}
	{$\theta$ $\gets$ $\max\{ T_i^1, \ldots, T_i^k\}$ where, for each $j \in [k]$, $T_i^j = \sum_{\ell \in [m]} p_{i\ell}t^j_\ell$}\;
    $\tau = \emptyset$\;
	
		\For{$j \in [k]$
		}{
        $\bar{t}^j$ $\gets$ $t^j + (\theta - T^j_i) \cdot \vec{1}$  (for consistency)\;
        $\tau$ $\gets$ $\tau \cup \{\bar{t}^j\}$\;
				
		}
        
	\Return {$\langle \tau, i \rangle$}
	\caption{A Balancing Routine that Ensures Consistency}
	\label{alg:shifting}
\end{algorithm}

The following lemma 
shows that, when Algorithm~\ref{alg:shifting} is fed with a collection of payment functions $\{t^1, \ldots, t^k\}$ that protect action $i$ against a partition $\mathcal{S} = (S^1, \ldots, S^k)$ of the actions other than $i$, then it returns an (ambiguous) IC contract $\langle \tau, i\rangle.$ 
The proof appears in Appendix~\ref{app:sec_alg}.

\begin{lemma}
\label{lem:consistent}
Fix an action $i \in [n]$ and a collection of payment functions $\{t^1, \ldots, t^k\}$ that protects action $i$ against a partition $\{S^1, \ldots, S^k\} \in \mathcal{S}_{k,i}$.
Then Algorithm~\ref{alg:shifting} returns an IC ambiguous contract $\langle\tau, i\rangle$ with $\tau = \{\bar{t}^1, \ldots, \bar{t}^k\}$ satisfying 
$\UPmid{i} {\bar{t}^1}=\ldots=\UPmid{i}{\bar{t}^k}=\min_{j \in [k]}\{\UPmid{i}{t^j}\}$. 
\end{lemma}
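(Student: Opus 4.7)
The plan is to verify the three requirements in the conclusion: (a) consistency of $\tau$ with respect to action $i$, (b) action $i$ being a best response to $\tau$, and (c) the stated identification of $U_P(i \mid \bar{t}^\ell)$ with $\min_j U_P(i \mid t^j)$. The single algebraic observation that drives everything is that shifting a payment function by a constant $c \cdot \vec{1}$ shifts the expected payment to the agent, and hence the agent's utility, by exactly $c$ for \emph{every} action (because $\sum_j p_{i'j} = 1$ for all $i'$). This uniform shift is what lets the algorithm equalize the principal's utility across contracts without disturbing any of the IC comparisons inherited from the protection property.

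\textbf{Step 1: consistency and the utility identity (items (a) and (c)).} First I would compute, for each $\ell \in [k]$,
\begin{equation*}
T_i(\bar{t}^\ell) \;=\; \sum_{j \in [m]} p_{ij}\bigl(t^\ell_j + (\theta - T^\ell_i)\bigr) \;=\; T^\ell_i + (\theta - T^\ell_i) \;=\; \theta,
\end{equation*}
so that $U_P(i \mid \bar{t}^\ell) = R_i - \theta$ for every $\ell$. This establishes consistency (Definition~\ref{def:consistent}). Since $\theta = \max_j T^j_i$, we also get $R_i - \theta = \min_j (R_i - T^j_i) = \min_j U_P(i \mid t^j)$, which is exactly (c).

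\textbf{Step 2: best response (item (b)).} Fix any $i' \neq i$. Since $\{S^1,\ldots,S^k\}$ is a partition of $[n]\setminus\{i\}$, there is a unique $\ell^\star \in [k]$ with $i' \in S^{\ell^\star}$. By the protection hypothesis applied to $t^{\ell^\star}$,
\begin{equation*}
U_A(i' \mid t^{\ell^\star}) \;\leq\; U_A(i \mid t^{\ell^\star}).
\end{equation*}
Because $\bar{t}^{\ell^\star}$ is obtained from $t^{\ell^\star}$ by adding the constant $c^{\ell^\star} := \theta - T^{\ell^\star}_i \geq 0$ to every outcome, and constant shifts add the same amount to $U_A(\cdot \mid \cdot)$ for every action, the inequality is preserved: $U_A(i' \mid \bar{t}^{\ell^\star}) \leq U_A(i \mid \bar{t}^{\ell^\star})$. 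By Step~1, $U_A(i \mid \bar{t}^\ell) = \theta - c_i$ does not depend on $\ell$, so $U_A(i \mid \tau) = \theta - c_i = U_A(i \mid \bar{t}^{\ell^\star})$. Hence
\begin{equation*}
U_A(i' \mid \tau) \;=\; \min_{\ell \in [k]} U_A(i' \mid \bar{t}^\ell) \;\leq\; U_A(i' \mid \bar{t}^{\ell^\star}) \;\leq\; U_A(i \mid \bar{t}^{\ell^\star}) \;=\; U_A(i \mid \tau),
\end{equation*}
which is the best-response condition. Combined with Step~1 this yields IC in the sense of Definition~\ref{def:ic-ambiguous}.

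\textbf{Where the work actually lies.} There is no real obstacle here: the lemma is essentially the formal justification that the uniform-shift trick is the ``right'' way to reconcile a collection of payment functions with the consistency requirement, and every step reduces to the two-line calculation in Step~1 plus invoking the protection property coordinate-wise in Step~2. The only point one must be careful about is the non-negativity of the shifted payments, which is immediate because $c^\ell = \theta - T^\ell_i \geq 0$ by the choice of $\theta$ and $t^\ell \geq 0$ by assumption, so limited liability is preserved throughout.
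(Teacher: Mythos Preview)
Your proof is correct and follows essentially the same approach as the paper: compute $T_i(\bar t^\ell)=\theta$ to get consistency and the principal-utility identity, then use the protection property together with the constant-shift invariance of agent utilities to establish the best-response condition. Your explicit remark that $c^\ell\ge 0$ preserves limited liability is a small addition not spelled out in the paper's proof.
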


\subsection{Finding IC Ambiguous Contracts that are Optimal for a Given Partition}\label{sec:optimal-for-partition}

As a next step, we show how to find an IC $k$-ambiguous contracts that is optimal for a given target action and fixed partition of the other actions, in the following sense.

\begin{definition}\label{def:optimal-wrt-partition}
    Given an action $i \in [n]$, an IC $k$-ambiguous contract $\langle \tau, i \rangle$ 
    is said to be optimal for action $i$ with respect to a partition $\mathcal{S} \in \mathcal{S}_{k,i}$ 
    if (i) $\tau$ protects action $i$ against $\mathcal{S}$, and (ii) for any other IC $k$-ambiguous contract $\langle \tau, i \rangle$ 
    such that $\hat{\tau}$ protects action $i$ against $\mathcal{S}$, it holds that $\UPmid{i}{\hat{\tau}} \le \UPmid{i}{\tau}$.
\end{definition}

Note that Definition~\ref{def:optimal-wrt-partition} only constitutes a necessary condition for $\langle \tau, i \rangle$ to be an optimal IC $k$-ambiguous contract. This is because, if $\langle \tau, i\rangle$ is an optimal IC $k$-ambiguous contract, then it is optimal for action $i$ with respect to \emph{some} partition $\mathcal{S} \in \mathcal{S}_{k,i}$.

Consider Algorithm~\ref{alg:AC_S}. The algorithm is given a target action $i$ and a partition $\mathcal{S} = \{S^1, \ldots, S^k\} \in \mathcal{S}_{k,i}$ of the actions other than $i$. It iterates over all subinstances $\mathcal{I}_{\{i\} \cup \{S^{\ell}\}}$ obtained by restricting the original instance to actions $\{i\} \cup S^\ell$, and for each it checks if action $i$ can be implemented with a classic contract, and, if it is, it uses  $\textsf{MIN-PAY-LP}(i)$ on subinstance $\mathcal{I}_{\{i\} \cup \{S^{\ell}\}}$ 
to find an optimal contract $\langle t^\ell, i \rangle$ that implements action $i$.  If this process succeeds for all sets in the partition, then the algorithm uses Algorithm~\ref{alg:shifting} to turn the resulting collection of payment functions $\{t^1, \ldots, t^k\}$ into an IC ambiguous contract $\langle \tau, i \rangle$. Otherwise, it returns $\textsf{Null}$.

\begin{algorithm}[t]
	\SetAlgoNoLine
	\KwIn{An action $i \in [n]$, a partition $\mathcal{S} = \{S^1, \ldots, S^k\}$ of $\nminusi$}
	\KwOut{An IC $k$-ambiguous contract $\langle \tau,i \rangle$  
that is optimal for action $i$ with respect to $\mathcal{S}$, or $\textsf{Null}$ if there is no $\tau \in \mathcal{T}_k$ that protects action $i$ against $\mathcal{S}$}
	    \tcp{Check if action $i$ is implementable with a classic contract in all subinstances}
		\eIf{$i$ \emph{is not implementable with respect to} $S^j \cup \{i\}$ \emph{for some $j \in [k]$}
			}{
				\Return \textsf{Null}
			}    
		  {
          For all $j \in [k]$:~~$t^j$ $\gets$ the solution to \textsf{MIN-PAY-LP}$(i)$ on subinstance $\mathcal{I}_{\{i\} \cup S^j}$ \;
          $\langle \tau, i \rangle$ $\gets$ Algorithm~\ref{alg:shifting} ($i$, $\{t^1, \ldots, t^k\}$)\;}
		
    \Return {$\langle \tau, i \rangle$
	}
    \caption{Optimal IC $k$-Ambiguous Contract for a Given Partition}
	\label{alg:AC_S}
\end{algorithm}

The following lemma shows that Algorithm~\ref{alg:AC_S} indeed finds an optimal IC $k$-ambiguous contract for action $i$ and a fixed partition $\mathcal{S} = \{S^1, \ldots, S^k\}$ of the actions other than $i$; and shows that the resulting ambiguous contract consists of shifted min-pay contracts. The proof is in Appendix~\ref{app:sec_alg}.

\begin{lemma}
\label{lem:alg1-opt}
    Given an action $i$ and a partition $\mathcal{S} = \{S^1, \ldots, S^k\} \in \mathcal{S}_{k,i}$, Algorithm~\ref{alg:AC_S} returns an optimal IC $k$-ambiguous contract for action $i$ with respect to $\mathcal{S}$, or $\textsf{Null}$ if there is no 
    $\tau \in \mathcal{T}_k$ that protects action $i$ against $\mathcal{S}$. Moreover, in the case where the algorithm returns an ambiguous contract $\langle \tau, i\rangle$, each payment function in the support of $\tau$ corresponds to a shifted min-pay contract. 
\end{lemma}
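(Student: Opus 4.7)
The plan is to verify three things separately: (i) the \textsf{Null} branch is correct, i.e., when the algorithm returns \textsf{Null} there genuinely is no $\tau \in \mathcal{T}_k$ protecting $i$ against $\mathcal{S}$; (ii) when the algorithm returns $\langle \tau, i \rangle$, the returned object is indeed an IC $k$-ambiguous contract that protects $i$ against $\mathcal{S}$ and is optimal among all such contracts; and (iii) each $\bar{t}^j \in \tau$ is a shifted min-pay contract. The central observation powering all three parts is that Definition~\ref{def:protect-against-set} (``$t$ protects $i$ against $S$'') is nothing other than the set of IC constraints for action $i$ on the subinstance $\mathcal{I}_{\{i\} \cup S}$, i.e., the feasibility region of \textsf{MIN-PAY-LP}$(i)$ for that subinstance.

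For (i), if $i$ is not implementable by a classic contract on $\mathcal{I}_{\{i\} \cup S^j}$ for some $j$, then by Proposition~\ref{prop:implementable} no payment function $t^j$ can satisfy $U_A(i' \mid t^j) \le U_A(i \mid t^j)$ for all $i' \in S^j$. By Definition~\ref{def:potect-against-partition}, no collection in $\mathcal{T}_k$ can protect $i$ against $\mathcal{S}$, so returning \textsf{Null} is correct. For (iii), each $t^j$ is a min-pay contract for action $i$ on $\mathcal{I}_{\{i\} \cup S^j}$, and Algorithm~\ref{alg:shifting} replaces it by $\bar{t}^j = t^j + c^j \vec{1}$ for some $c^j \ge 0$, which matches the definition of a shifted min-pay contract.

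For (ii), the first substep is to observe that shifting a payment function by a constant $c^j \ge 0$ added to every outcome increases $U_A(i' \mid t^j)$ by the same $c^j$ regardless of $i'$, so the protection property is preserved; hence $\{\bar{t}^1, \ldots, \bar{t}^k\}$ still protects $i$ against $\mathcal{S}$. Applying Lemma~\ref{lem:consistent} then gives that $\langle \tau, i \rangle$ is an IC ambiguous contract with $U_P(i \mid \bar{t}^j) = R_i - \max_\ell T_i(t^\ell)$ for every $j$, so $U_P(i \mid \tau) = R_i - \max_\ell T_i(t^\ell)$.

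The main (though not difficult) step is to show optimality. Suppose $\langle \hat{\tau}, i\rangle$ with $\hat{\tau} = \{\hat{t}^1, \ldots, \hat{t}^k\}$ is any IC $k$-ambiguous contract such that $\hat{\tau}$ protects $i$ against $\mathcal{S}$. Then for each $\ell \in [k]$, $\hat{t}^\ell$ is feasible for \textsf{MIN-PAY-LP}$(i)$ on subinstance $\mathcal{I}_{\{i\} \cup S^\ell}$, and since $t^\ell$ is an optimal solution of that LP we get $T_i(\hat{t}^\ell) \ge T_i(t^\ell)$. By consistency of $\hat{\tau}$, all $T_i(\hat{t}^\ell)$ coincide with a common value $T_i(\hat{\tau})$, so $T_i(\hat{\tau}) \ge \max_\ell T_i(t^\ell) = T_i(\tau)$, yielding $U_P(i \mid \hat{\tau}) \le U_P(i \mid \tau)$. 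This establishes optimality with respect to $\mathcal{S}$ in the sense of Definition~\ref{def:optimal-wrt-partition} and completes the proof.
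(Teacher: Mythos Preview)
Your proposal is correct and follows essentially the same approach as the paper's proof: both reduce protection against $S^j$ to feasibility of \textsf{MIN-PAY-LP}$(i)$ on $\mathcal{I}_{\{i\}\cup S^j}$, invoke Lemma~\ref{lem:consistent} for incentive compatibility, and establish optimality by comparing expected payments against the min-pay solutions. The only cosmetic difference is that the paper argues optimality via the single bottleneck index $j^\star \in \argmax_j T_i(t^j)$ (where $\bar{t}^{j^\star}=t^{j^\star}$), whereas you compare all $\ell$ and then take the max; both routes yield $T_i(\hat{\tau}) \ge \max_\ell T_i(t^\ell)$ immediately.
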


Note that the min-pay LP (see Figure~\ref{fig:minpaylp}) has the following (monotonicity) property: For any action $i$, if we add a constraint corresponding to any action $i'$, it either renders a feasible LP infeasible or (weakly) increases the optimal objective value.
Combining this monotonicity property with Lemma~\ref{lem:alg1-opt} directly implies the following observation. 

\begin{observation}\label{obs:monotonicity}
    Consider an action $i$ and a partition $\mathcal{S} = \{S^1, \ldots, S^k\} \in \mathcal{S}_{k,i}$ such that Algorithm~\ref{alg:AC_S} returns a non-null output $\langle \tau, i \rangle$. Consider any $j$ and the partition $\mathcal{S}' \in \mathcal{S}_{k+1,i}$ that results from splitting any $S^j$ into two sets $S^j_1$ and $S^j_2$. 
    Then Algorithm~\ref{alg:AC_S} applied to action $i$ and partition $\mathcal{S'}$ 
    returns a non-null output $\langle \tau', i \rangle$ such that $U_P(\langle \tau',i \rangle) \geq U_P(\langle \tau,i \rangle)$.
\end{observation}

\subsection{From Optimality for a Given Partition to Optimal $k$-Ambiguous Contracts}\label{sec:optimal-overall}

As a final step, we now show that, for any fixed target action $i$, we can find an optimal IC $k$-ambiguous contract for that action, or decide that none exists. This is achieved by Algorithm~\ref{alg:AC_i}, which applies Algorithm~\ref{alg:AC_S} to all possible partitions $\mathcal{S} \in \mathcal{S}_{k,i}$ of the actions other than $i$. For each partition $\mathcal{S}$, the algorithm thus finds an IC $k$-ambiguous contract $\langle \tau, i \rangle$ that is optimal with respect to $\mathcal{S}$ (or that it is impossible to protect action $i$ against this partition).  
It then return the IC $k$-ambiguous contract $\langle \tau^{\mathcal{S}}, i\rangle$ that minimizes the expected payment required to implement action $i$ (if one exists). 

\begin{algorithm}[t]
	\SetAlgoNoLine
	\KwIn{An action $i$}
	\KwOut{An optimal IC ambiguous contract $\tauactioni$ with $\tausizek$ for implementing action $i$, or $\textsf{Null}$ if the action is not implementable}
	$\mathcal{C} \gets$ $\emptyset$\; 
		\For{$\mathcal{S}\in \mathcal{S}_{k,i}$}{output $\gets$ Algorithm $\ref{alg:AC_S}$ ($i$, $\mathcal{S}$)\;
			\If{\emph{output} $\neq$ \emph{\textsf{Null}}
			}{
				$\langle \tau^\mathcal{S},i \rangle$ $\gets$ output\;
                $\mathcal{C} \gets \mathcal{C} \cup \{\langle\tau^{\mathcal{S}}, i\rangle \}$;
			}
		}
            \Return \textsf{Null} if $\mathcal{C} = \emptyset$, or $\langle \tau^\mathcal{S},i \rangle$ where $\mathcal{S} \in\argmin_{\mathcal{S}:\; \langle \tau^{\mathcal{S}},i\rangle \in \mathcal{C}} T_i(\tau^\mathcal{S})$ otherwise
	
	\caption{Optimal $k$-Ambiguous Contract for a Given Action}
	\label{alg:AC_i}
\end{algorithm}

\begin{lemma}\label{lem:opt-k-ambiguous-for-action-i}
Given an action $i$, Algorithm \ref{alg:AC_i} finds an optimal IC $k$-ambiguous contract $\tauactioni$, or decides that no such contract exists. In the case where action $i$ is implementable, the resulting $k$-ambiguous contract consists of shifted min-pay contracts.
\end{lemma}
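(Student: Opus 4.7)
The plan is to reduce the claim to Lemma~\ref{lem:alg1-opt} by showing that every IC $k$-ambiguous contract implementing action $i$ corresponds to some partition $\mathcal{S} \in \mathcal{S}_{k,i}$ that Algorithm~\ref{alg:AC_i} enumerates. Since Algorithm~\ref{alg:AC_i} runs Algorithm~\ref{alg:AC_S} on every $\mathcal{S} \in \mathcal{S}_{k,i}$ and returns the output of smallest expected payment to $i$, it suffices to exhibit, for any IC $k$-ambiguous contract $\langle \tau, i\rangle$, a partition $\mathcal{S}(\tau)$ against which $\tau$ protects $i$. By Lemma~\ref{lem:alg1-opt}, the Algorithm~\ref{alg:AC_S} output on such a partition is then an IC $k$-ambiguous contract at least as good for the principal as $\langle \tau, i\rangle$, and composed of shifted min-pay contracts.

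The core step is the construction of $\mathcal{S}(\tau)$. Fix an IC $k$-ambiguous contract $\langle \tau, i\rangle$ with $\tau = \{t^1, \ldots, t^k\}$. Consistency with respect to $i$ gives $U_A(i \mid t^\ell) = U_A(i \mid \tau)$ for every $\ell \in [k]$, and the best-response condition yields, for every $i' \neq i$,
\[
\min_{\ell \in [k]} U_A(i' \mid t^\ell) \;=\; U_A(i' \mid \tau) \;\le\; U_A(i \mid \tau) \;=\; U_A(i \mid t^\ell) \quad \text{for every } \ell \in [k].
\]
Choose any $\ell(i') \in \arg\min_{\ell \in [k]} U_A(i' \mid t^\ell)$, breaking ties arbitrarily. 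Then $U_A(i' \mid t^{\ell(i')}) \le U_A(i \mid t^{\ell(i')})$, so $t^{\ell(i')}$ protects $i$ against $\{i'\}$ in the sense of Definition~\ref{def:protect-against-set}. Setting $S^\ell := \{\, i' \in [n]\setminus\{i\} : \ell(i') = \ell \,\}$ for each $\ell \in [k]$ yields a valid $\mathcal{S}(\tau) = \{S^1, \ldots, S^k\} \in \mathcal{S}_{k,i}$ (possibly with empty parts, which is permitted by the definition and matches our multiset convention on $\tau$), and by construction $t^\ell$ protects $i$ against $S^\ell$ for every $\ell$, i.e., $\tau$ protects $i$ against $\mathcal{S}(\tau)$ in the sense of Definition~\ref{def:potect-against-partition}.

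With the correspondence in hand, the lemma follows. For any IC $k$-ambiguous contract $\langle \tau, i\rangle$, Lemma~\ref{lem:alg1-opt} applied to $\mathcal{S}(\tau)$ guarantees that Algorithm~\ref{alg:AC_S} returns an IC $k$-ambiguous contract $\langle \tau^{\mathcal{S}(\tau)}, i\rangle$ that is optimal for $i$ with respect to $\mathcal{S}(\tau)$; since $\tau$ itself protects $i$ against $\mathcal{S}(\tau)$, we get $T_i(\tau^{\mathcal{S}(\tau)}) \le T_i(\tau)$, so its principal utility is at least that of $\langle \tau, i\rangle$. Taking $\tau$ to be a putative optimal IC $k$-ambiguous contract for $i$ then shows Algorithm~\ref{alg:AC_i} indeed returns an optimum; the shifted min-pay structure of the returned contract is inherited directly from Lemma~\ref{lem:alg1-opt}. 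Conversely, if no IC $k$-ambiguous contract for $i$ exists, then for every $\mathcal{S} \in \mathcal{S}_{k,i}$ no collection in $\mathcal{T}_k$ protects $i$ against $\mathcal{S}$ (else the construction of Section~\ref{sec:balancing-routine} would yield one via Algorithm~\ref{alg:shifting}), so every call to Algorithm~\ref{alg:AC_S} returns $\textsf{Null}$ and Algorithm~\ref{alg:AC_i} correctly reports non-existence.

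The main obstacle I anticipate is stating the partition construction cleanly, in particular handling the $\arg\min$ tie-breaking, the allowance of empty parts, and the $\textsf{Null}$ case in a unified way; once this is set up, the rest reduces transparently to Lemma~\ref{lem:alg1-opt}.
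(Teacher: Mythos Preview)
Your proposal is correct and follows essentially the same approach as the paper: both arguments exhibit, for an arbitrary IC $k$-ambiguous contract $\langle \tau,i\rangle$, a partition in $\mathcal{S}_{k,i}$ against which $\tau$ protects $i$, and then invoke Lemma~\ref{lem:alg1-opt}. The only cosmetic difference is the assignment rule---you send each $i'$ to a minimizing index $\ell(i')\in\arg\min_\ell U_A(i'\mid t^\ell)$, whereas the paper uses a first-fit rule, placing $i'$ into the earliest $S^\ell$ with $U_A(i'\mid t^\ell)\le U_A(i\mid t^\ell)$; both are valid and lead to the same conclusion.
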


\begin{proof}
    First, consider the case where action $i$ is implementable by a $k$-ambiguous contract. That is, there exists 
    an IC ambiguous contract of size $k$ that implements action $i$. Let $\langle \tauhatsizek,i\rangle$ be an arbitrary such contract. 
    It suffices to show that Algorithm~\ref{alg:AC_i} returns an IC $k$-ambiguous contract $\langle \tau^\mathcal{S},i \rangle$ such that $U_P(\langle \tau^\mathcal{S},i \rangle) \geq U_P(\langle \hat{\tau}, i \rangle)$. 
    Let $\hat{\mathcal{S}} = \{\hat{S}^1, \ldots, \hat{S}^k\}$ be a partition of $\nminusi$ defined as follows:
    \[
    \hat{S}^1 = \{i'\in \nminusi \mid \UAmid{i'}{\hat{t^1}} \le \UAmid{i}{\hat{t}^1}\}
    \]
    \[
    \hat{S}^2 = \{i'\in \left(\nminusi \right) \setminus \hat{S}^1 \mid \UAmid{i'}{\hat{t}^2} \le \UAmid{i}{\hat{t}^2}\}
    \]
    \vspace*{-0.5cm}
    \[
     \vdots
    \]
    \vspace*{-0.5cm}
    \[
    \hat{S}^k = \left\{ i'\in \left(\nminusi \right) \setminus \bigcup_{\ell \in [k-1]} \hat{S}^\ell \; \bigg| \; \UAmid{i'}{\hat{t}^k} \le \UAmid{i}{\hat{t}^k} \right\}
    \]

    By design, the collection of payment functions $\hat\tau$ protects action $i$ against partition $\hat{\mathcal{S}}$.
    Since $\hat{\mathcal{S}}$ is in $\mathcal{S}_{k,i}$, Algorithm~\ref{alg:AC_i} iterates over it. It follows that $\mathcal{C} \neq \emptyset$ (since it contains at least $\langle \tau^{\hat{\mathcal{S}}}, i \rangle$). 
    Moreover, the contract $\langle \tau^{\mathcal{S}}, i \rangle$ returned by Algorithm~\ref{alg:AC_i} has principal utility at least as high as $\langle \tau^{\hat{\mathcal{S}}},i \rangle$, which, in turn, has principal utility at least as high as $\langle \hat\tau , i \rangle$ by Lemma~\ref{lem:alg1-opt}.

    Next, consider the case where action $i$ is not implementable by an ambiguous contract of size $k$. In this case, no partition of $\nminusi$ to $k$ sets would result in an output that is different from $\textsf{Null}$ (when applying Algorithm \ref{alg:AC_S} to action $i$ and the given partition). Thus $\mathcal{C} = \emptyset$ and Algorithm \ref{alg:AC_i} returns $\textsf{Null}$.
\end{proof}

We are now ready to prove Theorem~\ref{thm:opt-k-ambiguous}.

\begin{proof}[Proof of Theorem \ref{thm:opt-k-ambiguous}]
Fix any $k$ and consider an arbitrary action $i^\star$ that can be implemented by a $k$-ambiguous contract. Apply Algorithm~\ref{alg:AC_i} and Lemma~\ref{lem:opt-k-ambiguous-for-action-i} to obtain an optimal IC $k$-ambiguous contract $\langle \tau, i^\star\rangle$ of the claimed form.
\end{proof}

\begin{remark}[Adjustments to obtain Theorem~\ref{thm:wlogSOP}] \label{rem:adjustments} 
Our proof can be adjusted to yield an alternative proof of Theorem~\ref{thm:wlogSOP} (from~\citet{DuettingFPS24}), specifically the claim that the optimal ambiguous contract in the unrestricted case is composed of at most $\min\{n-1,m\}$ single-outcome payment (SOP) contracts. This only requires employing a slightly different balancing routine, which only works in the unrestricted case (see Appendix~\ref{app:additive-vs-multiplicative}).

First suppose, that we are allowed to use $k = n-1$ payment functions. Then the optimal $k$-ambiguous contract for action $i^\star$ will be obtained by considering the partition of actions $[n] \setminus \{i^\star\}$ into $n-1$ singletons, each containing a single action $i' \in [n] \setminus \{i\}$ (by Observation~\ref{obs:monotonicity})
We can then use that, without loss of generality, the optimal classic contract for protecting action $i^\star$ against a single action $i'$ is an SOP contract that concentrates all payments on an outcome $j$ that maximizes the likelihood ratio $p_{i^\star j} / p_{i'j}$; see, e.g., \citet[Chapter 4.5.1]{LaffontM09} and \citet[][Proposition 5]{DuttingRT19}. 
Since for any such outcome $j$ it holds that $p_{i^\star j} \geq p_{i'j}$, rather than balancing payments by adding a fixed amount $c_{i'}$ to all outcomes, we can also adjust payments by multiplying all payments by some $\rho_{i'} \geq 1$. Note that this multiplicative adjustment 
preserves the SOP property.

We thus obtain that the optimal ambiguous contract consists of at most $n-1$ SOP contracts. This bound can be further strengthened to $\max\{n-1,m\}$ because if two SOP contracts pay for the same outcome, one of them is redundant.
\end{remark}

\subsection{Implementability under $k$-Ambiguous Contracts}

The structural properties revealed above lend themselves to a characterization of implementability by a $k$-ambiguous contract. This is cast in the following proposition.
The proof of this proposition resembles the proof of Lemma~\ref{lem:opt-k-ambiguous-for-action-i}, and is given in Appendix~\ref{app:sec_alg} for completeness.

\begin{proposition}\label{prop:implementable-with-k-ambiguous}
    An action $i$ in an instance $\mathcal{I}$ is implementable by a $k$-ambiguous contract if and only if there exists a partition $\mathcal{S}$ of $[n]\setminus\{i\}$ to $k$ sets, such that for every set $S\in \mathcal{S}$, 
    action $i$ is implementable by a classic contract in subinstance $\mathcal{I}_{\{i\} \cup S}$. That is, for each $S\in \mathcal{S}$ there does not exist a convex combination $\lambda_{i'} \in [0,1]$ of the actions $i' \in S$ that yields the same distribution over rewards $\sum_{i' \in S}\lambda_{i'} p_{i'j} = p_{ij}$ for all $j$ but at a strictly lower cost $\sum_{i'\in S} \lambda_{i'} c_{i'} < c_i$.
\end{proposition}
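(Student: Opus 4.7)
The plan is to prove the biconditional directly, in two directions, and to notice at the outset that the second, algebraic formulation (in terms of convex combinations) is exactly what Proposition~\ref{prop:implementable} gives when applied to each subinstance $\mathcal{I}_{\{i\} \cup S}$. Hence it suffices to prove: action $i$ is implementable by a $k$-ambiguous contract if and only if there exists a partition $\mathcal{S} = \{S^1, \ldots, S^k\}$ of $[n] \setminus \{i\}$ such that, for every $\ell \in [k]$, action $i$ is classically implementable in the subinstance $\mathcal{I}_{\{i\} \cup S^\ell}$.

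For the backward direction, suppose such a partition exists. Classical implementability of $i$ in $\mathcal{I}_{\{i\} \cup S^\ell}$ is, by definition, equivalent to feasibility of \textsf{MIN-PAY-LP}$(i)$ on that subinstance. Therefore, Algorithm~\ref{alg:AC_S} applied to $(i,\mathcal{S})$ does not return \textsf{Null}, and instead produces a collection of payment functions $\{t^1,\ldots,t^k\}$ protecting $i$ against $\mathcal{S}$; Lemma~\ref{lem:alg1-opt} then guarantees that its output is an IC $k$-ambiguous contract implementing $i$. Empty parts in the partition (possible when $k > n-1$) cause no difficulty: the zero payment function trivially protects $i$ against the empty set.

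For the forward direction, I would take an IC $k$-ambiguous contract $\langle \tau, i \rangle$ with $\tau = \{t^1,\ldots,t^k\}$ and build the partition by assigning each action $i' \neq i$ to an index $\ell$ whose payment function $t^\ell$ already dominates $i'$ in the classic sense for $i$. The key observation is that such an index always exists: consistency gives $\UAmid{i}{t^\ell} = \UAmid{i}{t^{\ell'}}$ for all $\ell,\ell'$, and the best-response condition gives $\UAmid{i}{t^\ell} = \UAmid{i}{\tau} \ge \UAmid{i'}{\tau} = \min_{\ell'} \UAmid{i'}{t^{\ell'}}$, so any index achieving this minimum satisfies $\UAmid{i'}{t^\ell} \le \UAmid{i}{t^\ell}$. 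Following the construction of $\hat{\mathcal{S}}$ in the proof of Lemma~\ref{lem:opt-k-ambiguous-for-action-i}, I would define $S^\ell$ to be the set of actions $i'$ for which $\ell$ is the smallest such index, producing a partition of $[n]\setminus\{i\}$. By construction, $t^\ell$ is a non-negative payment function with $\UAmid{i}{t^\ell} \ge \UAmid{i'}{t^\ell}$ for every $i' \in S^\ell$, so $\langle t^\ell, i\rangle$ is a classic IC contract in $\mathcal{I}_{\{i\} \cup S^\ell}$, establishing classical implementability of $i$ there.

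The main obstacle is stating the existential step in the forward direction with the right inequalities: the best-response condition is naturally phrased via a $\min$ over $\tau$, so one needs to use consistency to convert that $\min$-inequality into a pointwise inequality at a specific index, pinning down, for each competitor $i'$, a concrete contract in $\tau$ that dominates $i'$ in the classic (per-contract) sense rather than only in the max-min sense. Once that step is carried out cleanly, the rest is bookkeeping: assembling the partition from these indices and invoking the structural machinery developed in Section~\ref{sec:alg}.
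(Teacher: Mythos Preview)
Your proposal is correct and follows essentially the same approach as the paper: the forward direction builds the partition via the cascading sets $S^\ell$ exactly as in the paper's proof (and as in the construction of $\hat{\mathcal{S}}$ in Lemma~\ref{lem:opt-k-ambiguous-for-action-i}), and the backward direction solves \textsf{MIN-PAY-LP}$(i)$ on each subinstance and applies the balancing routine to obtain an IC $k$-ambiguous contract. The only cosmetic difference is that you route the backward direction through Algorithm~\ref{alg:AC_S} and Lemma~\ref{lem:alg1-opt}, whereas the paper invokes Algorithm~\ref{alg:shifting} directly; your explicit use of consistency to justify that every $i'$ is covered by some $t^\ell$ is a welcome clarification.
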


Proposition~\ref{prop:implementable-with-k-ambiguous} is trivially a generalization of Proposition~\ref{prop:implementable}, simply because Proposition~\ref{prop:implementable-with-k-ambiguous} instantiated with $k = 1$ coincides with Proposition~\ref{prop:implementable}. It is also a generalization of Proposition \ref{hadyn} since if the ambiguous contract has no size limit, the best-case for implementability is when each of the $n-1$ actions in $[n] \setminus \{i\}$ is in its own set (see Observation~\ref{obs:monotonicity}).

As a corollary of Proposition~\ref{prop:implementable-with-k-ambiguous} we obtain that, if an action is implementable by a $k$-ambiguous contract, then it is also implementable by any $k'$-ambiguous contract, for all $k' \geq k$. Hence, growing $k$ from $1$ to $n-1$ amounts to a gradual relaxation of the implementability constraints, with classic contracts and unrestricted ambiguous contracts representing the two extremes of this hierarchy.

\section{Ambiguous Contracts of Different Size: Succinctness Gap}
\label{sec:gap}
In this section, we quantify the loss in the principal's utility due to being restricted to ambiguous contracts with a small number of payment functions.
We define the $k$-succinctness gap of an instance as the ratio between the principal's utility in an optimal $k$-ambiguous contract and an optimal unrestricted ambiguous contract.
The $k$-succinctness gap of a family of instances is defined as the worst-case such ratio, over all instances in the family.
Formally, let $\mathcal A(\mathcal{I}_n)$ be the set of IC (unrestricted) ambiguous contracts, and $\mathcal A_k(\mathcal{I}_n)$ be the set of IC $k$-ambiguous contracts, for an instance $\mathcal{I}_n$ with $n$ actions.
Then, the $k$-succinctness gap is defined as follows.

\begin{definition}[$k$-succinctness gap]
Let $\Gamma_n$ denote the class of all instances with $n$ actions.
The $k$-succinctness gap $\rho_k(\mathcal{I}_n)$ of a given instance $\mathcal{I}_n \in \Gamma_n$ and the $k$-succinctness gap $\rho_k(\Gamma_n)$ of the  class of instances $\Gamma_n$, are 
\begin{align*}
      \rho_k(\mathcal{I}_n) = \frac{
\max_{\langle \tau,i\rangle \in\mathcal A_k(\mathcal{I}_n)}U_P(\langle \tau, i\rangle)
}{
\max_{\langle \tau,i\rangle\in\mathcal A(\mathcal{I}_n)}U_P(\langle \tau, i\rangle)
}
      \quad\quad \text{and} \quad\quad \rho_k(\Gamma_n) = \inf_{\mathcal{I} \in \Gamma_n} \rho_k(\mathcal{I}).
\end{align*}  
\end{definition}

Note that the succinctness gap $\rho_k$ of an instance/a class of instances 
is less than or equal to one. That is, the closer $\rho_k$ is to $1$ the smaller the gap. Moreover, lower bounds on $\rho_k$ constitute positive results, while upper bounds on $\rho_k$ constitute negative results. 

We know that $n-1$ contracts suffice to get the optimal ambiguous contract (see Theorem~\ref{thm:wlogSOP}). This means that, the $k$-succinctness gap $\rho_k(\Gamma_n)$ for every $k\geq n-1$ is $1$ (i.e., there is no gap). 
On the other extreme, where the ambiguous contract is composed of a single contract, the succinctness gap coincides with the notion of ambiguity gap studied in \cite{DuettingFPS24}, implying that the succinctness gap $\rho_1(\Gamma_n)$ is exactly $\frac{1}{n-1}$ (see Proposition 4 of their paper).

Below we establish both a lower bound on the succinctness gap (i.e., a positive result), as well as an upper bound (i.e., a negative result). As a key take-away we show that, even for the case of $k=n-2$, namely being just one contract shy of the sufficient number, the principal's utility might drop by a factor $2$; and this is tight. 

\paragraph{Positive Result: A Lower Bound on the Succinctness Gap.}

We first show that, for any $k\in[n-1]$, an ambiguous contract with support size $k$ can obtain at least a fraction $1/(n-k)$ of the optimal ambiguous contract.

\begin{theorem}[Lower bound]
\label{thm:k-lower}
    For any $k=1,\ldots,n-1$, the $k$-succinctness gap $\rho_k(\Gamma_n)$ satisfies
    $\rho_k(\Gamma_n) \geq \frac{1}{n-k}.$
\end{theorem}

Before presenting the proof of Theorem~\ref{thm:k-lower}, we give a sketch. For simplicity, we focus on the case where $k \geq 2$.
The proof builds on two ingredients: the optimal unrestricted ambiguous contract $\tau^\star$, which implements some action $a^\star$; and the optimal classic contract $t$ for the subinstance that consists of action $a^\star$ together with the $n-k$ lowest cost actions $a_{1}, \ldots, a_{n-k}$ different from $a^\star$. 
The proof then distinguishes two cases, depending on whether the agent's best response to $t$ is $a^\star$ or one of the actions in $\{a_1,\ldots,a_{n-k}\}$. 
If the best response is action $a^\star$, then we use the $k$-ambiguous contract consisting of payment function $t$ and the $k-1$ payment functions $t^{n-k+1}, \ldots, t^{n-1} \in \tau^\star$ that protect action $a^\star$ against the $k$ highest-cost actions $a_{n-k+1}, \ldots, a_{n-1}$ different from $a^\star$ (up to balancing). Otherwise, if the best response is some action $a_i$ for $i \in [n-k]$, then we use a $2$-ambiguous contract, composed of $t$ and a single contract $t'$ that pays a fixed amount, equal to the cost $c_i$ of action $a_i$, for all outcomes (again, up to balancing). 

Our proof that the restricted ambiguous contract constructed in either of the two cases achieves the desired principal's utility relies on Lemma~\ref{lem:consistent} and the following lemma, applied to the optimal classic contract $t$ for the subinstance used in the construction.

\begin{lemma}[\citet{DuettingFPS24}, Lemma 2] \label{lem:linear} 
Recall that, for a given action $i$, we use $\Welfare_i = \Reward_i - c_i$ to denote the action's welfare. Consider an instance $\mathcal{I}_n \in \Gamma_n$ which has a zero-cost action $i$ that yields an expected reward of $\Reward_i = 0$, then there exists a classic IC contract $\langle t, i' \rangle$ such that $U_P(\langle t, i' \rangle) \geq \frac{1}{n-1} \max_{i \in [n]} \Welfare_i$. 
\end{lemma}

We are now ready to present the proof of Theorem~\ref{thm:k-lower}.

\begin{proof}[Proof of Theorem~\ref{thm:k-lower}]
Consider an instance $\mathcal{I}_n$. 
    Let $\langle\tau^\star,a^\star\rangle$ be an optimal ambiguous contract for this instance, with principal's utility $U_P^{opt}$. 
    By Theorem~\ref{thm:wlogSOP},
    $\tau^\star$ is of size at most $n-1$. 
    Sort all other actions by cost, i.e., $c_1 \le c_2 \le \ldots \le c_{n-1}$.
    Then, by Lemma~\ref{lem:linear},  
    there exists a classic contract $\langle t, a'\rangle$ for the instance that includes only actions $a^\star$ and $a_1, \ldots ,a_{n-k}$ that achieves principal's utility at least $\frac{1}{n-k}\Welfare_{a^\star}$.
    For the case of $k=1$, the proof follows by taking the $1$-ambiguous contract $\tau = \{t\}$ and noting that the principal's utility achieved by this contract is $U_P(\langle t, a' \rangle) \geq \frac{1}{n-1}W_{a^\star} \geq \frac{1}{n-1}U_P^{opt}$.
    Thus, in the remainder of the proof we restrict attention to the case where $k \geq 2$. We distinguish between two cases, depending on whether the best response $a'$ to payment function $t$ is action $a^\star$ or an action from the set $\{a_1, \ldots ,a_{n-k}\}$:

    \medskip\noindent 
    {\bf Case 1:} The agent's best response to $t$ is action $a^\star$. 
    This means that $t$ protects action $a^\star$ against actions $a_1,\ldots,a_{n-k}$.
    Moreover, since $\langle \tau^\star, a^\star \rangle$ implements action $a^\star$ in the original instance, for all $i \in \{n-k+1,\ldots,n-1$\}, there is a payment function $t^i \in \tau^\star$ that protects action $a^\star$ against action $a_i$.
    
    The collection of payment functions $\{t, t^{n-k+1},\ldots, t^{n-1}\}$ is a set of at most $1+((n-1)-(n-k+1)+1)=k$ payment functions that contains a corresponding payment function that protects action $a^\star$ against each one of actions $a_1,\ldots,a_{n-1}$. The expected principal's utility for action $a^\star$ under the collection of payment functions $\{ t^{n-k+1},\ldots, t^{n-1} \}$ is $U_P^{opt}$, and the expected principal's utility for action $a^\star$ under payment function $t$ is at least $\frac{1}{n-k}\Welfare_{a^\star} \ge \frac{1}{n-k}U_P^{opt}$.

    Let $\langle \tau, a^\star\rangle$ be the 
    ambiguous contract returned by Algorithm \ref{alg:shifting} for action $a^\star$ and the collection of payment functions $\{t, t^{n-k+1},\ldots, t^{n-1}\}$. By Lemma~\ref{lem:consistent}, this is a IC ambiguous contract 
    with expected principal's utility of $\min\{U_P^{opt},\frac{1}{n-k} \Welfare_{a^\star}\} \geq \frac{1}{n-k}U_P^{opt}$, as required.
        
    \medskip\noindent
    {\bf Case 2:} The agent's best response to $t$ is action $a_i$ for some $i \in [n-k]$. 
    In this case, we know that $t$ protects action $a_i$ against $a^\star$ and $a_1,\ldots,a_{n-k}$. Moreover, since $t$ guarantees a principal utility of at least $\frac{1}{n-k}$ of the maximal welfare, and the utility in any contract cannot be higher than the  welfare, it holds that $\Welfare_{i} \geq \UPmid{i}{t}\ge \frac{1}{n-k} \Welfare_{a^\star}$. 
     
    Consider the payment function $t'$ which pays $c_i$ for every outcome.
    This payment function protects action $a_i$ against actions $a_{n-k+1},\ldots,a_{n-1}$ because the expected payment under $t'$ is $c_i$  for all actions, and the costs are sorted so that $c_i \le \min\{c_{n-k+1},\ldots ,c_{n-1}\}$. 
    The principal's utility for action $a_i$ under payment function $t'$ is $\UPmid{i}{t'}=R_i-c_i=\Welfare_i$. 
    
    Let $\langle \tau, a_i\rangle$ be the 
    ambiguous contract returned by Algorithm \ref{alg:shifting} for action $a_i$ and the collection of payment functions $\{t,t'\}$. This is an IC ambiguous contract 
    with expected principal's utility of $\min\{\UPmid{i}{t},\Welfare_i\} \geq \frac{1}{n-k} \Welfare_{a^\star} \geq \frac{1}{n-k} U_P^{opt}$, as needed. 
\end{proof}

\begin{remark}[Approximation algorithm]
    We note that the proof of Theorem~\ref{thm:k-lower} essentially provides a polynomial-time algorithm that finds an IC $k$-ambiguous contract that gives at least a $\frac{1}{n-k}$ fraction of the optimal principal utility in an unrestricted ambiguous contract. Clearly, this lower bound holds also with respect to the optimal IC $k$-ambiguous contract. Therefore, it also provides a   $\frac{1}{n-k}$-approximation for the optimal $k$-ambiguous contract problem.
\end{remark}

\paragraph{Negative Result: An Upper Bound on the Succinctness Gap.}
An especially interesting special case is an ambiguous contract of size  $n-2$ --- just one fewer than the number needed to achieve the optimal utility of an unrestricted ambiguous contract. 
A direct corollary of Theorem~\ref{thm:k-lower} is that such a contract guarantees at least half of the optimal principal's utility in the unrestricted case.
We show that this is tight. Namely, there exists an instance where no ambiguous contract with strictly less than $n-1$ payment functions yields more than half of the optimal principal's utility in the unrestricted case. 
This result is obtained as a special case of the following theorem, which shows that for any $n,k$ such that $n \geq 3$ and $1 \leq k \leq n-2$ there exists an instance where no ambiguous contract of size at most $k$ yields more than $\frac{1}{\lfloor\frac{n-2}{k}\rfloor + 1}$ of the optimal principal's utility in the unrestricted case.

\begin{theorem}[Upper bound]
\label{thm:kovern-new}
    For any $n,k$ such that $n \geq 3$ and $1 \leq k \leq n-2$ there exists an instance $\mathcal{I}_n$ such that the $k$-succinctness gap satisfies $\rho_{k}(\mathcal{I}_n) \leq \frac{1}{\lfloor\frac{n-2}{k}\rfloor + 1}$.
\end{theorem}

Notably, the tight ambiguity gap (namely, the ratio between classic contracts and unrestricted ambiguous contracts) established in \cite{DuttingFT24} is now obtained as a special case of our results. Indeed, for $k=1$, both the lower bound in Theorem~\ref{thm:k-lower} and the upper bound in Theorem~\ref{thm:kovern-new} give $\frac{1}{n-1}$.

\begin{proof}[Proof of Theorem~\ref{thm:kovern-new} (sketch)]
The proof of Theorem~\ref{thm:kovern-new}, deferred to Appendix~\ref{app:sec_gap}, considers a carefully constructed hard instance with a zero-cost action, $k$ groups of actions, and a target action. Within each group we sort actions from low to high cost. The construction ensures that, with a $k'$-ambiguous contract for any $k' \geq k+1$, it is possible to implement the target action ``at cost'' and thus extract the full welfare of that action as the principal's utility (which equals $\lfloor \frac{n-2}{k} \rfloor +1$). Thus, in order to establish the theorem, it suffices to show that, in this instance, any $k$-ambiguous contract can achieve a principal utility of at most $1$. 
Showing that any such contract that aims to implement any action other than the target action achieves a principal utility of at most $1$ is not difficult. Indeed, for the zero-cost action and the lowest-cost action from each group, we show that the welfare is bounded by $1$. For every other group action, we show that protecting it against the next lower action from the same group requires high payment.
It remains to bound the principal's utility from a contract that implements the target action.
The key insight here is that the highest-cost action from each group together with the zero-cost action give $k+1$ actions in total. Hence, from the pigeon's hole principle,  
any $k$-ambiguous contract must contain a payment function that protects against (at least) two of these actions. We use this to derive a lower bound on the payment, and consequentially an upper bound on the principal's utility.
\end{proof}

\section{Computational Complexity: Tractability and Hardness}
\label{sec:hardness}
Our characterization of optimal $k$-ambiguous contracts in Section~\ref{sec:alg} 
establishes a \emph{structural simplicity} of such contracts.
Our analysis 
also suggests a natural (but possibly na\"ive) algorithm (Algorithm~\ref{alg:AC_i}), which finds the best $k$-ambiguous contract for action $i$ by iterating over all possible partitions of the actions $[n] \setminus \{i\}$ into $k$ sets.
By applying this algorithm to all actions, we can find the optimal $k$-ambiguous contract.
The running time of this algorithm is proportional to the number of partitions of a set of size $n-1$ into $k$ non-empty sets, which 
is given by the Stirling number of the second kind $\stirling{n-1}{k}$.\footnote{For simplicity, in our analysis of Algorithm~\ref{alg:AC_i}, we allowed for partitions that contain empty sets, but it is without loss of generality to only consider partitions without empty sets (by Observation~\ref{obs:monotonicity}).} This number is polynomial if and only if $k=1$ or $k=n-O(1)$.

Thus, our approach for finding an optimal $k$-ambiguous contract, namely applying Algorithm~\ref{alg:AC_i} to each action $i$, recovers the results that optimal classic contracts (i.e., $k=1$) and optimal unrestricted ambiguous contracts (i.e., $k=n-1$) can be computed in polynomial time  \citep{GrossmanHart83,DuettingFPS24}.
Extending these results, it also yields an efficient algorithm for any $k=n-c$ for some constant $c$.
However, for all other regimes, the running time of our algorithm is 
super-polynomial.
This raises the natural question of whether a more efficient algorithm could bypass this complexity.

In this section, we present results that establish the intrinsic hardness of the problem.
In Section~\ref{sec:hardness-opt}, we show that computing an optimal $k$-ambiguous contract for instances with $n$ actions is as hard as the classic $(n,k)$-\textsc{Makespan Minimization} problem \cite[e.g.,][Section A5.2]{GareyJohnson79}.
This in particular implies that the problem is $\mathsf{NP}$-hard for 
any $k\geq 2$ such that $n-k=\omega(1)$ \citep{GareyJohnson79,GrahamLLK79},\footnote{It is well known that the problem is $\mathsf{NP}$-hard already for $k=2$ (via a reduction from \textsc{Partition}), and strongly $\mathsf{NP}$-hard for $2 \leq k \leq n/3$ (via a reduction from \textsc{$3$-PARTITION}) \citep{GareyJohnson79,GrahamLLK79}. By standard padding arguments, $\mathsf{NP}$-hardness for a fixed $k \geq 2$ extends to all regimes in which $n-k$ grows unboundedly. Consequently, the problem is $\mathsf{NP}$-hard whenever $k \geq 2$ and $n-k=\omega(1)$.}
which is precisely when our algorithm's running time is super-polynomial.
In Section~\ref{sec:hardness-approx}, we further establish an inapproximability result. Via a reduction from \textsc{3D-Matching} \cite[e.g.,][Section 3.1.2]{GareyJohnson79}, we show that for $k \approx n/3$ there is no polynomial-time algorithm that achieves a better than $3/4$-approximation for the optimal $k$-ambiguous contract problem (unless $\mathsf{P} = \mathsf{NP}$).

\begin{remark}[Structural implications]
Notably, the results in this section shed further light on our characterization result. In particular, our characterization shows that an optimal $k$-ambiguous contract induces a partition of the actions into groups, without prescribing how to form these groups. The hardness results in this section help explain this limitation: a more refined characterization that explicitly determines the grouping would require solving well-known combinatorial optimization problems, which are computationally intractable.
\end{remark}

\subsection{Hardness of Optimal Contracts}
\label{sec:hardness-opt}

We present a reduction from the well-known \textsc{Makespan Minimization} problem \cite[e.g.,][Section A5.2]{GareyJohnson79}
to the optimal succinct ambiguous contract problem.
This shows that the optimal $k$-ambiguous contract problem is $\textsf{NP}$-hard for any $k$ for which finding a partition of $n$ numbers into $k$ sets that minimizes the makespan is $\textsf{NP}$-hard, namely 
any $k \geq 2$ that doesn't satisfy $k=n-O(1)$. For example, it is $\mathsf{NP}$-hard for any constant $k \neq 1$ or any $k =\beta n$ for some constant $\beta \in (0,1)$. 

We start by defining the \textsc{Makespan Minimization} problem.

\begin{definition}[\textsc{Makespan Minimization}]
\label{def:makespan}
The $(n,k)$-\textsc{Makespan Minimization} problem is: Given a set of $n$ 
positive numbers $a_1, \ldots, a_n$, find a partition  of $a_1, \ldots, a_n$ into $k$ sets S$^1,\ldots,S^k$, such that the makespan 
$\max_{i \in [k]}\sum_{j \in S^i} a_j$ 
is minimized. 
\end{definition}

Our next theorem shows that the $k$-ambiguous contract problem in instances with $n$ actions is $\textsf{NP}$-hard whenever $(n,k)$-\textsc{Makespan Minimization} is $\textsf{NP}$-hard.

\begin{theorem}[Hardness of optimal contracts]
\label{thm:hardness-k}
The $k(n)$-ambiguous contract problem is \textsf{NP}-hard for all functions $k: [n] \rightarrow [n]$ such that $(n,k(n))$-\textsc{Makespan Minimization} is \textsf{NP}-hard. 
\end{theorem}

To prove Theorem \ref{thm:hardness-k}, we establish the following reduction:

\begin{proposition}
\label{prop:reduction}
There is a poly-time algorithm for mapping instances $\mathcal{J}_{n,k}$ of $(n,k)$-\textsc{Makespan Minimization} to instances $\mathcal{I}_{n+2}$ of the $k$-ambiguous contract problem, and a poly-time algorithm for turning an 
optimal solution to the $k$-ambiguous contract problem in $\mathcal{I}_{n+2}$ into an optimal solution to the $(n,k)$-\textsc{Makespan Minimization} in $\mathcal{J}_{n,k}$.
\end{proposition}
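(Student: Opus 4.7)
The plan is to reduce $(n,k)$-\textsc{Makespan Minimization} with input $(a_1,\dots,a_n)$ to the optimal $k$-ambiguous contract problem on an instance $\mathcal{I}_{n+2}$ designed so that, for a distinguished target action $i^\star$, the min-pay for $i^\star$ on any subinstance $\mathcal{I}_{\{i^\star\}\cup S}$ with $S$ a set of competitors equals $\sum_{i\in S} a_i$. Given such an instance, Theorem~\ref{thm:opt-k-ambiguous} combined with Lemma~\ref{lem:consistent} (the balancing step) will force the optimal $k$-ambiguous contract implementing $i^\star$ to have expected payment $\max_\ell \sum_{i\in S^\ell} a_i$ over its partition $(S^1,\dots,S^k)$ of $[n+2]\setminus\{i^\star\}$---precisely the makespan of the induced partition---so that optimizing over partitions is exactly the makespan problem on $(a_1,\dots,a_n)$.

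For the construction, I use $n+1$ outcomes $\{0,1,\dots,n\}$ with $r_0 = 0$ and $r_j = R$ for $j\ge 1$, where $R$ is taken sufficiently large (e.g., $R > n\sum_i a_i$). There are $n+2$ actions: a zero-cost action putting unit mass on outcome $0$; the target $i^\star$ uniform over outcomes $\{1,\dots,n\}$ with $p_{i^\star,0}=0$ and cost $M := \max_i a_i$; and for each $i \in [n]$ a competitor action uniform over $\{0,1,\dots,n\}\setminus\{i\}$ with cost $M - a_i$ (non-negative by the choice of $M$). Writing out \textsf{MIN-PAY-LP}$(i^\star)$ on $\mathcal{I}_{\{i^\star\}\cup S}$, the only non-zero coefficients of competitor $i$'s constraint are on $t_0$ and $t_i$, reducing it to $t_i \geq n\, a_i + t_0$. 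Since $t_0$ appears only with non-positive coefficients across all constraints and with zero coefficient in the objective, setting $t_0 = 0$ is optimal; the LP then decouples and its unique optimum is $t_i = n\, a_i$ for $i \in S$ and $t_j = 0$ otherwise, yielding expected payment $\sum_{i\in S} a_i$ as desired.

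The hard part is confirming that the optimal $k$-ambiguous contract genuinely reflects the makespan of a partition of $\{a_1,\dots,a_n\}$, despite two potential distortions: some other action being a more profitable target, and the zero-cost action polluting the partition structure. For the first, the choice $R > n\sum_i a_i$ makes the principal's utility for targeting $i^\star$ at least $R - \sum_i a_i$, which strictly exceeds both the trivially attainable utility $0$ from the zero-cost action and any competitor's $R_i \le (n-1)R/n$, so the optimal contract must target $i^\star$. For the second, including the zero-cost action in part $S^\ell$ introduces the extra constraint $\tfrac{1}{n}\sum_{j\ge 1} t_j \geq M$, which raises that part's min-pay only if $\sum_{i \in S^\ell \cap \mathrm{comp}} a_i < M$; since every partition places the $\argmax_i a_i$ competitor in some part, the overall maximum over parts is already at least $M$, so this extra constraint never increases the overall maximum. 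Moreover, a short exchange argument shows that placing the zero-cost action alone in a part merely reduces the number of parts available for competitors from $k$ to $k-1$ and is thus weakly suboptimal, by monotonicity of makespan in the number of parts.

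Combining these observations, the principal's optimal $k$-ambiguous utility equals $R - \mathrm{makespan}_k(a_1,\dots,a_n)$. Given any optimal $k$-ambiguous contract in the canonical form of Theorem~\ref{thm:opt-k-ambiguous} (which one can always reach via Algorithm~\ref{alg:AC_i}), its associated partition $(S^1,\dots,S^k)$ of $[n+2]\setminus\{i^\star\}$, restricted to competitors and padded with an empty part if necessary, yields an optimal $k$-partition of $\{a_1,\dots,a_n\}$. Since both the construction of $\mathcal{I}_{n+2}$ from $(a_1,\dots,a_n)$ and this extraction run in polynomial time, this establishes the claimed reduction.
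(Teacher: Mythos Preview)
Your construction is essentially the paper's own instance (Figure~\ref{fig:sop-tight1k}), up to a uniform scaling of costs by a factor of $n$; the resulting IC constraints, min-pay values, and the link between the optimal expected payment and the makespan are the same, and your treatment of why the target action dominates and why the zero-cost action does not distort the makespan parallels Propositions~\ref{prop:hardness-implmnt} and~\ref{prop:reduction-makespan}.

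There is one genuine gap, in the extraction step. The proposition asks for a poly-time algorithm that takes an \emph{arbitrary} optimal $k$-ambiguous contract for $\mathcal{I}_{n+2}$ and returns an optimal makespan partition. You instead assume the contract comes in the canonical form of Theorem~\ref{thm:opt-k-ambiguous} and write ``which one can always reach via Algorithm~\ref{alg:AC_i}'' --- but Algorithm~\ref{alg:AC_i} enumerates all partitions and is exponential, so this does not give a poly-time extraction. The fix, which your own analysis already supports and which the paper carries out explicitly in Proposition~\ref{prop:reduction-makespan}, is to read the partition off the payment functions directly: given any optimal IC contract $\langle\{t^1,\dots,t^k\}, i^\star\rangle$, assign each competitor $i$ to the first $\ell$ with $U_A(i\mid t^\ell)\le U_A(i^\star\mid t^\ell)$. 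This is clearly poly-time; each $t^\ell$ is then feasible for \textsf{MIN-PAY-LP}$(i^\star)$ on its part, so by consistency the expected payment $T$ satisfies $T\ge\sum_{i\in S^\ell}a_i$ for every $\ell$, and together with your constructive upper bound (one IC contract per partition with payment equal to its makespan) this forces the extracted partition to be makespan-optimal.
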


Given an instance $\mathcal{J}_{n,k}$ of $(n,k)$-\textsc{Makespan Minimization} with values $a_1, \ldots a_n$, we construct an instance $\mathcal{I}_{n+2} = \mathcal{I}(a_1, \ldots, a_n)$ of the $k$-ambiguous contract problem, 
with $n+2$ actions and $n+1$ outcomes, as follows (see Figure~\ref{fig:sop-tight1k}). 
Let $A = \sum_{i=1}^{n}a_i$. Action $0$ is a zero cost action that deterministically leads to a reward of $r_0 = 0$. Action $n+1$ leads to a uniform distribution over all outcomes in $[n]$ and has a cost of $c_{n+1} = \frac{1}{n} \max_{i \in [n]} a_i$. Action $i$ for $1 \leq i \leq n$ induces a uniform distribution over outcomes $\{0, \ldots, n\}\setminus \{i\}$, and has a cost of $c_i = c_{n+1} - \frac{1}{n} \cdot a_i$.

Next we show that, given an optimal solution to the $k$-ambiguous contract problem in $\mathcal{I}(a_1,\ldots,a_n)$, we can construct an optimal solution to $(n,k)$-\textsc{Makespan Minimization} in $\mathcal{J}_{n,k}$.

\begin{figure}[t]
    \centering
    \begin{tabular}{|c|c|c|c|c|c|c|c|c|c|}
    \hline
        \rule{0pt}{3ex} \hspace{2.0mm} rewards:  \hspace{2.0mm}&\hspace{2.0mm} $r_0 = 0$ \hspace{2.0mm}&\hspace{2.0mm} $r_1 = A$ \hspace{2.0mm}&\hspace{2.0mm} $r_2 = A$ \hspace{2.0mm}&\hspace{2.0mm} $\dots$ \hspace{2.0mm}&\hspace{2.0mm} $r_n = A$ \hspace{2.0mm}&\hspace{2.0mm} costs \hspace{2.0mm} \\[1ex] \hline
        \rule{0pt}{3ex}  action $0$: & $1$ & $0$ & $0$ & $\dots$ & $0$ & $c_0 = 0$  \\[1ex]
        \rule{0pt}{3ex}  action $1$: & $ \frac{1}{n}$ & $0$ & $\frac{1}{n}$ & $\dots$ & $\frac{1}{n}$ & $c_1 =c_{n+1}- \frac{1}{n} \cdot a_1$  \\[1ex] 
         \rule{0pt}{3ex}  action $2$:  & $ \frac{1}{n}$ & $\frac{1}{n}$ & $0$ & $\dots$ & $\frac{1}{n}$ & $c_2 = c_{n+1}- \frac{1}{n} \cdot a_2$  \\[1ex] 
         \rule{0pt}{3ex}  $\vdots$ & $\vdots$ & $\vdots$ & $\vdots$ & $\ddots$ &  $\vdots$ & $\vdots$  \\[1ex] 
         \rule{0pt}{3ex}  action $n$:  & $ \frac{1}{n}$  & $ \frac{1}{n}$  & $\frac{1}{n}$  &  $\dots$ & $0$ & $c_n = c_{n+1}- \frac{1}{n} \cdot a_n$  \\[1ex] 
         \rule{0pt}{3ex}  action $n+1$:  & $0$ & $\frac{1}{n}$ &  $\frac{1}{n}$ &  $\dots$ & $\frac{1}{n}$ & $c_{n+1} = \frac{1}{n}\max_{i \in [n]} a_i$  \\[1ex] \hline
    \end{tabular}
\caption{Instance $\mathcal{I}(a_1, \ldots, a_n)$ 
used in the proof of Theorem~\ref{thm:hardness-k}.
}
\label{fig:sop-tight1k}
\end{figure}

We start with the following observation.

\begin{observation}
\label{obs:condition1-k}
In instance $\mathcal{I}(a_1, \ldots, a_n)$, a 
payment function $t$ protects action $n+1$ against action $i \in [n]$ if and only if $t_i \ge t_0 +a_i$. 
\end{observation}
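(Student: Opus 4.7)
The plan is to prove this by direct computation of the two agent utilities under $t$, exploiting the very specific probability structure of the instance in Figure~\ref{fig:sop-tight1k}. By Definition~\ref{def:protect-against-set}, $t$ protects action $n+1$ against action $i$ precisely when $U_A(n+1 \mid t) \geq U_A(i \mid t)$, so the goal is to rewrite this inequality in closed form and show it simplifies to $t_i \geq t_0 + a_i$.

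First I would plug in the distributions: since $p_{n+1,0} = 0$ and $p_{n+1,j} = 1/n$ for every $j \in [n]$,
\[
U_A(n+1 \mid t) = \tfrac{1}{n}\sum_{j=1}^{n} t_j - c_{n+1}.
\]
Similarly, for $i \in [n]$ we have $p_{i,0} = 1/n$, $p_{i,i} = 0$, and $p_{i,j} = 1/n$ for every other $j$, so
\[
U_A(i \mid t) = \tfrac{1}{n} t_0 + \tfrac{1}{n}\!\!\sum_{j \in [n],\, j\neq i}\!\! t_j - c_i = \tfrac{1}{n} t_0 + \tfrac{1}{n}\sum_{j=1}^{n} t_j - \tfrac{1}{n} t_i - c_i.
\]

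The second step is to subtract these two expressions. The common term $\tfrac{1}{n}\sum_{j=1}^n t_j$ cancels, leaving
\[
U_A(n+1 \mid t) - U_A(i \mid t) = -c_{n+1} + c_i - \tfrac{1}{n} t_0 + \tfrac{1}{n} t_i.
\]
Substituting the cost $c_i = c_{n+1} - \tfrac{1}{n} a_i$ from the construction makes the $c_{n+1}$ terms cancel, so the difference equals $\tfrac{1}{n}(t_i - t_0 - a_i)$. Since $1/n > 0$, this difference is non-negative if and only if $t_i \geq t_0 + a_i$, yielding exactly the claimed characterization. There is no real obstacle here; the observation is essentially a one-line algebraic identity made transparent by the symmetric structure of the constructed instance, and the only thing to be careful about is correctly accounting for the $t_0$ term, which appears in $U_A(i \mid t)$ but not in $U_A(n+1 \mid t)$.
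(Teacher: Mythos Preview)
Your proof is correct and follows essentially the same approach as the paper: write out the inequality $U_A(n+1\mid t)\ge U_A(i\mid t)$ using the explicit distributions, cancel the common sum, and substitute $c_i=c_{n+1}-\tfrac{1}{n}a_i$ to obtain $t_i\ge t_0+a_i$. The paper's proof is terser but the argument is identical.
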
 
\begin{proof}
Contract $t$ protects action $n+1$ against action $i$ if and only if:
\[
    \frac{1}{n} \sum_{j=1}^n t_j - c_{n+1} \ge \frac{1}{n}\cdot t_0 + \frac{1}{n} \sum_{j \neq i} t_j - (c_{n+1}-\frac{a_i}{n}).
\]
Rearranging, we get 
$t_i \ge t_0 + a_i$, as desired. 
\end{proof}

The next lemma establishes a direct correspondence between partitions of $[n]$ into $k$ sets and incentive-compatible $k$-ambiguous contracts for action $n+1$. In particular, for every partition $\bar{\mathcal{S}} = \{\bar{S}^1, \ldots, \bar{S}^k\}$ of $[n]$, we construct an IC $k$-ambiguous contract that implements action $n+1$, whose expected payment is exactly the makespan of the partition divided by $n$.

\begin{lemma}
\label{lem:partition-to-contract}
In the instance $\mathcal{I}(a_1, \ldots, a_n)$, for any partition $\bar{\mathcal{S}}=\{\bar{S}^1,\ldots,\bar{S}^k\}$ of $[n]$, there exists an IC $k$-ambiguous contract $\langle \bar{\tau}, n+1 \rangle$ that implements action $n+1$ such that the expected payment for action $n+1$ is exactly
\[
    \Payment{n+1}{\bar{\tau}} = \frac{1}{n}\max_{i\in[k]}\sum_{j\in\bar{S}^i} a_j.
\]
\end{lemma}

\begin{proof}
Let $\bar{\mathcal{S}}=\{\bar{S}^1,\ldots,\bar{S}^k\}$ be an arbitrary partition of $[n]$. Consider a collection of $k$ payment functions $\bar{t}^1, \ldots, \bar{t}^k$ defined as follows: for each $i \in [k]$ and outcome $j \in \{0, 1, \dots, n\}$,
\[
    \bar{t}^i_j = \begin{cases}
    a_j & \text{if } j \in \bar{S}^i, \\
    0 & \text{otherwise.}
    \end{cases}
\]
First, we calculate the expected payment of each payment function $\bar{t}^i$ for action $n+1$. Since action $n+1$ places a probability of $1/n$ on each outcome $j \in [n]$ and a probability of $0$ on outcome $0$, we have:
\[
    \Payment{n+1}{\bar{t}^i} = \frac{1}{n}\sum_{j \in \bar{S}^i} a_j.
\]

Next, we show that this collection of payment functions can be used to protect action $n+1$ against a partition of all non-target actions (which is the set $\{0, 1, \ldots, n\}$). 

By Observation~\ref{obs:condition1-k}, a payment function $t$ protects action $n+1$ against an action $j \in [n]$ if and only if $t_j \ge t_0 + a_j$. For any $i \in [k]$ and any action $j \in \bar{S}^i$, we have $\bar{t}^i_j = a_j$ and $\bar{t}^i_0 = 0$. Thus, the condition $\bar{t}^i_j \ge \bar{t}^i_0 + a_j$ holds with equality, meaning $\bar{t}^i$ protects action $n+1$ against all actions assigned to $\bar{S}^i$.

To ensure protection against action $0$, let $i^\star \in [k]$ be the index of a set containing 
a largest $a_j$;  
i.e., 
let $j^\star \in \argmax_{j\in[n]} a_j$, and let $i^\star$ be such that 
$j^\star \in \bar S^{i^\star}$.
The agent's expected utility for action $n+1$ under $\bar{t}^{i^\star}$ is:
\[
    \UAmid{n+1}{\bar{t}^{i^\star}} = \Payment{n+1}{\bar{t}^{i^\star}} - c_{n+1} = \frac{1}{n} \sum_{j \in \bar{S}^{i^\star}} a_j - \frac{1}{n} \max_{j \in [n]} a_j.
\]
Because the maximum element is contained in $\bar{S}^{i^\star}$ and all $a_j \ge 0$, the sum $\sum_{j \in \bar{S}^{i^\star}} a_j$ is at least $\max_{j \in [n]} a_j$, which implies $\UAmid{n+1}{\bar{t}^{i^\star}} \ge 0$. 
Furthermore, the agent's expected utility for action $0$ under $\bar{t}^{i^\star}$ is $\UAmid{0}{\bar{t}^{i^\star}} = \bar{t}^{i^\star}_0 - c_0 = 0$. We therefore have $\UAmid{n+1}{\bar{t}^{i^\star}} \ge \UAmid{0}{\bar{t}^{i^\star}}$, establishing that $\bar{t}^{i^\star}$ successfully protects action $n+1$ against action $0$.

Let us define a new partition $\hat{\mathcal{S}} = \{\hat{S}^1, \ldots, \hat{S}^k\}$ of the complete set of non-target actions $\{0, 1, \ldots, n\}$ by setting $\hat{S}^{i^\star} = \bar{S}^{i^\star} \cup \{0\}$ and $\hat{S}^i = \bar{S}^i$ for all $i \ne i^\star$. We have established that for each $i \in [k]$, the payment function $\bar{t}^i$ protects action $n+1$ against the set $\hat{S}^i$.

By Lemma~\ref{lem:consistent}, applying the balancing routine (Algorithm~\ref{alg:shifting}) to the collection $\{\bar{t}^1, \ldots, \bar{t}^k\}$ and action $n+1$ yields an IC $k$-ambiguous contract $\langle \bar{\tau}, n+1 \rangle$ that implements action $n+1$. 
The expected payment of the resulting consistent contract $\bar{\tau}$ for action $n+1$ is exactly the maximum expected payment among the input functions:
\[
    \Payment{n+1}{\bar{\tau}} = \max_{i \in [k]} \Payment{n+1}{\bar{t}^i} = \frac{1}{n} \max_{i \in [k]} \sum_{j \in \bar{S}^i} a_j.
\]
This concludes the proof.
\end{proof}
     
Our next proposition shows that for any (non-trivial) instance of the makespan minimization problem, the optimal IC $k$-ambiguous contract must implement action $n+1$.

\begin{proposition}
\label{prop:hardness-implmnt}
Suppose $n, k \geq 2$. Then any optimal IC $k$-ambiguous contract in the instance $\mathcal{I}(a_1, \ldots, a_n)$ must implement action $n+1$.
\end{proposition}

\begin{proof}
We show that the optimal IC $k$-ambiguous contract implements action $n+1$ by demonstrating that it achieves a principal's utility strictly greater than the maximum possible utility from implementing any other action.

We begin by computing the expected rewards of the actions in the instance 
$\mathcal{I}(a_1, \ldots, a_n)$. For action $n+1$, the expected reward is $R_{n+1} = \sum_{j=1}^n \frac{1}{n} r_j = \sum_{j=1}^n \frac{1}{n} A = A$. For any action $i \in [n]$, the expected reward is $R_i = \sum_{j \neq i} \frac{1}{n} r_j = \frac{n-1}{n} A$. For action $0$, the expected reward is $R_0 = 0$. 

Because limited liability requires all payments to be non-negative, the principal's expected utility from implementing any action is upper bounded by that action's expected reward. 
Therefore, for any contract implementing an action $i \in \{0, \ldots, n\}$, the principal's utility is at most $\frac{n-1}{n}A$.

We now show that the principal can achieve a utility strictly greater than $\frac{n-1}{n} A$ by implementing action $n+1$. 
First observe that, since $n, k \geq 2$ and all values $a_1, \ldots, a_n$ are positive, 
there exists a partition $\bar{\mathcal{S}} = \{\bar{S}^1, \ldots, \bar{S}^k\}$ of $[n]$ into $k$ sets such that the makespan $M = \max_{i \in [k]} \sum_{j \in \bar{S}^i} a_j$ is strictly smaller than the total sum $A = \sum_{j=1}^n a_j$. 

By Lemma~\ref{lem:partition-to-contract}, this partition $\bar{\mathcal{S}}$ induces an IC $k$-ambiguous contract $\langle \bar{\tau}, n+1 \rangle$ that implements action $n+1$ with an expected payment of exactly $M/n$. 
The principal's expected utility from this $k$-ambiguous contract is:
\[
    \UPmid{n+1}{\bar{\tau}} = R_{n+1} - \Payment{n+1}{\bar{\tau}} = A - \frac{M}{n}.
\]
Since $M < A$, it holds that:
\[
    \UPmid{n+1}{\bar{\tau}} > A - \frac{A}{n} = \frac{n-1}{n}A.
\]

Because this utility strictly exceeds the maximum possible principal's utility from implementing any other action, we conclude that any optimal IC $k$-ambiguous contract must implement action $n+1$.
\end{proof}

By Proposition~\ref{prop:hardness-implmnt}, whenever $n,k \geq 2$, 
any optimal $k$-ambiguous contract for $\mathcal{I}(a_1, \ldots, a_n)$ must implement action $n+1$.
Let $\langle\tau=\{t^1,\ldots,t^k\},n+1\rangle$ be such a contract.
The following proposition demonstrates how to derive from $\langle\tau,n+1\rangle$ an optimal solution to $(n,k)$-\textsc{Makespan Minimization} in instance $\mathcal{J}_{n,k}$ with values $a_1,\ldots,a_n$.

\begin{proposition}
\label{prop:reduction-makespan}
Suppose $n,k \geq 2$. 
Given an optimal IC $k$-ambiguous contract $\langle\tau=\{t^1,\ldots,t^k\},n+1\rangle$ for instance $\mathcal{I}(a_1, \ldots, a_n)$, the following partition  $\mathcal{S}=\{S^1,\ldots,S^k\}$ of $[n]$ gives an optimal solution to the $(n,k)$-\textsc{Makespan Minimization} problem in instance $\mathcal{J}_{n,k}$:
\[
    S^1 = \{i\in [n] \mid \UAmid{i}{t^1} \le \UAmid{n+1}{t^1}\}
\]
\[
    S^2 = \{i\in [n] \setminus S^1 \mid \UAmid{i}{t^2} \le \UAmid{n+1}{t^2}\}
\]
\vspace*{-0.5cm}
\[
     \vdots
\]
\vspace*{-0.5cm}
\[
    S^k = \left\{ i\in [n]\setminus \bigcup_{l \in [k-1]} S^l \bigg| \UAmid{i}{t^k} \le \UAmid{n+1}{t^k} \right\}
\]
\end{proposition}
    
\begin{proof}
We show that the makespan of the partition $\mathcal{S}$ derived from the optimal IC $k$-ambiguous contract $\langle \tau, n+1 \rangle$ is (weakly) smaller than that of any other partition.

We first show that the expected payment of $\tau$ for action $n+1$ is at least $\frac{1}{n}\max_{i\in[k]}\sum_{j\in S^i} a_j$.
Since partition $\mathcal{S}$ is constructed such that $t^i$ protects action $n+1$ against $S^i$, by Observation~\ref{obs:condition1-k} it holds that for all $i\in[k]$ and $j\in S^i$, $t^i_j\ge t^i_0 + a_j\ge a_j$, and so the expected payment of $t^i$ for action $n+1$ is at least 
\[
    \sum_{j\in S^i} \frac{1}{n}t^i_j \ge \frac{1}{n}\sum_{j\in S^i} a_j.
\] 
Since this is true for all $i\in [k]$ and $\langle\tau , n+1 \rangle$ is consistent, we get the desired inequality:
\begin{equation}
    \Payment{n+1}{\tau}\ge\frac{1}{n}\max_{i\in[k]}\sum_{j\in S^i} a_j.
    \label{eq:exp-payment}
\end{equation}

Next, consider an arbitrary partition $\bar{\mathcal{S}}=\{\bar{S}^1,\ldots,\bar{S}^k\}$ of $[n]$ into $k$ sets. By Lemma~\ref{lem:partition-to-contract}, there exists an IC $k$-ambiguous contract $\langle\bar{\tau},n+1\rangle$ that implements action $n+1$ such that the expected payment for action $n+1$ is exactly $\frac{1}{n}\max_{i\in[k]}\sum_{j\in\bar{S}^i}a_j$.

Since $\tau$ is an optimal IC $k$-ambiguous contract for $\mathcal{I}(a_1,\ldots,a_n)$, and in particular, optimal among contracts implementing action $n+1$, we have $\UPmid{n+1}{\tau} \ge \UPmid{n+1}{\bar{\tau}}$. 
Because the expected reward of action $n+1$ is fixed at $A$, this inequality implies that the expected payment under $\tau$ for action $n+1$ is at most the expected payment under $\bar{\tau}$. We therefore conclude that
\[
\frac{1}{n}\max_{i\in[k]}\sum_{j\in S^i}a_j
\le \Payment{n+1}{\tau}
\le \Payment{n+1}{\bar{\tau}} =\frac{1}{n}\max_{i\in[k]}\sum_{j\in\bar{S}^i}a_j,
\]
where the first inequality follows by Equation~\eqref{eq:exp-payment}. Since $\bar{\mathcal{S}}$ is an arbitrary partition of $[n]$, this shows that the partition $\mathcal{S}$ minimizes the makespan, and thus gives an optimal solution to the $(n,k)$-\textsc{Makespan Minimization} instance.
\end{proof}

We are now ready to prove Proposition~\ref{prop:reduction}.

\begin{proof}[Proof of Proposition~\ref{prop:reduction}]
The makespan minimization problem is trivial when either $n = 1$ or $k = 1$. So we may assume that $n,k \geq 2$. The poly-time reduction then follows from the observation that the construction of the instance $\mathcal{I}(a_1,\ldots,a_n)$ from the instance $\mathcal{J}_{n,k}$ 
is poly-time, and Proposition~\ref{prop:reduction-makespan} yields a poly-time algorithm for translating an optimal $k$-ambiguous contract in $\mathcal{I}_{n+2}$ into an optimal solution to $\mathcal{J}_{n,k}$.
\end{proof}

\subsection{Hardness of Approximation}
\label{sec:hardness-approx}

We now turn to the hardness of approximating the optimal $k$-ambiguous contract problem.
We begin by introducing 
the \Matching problem.

\begin{definition}
[\Matching]
The \Matching problem is: 
Given three disjoint sets $X$, $Y$, and $Z$, each of size $n$, and a set $T \subseteq X \times Y \times Z$ of triples, The goal is to determine whether there exists a subset $M \subseteq T$ such that every element of $X \cup Y \cup Z$ appears in exactly one triple of $M$. Such a subset $M$ is called a perfect 3-dimensional matching.
\end{definition} 
This problem is known to be $\textsf{NP}$-hard \cite[e.g.,][Section 3.1.2]{GareyJohnson79}. 
The following theorem establishes the hardness of approximating the optimal $k$-ambiguous contract problem.

\begin{theorem}[Hardness of approximation]
\label{thm:hardness-approx-util}
There exists a $k \in \Theta(n)$ 
such that no poly-time algorithm can approximate the optimal IC $k$-ambiguous contract to within a better factor than $3/4$,  
unless $\textsf{P} = \textsf{NP}$.
\end{theorem}

To prove Theorem \ref{thm:hardness-approx-util}, we establish the following reduction:

\begin{lemma} \label{lem:reduction_approx}
There exists a polynomial-time reduction that maps any instance 
$(X, Y, Z, T)$ of \Matching, 
with $|X| = |Y| = |Z| = n$, 
to an instance $\mathcal{I}_{3n+2}$ of the $k$-ambiguous contract problem, with $k=n$, 
such that:
\begin{enumerate}
    \item If the \Matching  instance admits a perfect matching, then $\mathcal{I}_{3n+2}$ admits an IC $k$-ambiguous contract with principal's utility of $\frac{4}{3}$.
    \item If the \Matching instance has no perfect matching, then any IC $k$-ambiguous contract obtains principal's utility of at most $1$.
\end{enumerate}
\end{lemma}

Consequently, approximating the optimal utility to within a factor greater than $\frac{3}{4}$ would allow distinguishing 
between satisfiable and unsatisfiable \Matching instances.

\begin{proof}[Proof of~Lemma~\ref{lem:reduction_approx}]
We start by describing the construction of the contract instance.
Given an instance $(X, Y, Z, T)$ of the \Matching problem with $|X| = |Y| = |Z| = n$ and $|T| = m$, we construct an instance $I_{3n+2}$ of the $k$-ambiguous contract problem with $k=n$, $3n + 2$ actions, and $m + 2$ outcomes (as described in Figure~\ref{fig:3dmatching}). 

We number the $m+2$ outcomes as $0, 1, \ldots, m+1$, and give them a reward of $r_j = 0$ for $j = 0, \ldots, m$ and $r_{m+1} = m+2$. We associate each triplet $T = \{T_1, \ldots, T_m\}$ with one of the outcomes $1, \ldots, m$ so that outcome $j$ corresponds to triplet $T_j$. 

One of the actions is action $0$, with a cost of $0$. This action leads to outcome $0$ with probability $1$ and to the remaining outcomes with probability zero.

Then for each element in $X \cup Y \cup Z$ we add one action, for a total of $3n$ actions. The actions corresponding to elements in $X$ are said to be in block $X$, and we denote them by $x_1, \ldots, x_n$; and similarly for elements in $Y$ and $Z$. We refer to actions in blocks $X,Y,Z$ as {\em block actions}. All of these actions have a cost of zero.
The probabilities of these actions 
for outcomes $1, \ldots, m$ corresponding to the triplets $T_1, \ldots, T_m$ are: 
\[
    p_{x_i,j} = \begin{cases}
    0 & \text{if $x_i \in T_j$}\\
    \frac{1}{m+2} & \text{otherwise}
    \end{cases}, \quad
    p_{y_i,j} = \begin{cases}
    0 & \text{if $y_i \in T_j$}\\
    \frac{1}{m+2} & \text{otherwise}
    \end{cases}, \quad
    p_{z_i,j} = \begin{cases}
    0 & \text{if $z_i \in T_j$}\\
    \frac{1}{m+2} & \text{otherwise}
    \end{cases}.
\]
We set the remaining probabilities of these actions to: 
\begin{align*}
    &p_{x_i,m+1} = \frac{1}{m+2}, \quad 
    p_{x_i,0} = 1-\sum_{j \geq 1} p_{x_i,j}, \\
    &p_{y_i,m+1} = \frac{1}{m+2}, \quad 
    p_{y_i,0} = 1-\sum_{j \geq 1} p_{y_i,j}, \\
    &p_{z_i,m+1} = \frac{1}{m+2}, \quad 
    p_{z_i,0} = 1-\sum_{j \geq 1} p_{z_i,j}.
\end{align*}

We finally add a \emph{target} action, with a cost of $\frac{2}{3}$. 
The target action leads to outcome $0$ with probability $0$, to each outcome $j=1, \ldots, m$ with probability $\frac{1}{m+2}$, and to outcome $m+1$ with probability $\frac{2}{m+2}$.

\begin{figure}[t]
\centering
\renewcommand{\arraystretch}{1.2}
\setlength{\tabcolsep}{10pt}
\begin{tabular}{|p{0.4cm} c|ccccc|c|}
\hline
 Block & Action & $r_0 = 0$ & $r_1 = 0$ & $\cdots$ & $r_m = 0$ & $r_{m+1} = m+2$ & Cost $c_i$ \\ 
\hline
 & 0 & 1 & 0 & $\cdots$ & 0 & 0 & 0 \\ 
\hline
 & $x_1$ & $1-\sum_{j = 1}^{m+1} p_{x_1,j}$ & $p_{x_1,1}$ & $\cdots$ & $p_{x_1,m}$ & $\tfrac{1}{m+2}$ & 0 \\
 \hspace*{0.2cm}$X$ & $\vdots$ & $\vdots$ & $\vdots$ & & $\vdots$ & $\vdots$ & $\vdots$ \\
 & $x_n$ &  $1-\sum_{j = 1}^{m+1} p_{x_n,j}$ & $p_{x_n,1}$ & $\cdots$ & $p_{x_n,m}$ & $\tfrac{1}{m+2}$ & 0 \\
\hline
 & $y_1$ & $1-\sum_{j = 1}^{m+1} p_{y_1,j}$ & $p_{y_1,1}$ & $\cdots$ & $p_{y_1,m}$ & $\tfrac{1}{m+2}$ & 0 \\
 \hspace*{0.2cm}$Y$ & $\vdots$ & $\vdots$ & $\vdots$ & $\vdots$ & $\vdots$ & & $\vdots$ \\
 & $y_n$ & $1-\sum_{j = 1}^{m+1} p_{y_n,j}$ & $p_{y_n,1}$ & $\cdots$ & $p_{y_n,m}$ & $\tfrac{1}{m+2}$ & 0 \\
\hline
 & $z_1$ & $1-\sum_{j = 1}^{m+1} p_{z_1,j}$ & $p_{z_1,1}$ & $\cdots$ & $p_{z_1,m}$ & $\tfrac{1}{m+2}$ & 0 \\
 \hspace*{0.2cm}$Z$ & $\vdots$ & $\vdots$ & $\vdots$ & $\vdots$ & $\vdots$ & & $\vdots$ \\
 & $z_n$ & $1-\sum_{j = 1}^{m+1} p_{z_n,j}$ & $p_{z_n,1}$ & $\cdots$ & $p_{z_n,m}$ & $\tfrac{1}{m+2}$ & 0 \\
\hline
 & target & 0 & $\tfrac{1}{m+2}$ & $\cdots$ & $\tfrac{1}{m+2}$ & $\tfrac{2}{m+2}$ & $\tfrac{2}{3}$ \\
\hline
\end{tabular}
\caption{Instance $I_{3n+2,\,m+2}$ used in the reduction. 
}
\label{fig:3dmatching}
\end{figure}

We next establish the gap claimed in the lemma.
Specifically, we show that 
if a perfect matching exists, then the target action can be implemented by a $k$-ambiguous contract, with $k = n$, with a principal's utility of $\frac{4}{3}$.  
Otherwise, no feasible IC $k$-ambiguous contract, with $k=n$, can yield a principal's utility greater than $1$. Note that since all actions apart from the target action have an expected reward of at most $1$, for the latter it is sufficient to bound the utility from ambiguous contracts that implement the target action.

\paragraph{Case 1: A perfect matching exists.}
Assume the \Matching instance admits a perfect matching $M = \{M_1, \dots, M_n\} \subseteq T$.
We show how to implement the target action with a consistent $n$-ambiguous contract that gives a principal's utility of $\frac{4}{3}$.

To this end, consider the $n$-ambiguous contract consisting of $n$ SOP contracts corresponding to the triplets in $M$. Each one of the SOP contracts pays $\frac{2(m+2)}{3}$ only to the outcome associated with its triplet (and zero otherwise).

First observe that the expected payment for each of these SOP contracts under the target action is $\frac{2}{3}$. This is because each of the SOP contracts pays an amount of $\frac{2(m+2)}{3}$ for exactly one of the outcomes $1, \ldots, m$ and the target action's probability for any of these outcomes is $\frac{1}{m+2}$.

This implies that the agent's utility from playing the target action is $0$, since payment equals cost. This also shows that the  contract protects the target action against action 0.

It remains to show that the contract protects the target action against all actions in blocks $X,Y,Z$.
Since $M$ is a perfect matching, it suffices to show that each SOP contract protects the target action against the three actions corresponding to its triplet. 
Recalling that the agent's utility from taking the target action is zero, it suffices to show that the agent's utility from taking any of these actions is also zero.
To see this, consider an SOP contract corresponding to a triplet $M_{\ell}$. Since the probability that each of the corresponding actions leads to the corresponding outcome is zero, the agent's utility for taking any of these actions is $0$. 

The argument for this case is completed, by noting that the principal's utility for the target action is $2-2/3 = 4/3$ (since the expected reward of the target action is $2$).

\paragraph{Case 2: No perfect matching exists.}
Assume the \Matching instance $(X, Y, Z, T)$ admits no perfect matching, and let $\tau$ denote an arbitrary IC $n$-ambiguous contract that implements the target action. We can, without loss of generality, assume that no $t \in \tau$ pays a positive amount for outcome $0$. Consequently, each $t \in \tau$ must protect the target action against a subset of the block actions, so that jointly all block actions are covered.

We distinguish between the following cases:

\medskip
\noindent\textbf{Case 2a:}
\emph{There exists a payment function $t \in \tau$ that protects the target action against two block actions from the same block.}

\medskip

Without loss of generality, let these actions be $x_1, x_2$.
Let $I_1$ be the set of indices $j \in [m]$ such that $x_1 \not\in T_j$. 
By the construction, this is the set of indices $j\in [m]$ s.t. $p_{x_1,j}=\frac{1}{m+2}$. Let $I_2$ be the analogous set of indices with respect to $x_2$, i.e., the set of indices $j\in [m]$ s.t. $p_{x_2,j}=\frac{1}{m+2}$.
Since every triplet $T_j \in T$ has exactly one element from $X$, for every $j\in[m]$, either $x_1 \not\in T_j$ or $x_2 \not\in T_j$, therefore it must hold that
\begin{equation}
\label{eq:union}
    I_1 \cup I_2 = [m].
\end{equation}

Since $t$ protects the target action against $x_1$, it holds that
\[
    \frac{1}{m+2} \sum_{i\in I_1} t_i + \frac{1}{m+2}t_{m+1} \le \frac{1}{m+2} \sum_{i=1}^{m} t_i + \frac{2}{m+2} t_{m+1}- \frac{2}{3} .
\]
Similarly, since $t$ protects the target action against $x_2$, it holds that
\[
    \frac{1}{m+2} \sum_{i\in I_2} t_i + \frac{1}{m+2}t_{m+1} \le \frac{1}{m+2} \sum_{i=1}^{m} t_i + \frac{2}{m+2}t_{m+1} - \frac{2}{3} .
\]

Summing these two inequalities and rearranging, we get
\begin{equation}
    \label{eq:sum}
    \frac{1}{m+2} \left( \sum_{i\in I_1} t_i + \sum_{i\in I_2} t_i \right) + \frac{4}{3}\le \frac{2}{m+2} \sum_{i=1}^{m} t_i + \frac{2}{m+2}t_{m+1}.
\end{equation}

Since $I_1 \cup I_2 = [m]$ (see Equation~\eqref{eq:union}), it holds that 
$\sum_{i \in I_1}t_i + \sum_{i \in I_2}t_i = \sum_{i=1}^{m}t_i + \sum_{i \in I_1 \cap I_2}t_i$.
Substituting the last equality in Equation~\eqref{eq:sum},
we get:
\[
    \frac{1}{m+2}  \sum_{i\in I_1 \cap I_2} t_i + \frac{4}{3} \le \frac{1}{m+2} \sum_{i=1}^{m} t_i + \frac{2}{m+2}t_{m+1}.
\]
The right-hand side of the last inequality is precisely the target action's expected payment, so this payment is at least $\frac{4}{3}$. Since the target action's expected reward is $2$, the principal's utility is at most $\frac{2}{3}$ (and thus at most $1$ as claimed).

\medskip
\noindent\textbf{Case 2b:}
\noindent
\emph{Every payment function $t \in \tau$ protects the target action against at most one action from each block.}
\medskip

Since no $t \in \tau$ protects the target action against more than one action from each block, and we have $k = n$ payment functions at our disposal in order to cover all $3n$ block actions, every $t \in \tau$ must protect the target action against exactly three block actions, one from each block.

Since the \Matching instance has no perfect matching, there must exist a payment function $t \in \tau$ that protects the target action against a triplet of actions $(x_a, y_b, z_c)$ that does not correspond to any triplet in $T$. This implies the following:

\begin{observation}
\label{obs:action_a}
    For any $j \in [m]$, there exists an action $d \in \{x_a, y_b, z_c\}$ such that $p_{d,j}=\frac{1}{m+2}$.
\end{observation}

Let $I_x = \{j \in [m] \mid p_{x_a, j} = \frac{1}{m+2}\}$, and define $I_y$ and $I_z$ similarly for actions $y_b$ and $z_c$. By Observation~\ref{obs:action_a}, every $j \in [m]$ belongs to at least one of these three sets, meaning $I_x \cup I_y \cup I_z = [m]$.

Since the payment function $t$ protects the target action against $x_a$, the agent's utility for the target action must be at least the agent's utility for $x_a$. Recalling that the target action places a probability of $\frac{1}{m+2}$ on all $j \in [m]$ and $\frac{2}{m+2}$ on outcome $m+1$, the expected payment for the target action under $t$ is:
\[
    T_{\text{target}}(t) = \frac{1}{m+2} \sum_{j=1}^m t_j + \frac{2}{m+2} t_{m+1}.
\]
For action $x_a$, the cost is $0$. It places a probability of $\frac{1}{m+2}$ on all $j \in I_x$, a probability of $\frac{1}{m+2}$ on outcome $m+1$, and some non-negative probability on outcome $0$. Because $t_0 = 0$ (as established at the beginning of Case 2), the agent's expected utility for $x_a$ is:
\[
    \UAmid{x_a}{t} = \frac{1}{m+2} \sum_{j \in I_x} t_j + \frac{1}{m+2} t_{m+1}.
\]
Recalling that the target action has a cost of $\frac{2}{3}$, 
the incentive compatibility constraint $\UAmid{\text{target}}{t} \ge \UAmid{x_a}{t}$ yields:
\[
    T_{\text{target}}(t) - \frac{2}{3} \ge \frac{1}{m+2} \sum_{j \in I_x} t_j + \frac{1}{m+2} t_{m+1}.
\]
Similarly, since $t$ protects the target action against $y_b$ and $z_c$, we obtain two analogous constraints:
\[
    T_{\text{target}}(t) - \frac{2}{3} \ge \frac{1}{m+2} \sum_{j \in I_y} t_j + \frac{1}{m+2} t_{m+1},
\]
and
\[
    T_{\text{target}}(t) - \frac{2}{3} \ge \frac{1}{m+2} \sum_{j \in I_z} t_j + \frac{1}{m+2} t_{m+1}.
\]

Summing these three inequalities yields:
\[
    3 \cdot T_{\text{target}}(t) - 2 \ge \frac{1}{m+2} \left( \sum_{j \in I_x} t_j + \sum_{j \in I_y} t_j + \sum_{j \in I_z} t_j \right) + \frac{3}{m+2} t_{m+1}.
\]
Because $I_x \cup I_y \cup I_z = [m]$ and all payments are non-negative ($t_j \ge 0$), the sum of the payments over these three sets must be at least the sum over all $j \in [m]$. Thus:
\[
    3 \cdot T_{\text{target}}(t) - 2 \ge \frac{1}{m+2} \sum_{j=1}^m t_j + \frac{3}{m+2} t_{m+1} = \left( \frac{1}{m+2} \sum_{j=1}^m t_j + \frac{2}{m+2} t_{m+1} \right) + \frac{1}{m+2} t_{m+1}.
\]
The term in parentheses is exactly $T_{\text{target}}(t)$, and thus the inequality simplifies to 
\[
    3 \cdot T_{\text{target}}(t) - 2 \ge T_{\text{target}}(t) + \frac{1}{m+2} t_{m+1}.
\]
Subtracting $T_{\text{target}}(t)$ from both sides gives:
\[
    2 \cdot T_{\text{target}}(t) \ge 2 + \frac{1}{m+2} t_{m+1}.
\]
Since payments are non-negative, this implies that  $T_{\text{target}}(t) \ge 1$. 

To summarize, 
the target action's expected payment is at least $1$, while its expected reward is $2$, implying that the principal's expected utility from implementing the target action under payment function $t$ is at most $2 - 1 = 1$. This concludes Case 2b, and completes the proof of Lemma~\ref{lem:reduction_approx}.
\end{proof}

We remark that the proof of Lemma~\ref{lem:reduction_approx} also shows that approximating the payment of an implementable action to a factor better than $3/2$ is \textsf{NP}-hard. To see this observe that in the case where a perfect matching exists, the target action can be implemented with an expected payment of $2/3$, while in the case where no such matching exists the expected payment required to implement the target action is at least $1$.

\section{Conclusion and Future Work}
\label{sec:conlcusion}
In this work, we study the structure and computational complexity of succinct ambiguous contracts. We show that optimal succinct ambiguous contracts admit a perhaps surprising separability property: Up to leveling, they are comprised of $k$ classic contracts, for a suitable partition of the actions.
We use this insight to study the succinctness gap, defined as the worst-case loss the principal may experience from using a $k$-ambiguous rather than an unrestricted ambiguous contract. 
Finally, we address the question of identifying the optimal partition and show that this problem is computationally intractable by establishing hardness results for computing optimal 
$k$-ambiguous contracts.

Our work opens several promising directions for future research. First, despite the apparent complexity of determining how actions should be grouped, it would be highly interesting, albeit likely challenging, to uncover additional structural insights into optimal partitions. Such insights could potentially lead to efficient approximation algorithms or to stronger hardness-of-approximation results.
More broadly, it would be exciting to explore whether similar separability characterizations arise in other ambiguous design settings, such as mechanism design or Bayesian persuasion.

\bibliographystyle{apalike}
\bibliography{bibliography}

@inproceedings{AlonDLT23,
  author       = {Tal Alon and
                  Paul D\"utting and
                  Yingkai Li and
                  Inbal Talgam{-}Cohen},
  title        = {Bayesian Analysis of Linear Contracts},
  booktitle    = {Proceedings ACM EC 2023},
  pages        = {66},
  year         = {2023},
}

@inproceedings{AlonDT21,
  author    = {Tal Alon and
               Paul D{\"{u}}tting and
               Inbal Talgam{-}Cohen},
  title     = {Contracts with Private Cost per Unit-of-Effort},
  booktitle = {Proceedings ACM EC 2021},
  pages     = {52--69},
  year      = {2021}
}

@inproceedings{BabaioffFN06,
  author    = {Moshe Babaioff and Michal Feldman and Noam Nisan},
  title     = {Combinatorial agency},
  booktitle = {Proceedings ACM EC 2006},
  pages     = {18--28},
  year  = {2006}
}

@article{BNW12,
  author    = {Moshe Babaioff and
               Michal Feldman and
               Noam Nisan and
               Eyal Winter},
  title     = {Combinatorial agency},
  journal   = {Journal of Economic Theory},
  volume    = {147},
  number    = {3},
  pages     = {999--1034},
  year      = {2012}
}

@article{BeaucheneEtAl19,
  title     = {Ambiguous persuasion},
  journal   = {Journal of Economic Theory},
  volume    = {179},
  pages     = {312-365},
  year      = {2019},
  author    = {Dorian Beauchêne and Jian Li and Ming Li}
}

@article{BernheimW98,
  author    = {B. Douglas Bernheim and Michael D. Whinston},
  title     = {Incomplete Contracts and Strategic Ambiguity},
  journal   = {The American Economic Review},
  volume    = {88},
  number    = {4},
  pages     = {902-932},
  year      = {1998}
}

@article{BodohCreed12,
  title     = {Ambiguous beliefs and mechanism design},
  author    = {Aaron L. Bodoh-Creed},
  year      = {2012}, 
  journal   = {Games and
Economic Behavior}, 
  volume    = {75}, 
  pages     = {518–537}
}

@article{BoseOP06,
  author    = {Subir Bose and Emre Ozdenoren and Andreas Duus Pape},
  title     = {Optimal auctions with ambiguity},
  journal   = {Theoretical Economics},
  year      = {2006},
  volume    = {1},
  number    = {4},
  pages     = {411–438}
}

@article{BSRL14,
  author    = {Bose, Subir and Renou, Ludovic},
  title     = {Mechanism Design With Ambiguous Communication Devices},
  journal   = {Econometrica},
  volume    = {82},
  number    = {5},
  pages     = {1853-1872},
  year      = {2014}
}

@article{Carroll15,
  Author    = {Carroll, Gabriel},
  Title     = {Robustness and Linear Contracts},
  Journal   = {American Economic Review},
  Volume    = {105},
  Number    = {2},
  Year      = {2015},
  Pages     = {536-63}
}

@article{CarrollW22,
  author    = {Gabriell Carroll and Daniel Walton},
  title     = {A General Framework for Robust Contracting Models},
  journal   = {Econometrica},
  volume    = {90},
  number    = {5},
  pages     = {2129–2159},
  year      = {2022} 
}

@inproceedings{CastiglioniCLXZ25,
  author       = {Matteo Castiglioni and
                  Junjie Chen and
                  Minming Li and
                  Haifeng Xu and
                  Song Zuo},
  title        = {A Reduction from Multi-Parameter to Single-Parameter Bayesian Contract
                  Design},
  booktitle    = {Proceedings ACM-SIAM SODA 2025},
  pages        = {1795--1836},
  year         = {2025},
}

@inproceedings{CastiglioniM021,
  author       = {Matteo Castiglioni and
                  Alberto Marchesi and
                  Nicola Gatti},
  title        = {Bayesian Agency: Linear versus Tractable Contracts},
  booktitle    = {Proceedings ACM EC 2021},
  pages        = {285--286},
  year         = {2021},
}

@inproceedings{CastiglioniM022,
  author       = {Matteo Castiglioni and
                  Alberto Marchesi and
                  Nicola Gatti},
  title        = {Designing Menus of Contracts Efficiently: The Power of Randomization},
  booktitle    = {Proceedings ACM EC 2022},
  pages        = {705--735},
  year         = {2022},
}

@inproceedings{CastiglioniM023,
  author       = {Matteo Castiglioni and
                  Alberto Marchesi and
                  Nicola Gatti},
  title        = {Multi-Agent Contract Design: How to Commission Multiple Agents with
                  Individual Outcomes},
  booktitle    = {Proceedings ACM EC 2023},
  pages        = {412--448},
  year         = {2023},
}

@article{cheng2020ambiguous,
  title     = {Ambiguous Persuasion: An Ex-Ante Formulation},
  author    = {Cheng, Xiaoyu},
  journal   = {Games and Economic Behavior}, 
  volume    = {154}, 
  pages     = {149-158},
  year      = {2025}
}

@article{DaiT22,
  author    = {Tianjiao Dai and Juuso Toikka},
  title     = {Robust Incentives for Teams},
  journal   = {Econometrica},
  volume    = {90},
  number    = {4},
  pages     = {1583-1613},
  year      = {2022}
}

@article{DiTillioEtAl17,
  author    = {Alfredo {Di Tillio} and Nenad Kos and Matthias Messner},
  title     = {The Design of Ambiguous Mechanisms},
  journal   = {The Review of Economic Studies},
  volume    = {84},
  number    = {1},
  year      = {2017},
  pages     = {237--276}
}

@inproceedings{DuttingEFK21,
  author       = {Paul D{\"{u}}tting and
                  Tomer Ezra and
                  Michal Feldman and
                  Thomas Kesselheim},
  title        = {Combinatorial Contracts},
  booktitle    = {Proceedings IEEE FOCS 2021},
  pages        = {815--826},
  year         = {2021},
}

@inproceedings{DuttingEFK23,
  author       = {Paul D{\"{u}}tting and
                  Tomer Ezra and
                  Michal Feldman and
                  Thomas Kesselheim},
  title        = {Multi-agent Contracts},
  booktitle    = {Proceedings ACM STOC 2023},
  pages        = {1311--1324},
  year         = {2023},
}

@inproceedings{DuttingEFK25,
  author       = {Paul D{\"{u}}tting and
                  Tomer Ezra and
                  Michal Feldman and
                  Thomas Kesselheim},
  title        = {Multi-Agent Combinatorial Contracts},
  booktitle    = {Proceedings ACM-SIAM SODA 2025},
  pages        = {1857--1891},
  year         = {2025},
}

@article{DuettingFPS24,
  author       = {Paul D\"utting and
                      Michal Feldman and Larry Samuelson and
                      Daniel Peretz},
  title        = {Ambiguous Contracts},
  journal      = {Econometrica},
  pages = {1967-1992},
  volume = {92},
  number = {6},
  note = {Full version available from: \url{https://arxiv.org/abs/2302.07621}. An extended abstract appeared in EC 2023},
  year = {2024}
}

@article{DuttingFT24,
  author       = {Paul D{\"{u}}tting and
                  Michal Feldman and
                  Inbal Talgam{-}Cohen},
  title        = {Algorithmic Contract Theory: {A} Survey},
  journal      = {Foundations and Trends in Theoretical Computer Science},
  volume       = {16},
  number       = {3-4},
  pages        = {211--412},
  year         = {2024},
}

@inproceedings{DuttingFGT24,
  author       = {Paul D{\"{u}}tting and
                  Michal Feldman and
                  Yoav Gal Tzur},
  title        = {Combinatorial Contracts Beyond Gross Substitutes},
  booktitle    = {Proceedings ACM-SIAM SODA 2024},
  pages        = {92--108},
  year         = {2024},
}

@inproceedings{DuttingRT19,
  author    = {Paul D{\"{u}}tting and
               Tim Roughgarden and
               Inbal Talgam{-}Cohen},
  title     = {Simple versus Optimal Contracts},
  booktitle = {Proceedings ACM EC 2019},
  pages     = {369--387},
  year      = {2019},
  note         = {Full version available from: \url{https://arxiv.org/abs/1808.03713}}}

@article{DuttingRT21,
  author       = {Paul D{\"{u}}tting and
                  Tim Roughgarden and
                  Inbal Talgam{-}Cohen},
  title        = {The Complexity of Contracts},
  journal      = {{SIAM} Journal on Computing},
  volume       = {50},
  number       = {1},
  pages        = {211--254},
  year         = {2021},
  note         = {An extended abstract appeared in SODA 2020.}
}

@inproceedings{EzraFS24,
  author       = {Tomer Ezra and
                  Michal Feldman and
                  Maya Schlesinger},
  title        = {On the (In)approximability of Combinatorial Contracts},
  booktitle    = {Proceedings ITCS 2024},
  pages        = {44:1--44:22},
  year         = {2024},
}

@inproceedings{Feldman26,
  author       = {Michal Feldman},
  title        = {Combinatorial Contract Design: Recent Progress and Emerging Frontiers},
  booktitle      = {Proc.~ICM (International Congress of Mathematicians)},
  year         = {2026},
  note = {To appear. Full version available from: \url{https://arxiv.org/abs/2510.15065}}
}

@book{GareyJohnson79,
  author    = {Garey, Michael R. and Johnson, David S.},
  title     = {Computers and Intractability; A Guide to the Theory of NP-Completeness},
  year      = {1979},
  publisher = {W. H. Freeman \& Co.},
  address   = {New York, NY, USA}
}

@article{GilboaS93,
  author  = {Itzhak Gilboa and David Schmeidler},
  title   = {Updating ambiguous beliefs},
  journal = {Journal of Economic Theory}, 
  volume  = {59},
  number   = {1},
  pages   = {33--49}, 
  year    = {1993}
}

@article{GrahamLLK79,
title = {Optimization and approximation in deterministic sequencing and scheduling : {A} survey},
author = {R. L. Graham and E. L. Lawler and J. K. Lenstra and A. H. G. {Rinnooy Kan}},
year = {1979},
volume = {5},
pages = {287--326},
journal = {Annals of Discrete Mathematics},
}

@article{GrossmanHart83,
 author         = {Sanford J. Grossman and Oliver D. Hart},
 journal    = {Econometrica},
 number     = {1},
 pages      = {7--45},
 title      = {An Analysis of the Principal-Agent Problem},
 volume     = {51},
 year       = {1983}
}

@inproceedings{GuruganeshSW21,
  author       = {Guru Guruganesh and
                  Jon Schneider and
                  Joshua R. Wang},
  title        = {Contracts under Moral Hazard and Adverse Selection},
  booktitle    = {Proceedings ACM EC 2021},
  pages        = {563--582},
  year         = {2021},
}

@article{HermalinK91,
  author    = {Benjamin E. Hermalin and Michael L. Katz},
  title     = {Moral Hazard and Verifiability: {T}he Effects of Renegotiation in Agency}, journal   = {Econometrica},
  year      = {1991}, 
  volume    = {59},
  number    = {6}, 
  pages     = {1735–1753}
}

@article{Holmstrom79,
 author     = {Bengt Holmstr\"om},
 journal    = {The Bell Journal of Economics},
 number     = {1},
 pages      = {74--91},
 title      = {Moral Hazard and Observability},
 urldate    = {2022-12-29},
 volume     = {10},
 year       = {1979}
}

@article{Kambhampati23,
  title     = {Randomization is optimal in the robust principal-agent problem},
  journal   = {Journal of Economic Theory},
  volume    = {207},
  pages     = {105585},
  year      = {2023},
  author    = {Ashwin Kambhampati}
}

@book{LaffontM09,
	title      = {The Theory of Incentives: The Principal-Agent Model},
	author     = {Jean-Jacques Laffont and David Martimort},
	year       = {2009},
	publisher  = {Princeton University Press},
    address    = {Princeton, New Jersey, USA}
}

@article{lopomo2011knightian,
  title     = {Knightian uncertainty and moral hazard},
  author    = {Lopomo, Giuseppe and Rigotti, Luca and Shannon, Chris},
  journal   = {Journal of Economic Theory},
  volume    = {146},
  number    = {3},
  pages     = {1148--1172},
  year      = {2011},
  publisher = {Elsevier}
}

@article{Mirrlees99,
  author    = {James A. Mirrlees},
  year      = {1975},
  title     = {The Theory of Moral Hazard and Unobservable Behaviour: Part {I}.},
  journal   = {Mimeo, Oxford},  
  note      = {(Published in \emph{Review of Economic Studies}, 66, 1 (1999), 3--21.)}
}

@inproceedings{BoTang24,
  author = {Bo Peng and Zhihao Gavin Tang} ,
  title = {Optimal Robust Contract Design},
  booktitle = {Proceedings ACM EC 2024},
  year = {2024},
  pages = {1294}
}

@article{Ross73,
  author    = {Ross, Stephen A},
  title     = {The Economic Theory of Agency: The Principal’s Problem},
  journal   = {The American Economic Review}, 
  volume    = {63}, 
  number    = {2}, 
  year      = {1973}, 
  pages     = {134--139}
}

@article{Schmeidler89,
  author  = {David Schmeidler},
  title   = {Subjective probability and expected utility without additivity},
  journal = {Econometrica}, 
  volume  = {57},
  number   = {3},
  pages   = {571--587}, 
  year    = {1989}
}

@misc{NobelPrize16,
  author        = {{The Royal Swedish Academy of Sciences}},
  title         = {{The Sveriges Riksbank Prize in Economic Sciences in Memory of Alfred Nobel 2016: Press Release}},
  howpublished = {Retrieved from: \url{https://www.nobelprize.org/prizes/economic-sciences/2016/press-release/}},
  note          = {Accessed: 2023-12-08},
  year          = {2016}}

\appendix

\section{Disclosure of AI and LLM Usage}
During the preparation of this work, we utilized 
Gemini DeepThink to 
proofread the text, verify formal proofs, and assist in extending the $\textsf{NP}$-hardness proof of the optimal $k$-ambiguous contract problem to its monotone counterpart.

\section{Proofs Omitted from Section~\ref{sec:model}}\label{app:model}

\begin{proof}[Proof of Lemma~\ref{lem:principal-utility}]
Consider 
an ambiguous contract $\ambcontract$ with $\tau = k$ such that $i$ is a best response to $\tau$.
Let the payment functions in $\tau$ be numbered so that 
\[
    \UPmid{i}{t^1}= \max_{t\in \tau} \UPmid{i}{t}.  
\]
Suppose 
\[
    \UPmid{i}{t^1} > \UPmid{i}{t^2}.
\]
Then it must be that 
\[
    \sum_{j=1}^mp_{ij}t^1_j<\sum_{j=1}^mp_{ij}t^2_j.
\]
Let $\theta\in[0,1)$ satisfy
\[
    \sum_{j=1}^mp_{ij} \theta t^2_j=\sum_{j=1}^mp_{ij}t^1_j,
\]
and consider the ambiguous contract $\ambcontractprime = \langle\{ t^1, \theta t^2,\ldots, t^k\},i\rangle$, constructed from $\ambcontract$ by replacing payment function $t^2$ with $\theta t^2$. Note that $|\tau'| = k$. For action $i$ and any action $i' \in [n]$ we have,
\begin{eqnarray*}
    \UAmid{i}{\theta t^2} &= &\UAmid{i}{t^1}~~~~~~~~<~~ \UAmid{i}{t^2} \qquad\text{and}\\
    \UAmid{i'}{\theta t^2} &=& \theta \Payment{i'}{t^2} - c_{i'} ~~~~\le~~\UAmid{i'}{t^2},
\end{eqnarray*}
which imply
\begin{eqnarray*}
    \min_{t\in \tau'}\UAmid{i}{t} &=&\min_{t\in \tau}\UAmid{i}{t} ~~~\quad\text{and}\\
    \min_{t\in \tau'}\UAmid{i'}{t} &\le&\min_{t\in \tau}\UAmid{i'}{t}~~~~~\forall i'\in[n],
\end{eqnarray*}
which establishes that action $i$ is a best response to $\tau'$.
Moreover, we have $\UPmid{i}{t^1} = \UPmid{i}{\theta t^2} > \UPmid{i}{t^2}$. 
Applying a similar argument to payment functions $t^3,\ldots, t^k$ yields the result.   
\end{proof}

\section{Proofs Omitted from Section~\ref{sec:alg}}\label{app:sec_alg}

\begin{proof}[Proof of Lemma~\ref{lem:consistent}]
We begin by showing that $\langle \tau, i\rangle$ is incentive compatible. We first show that $\langle \tau, i\rangle$ is consistent. To this end, observe that the expected payment for action $i$ under any payment function $\bar{t}^j \in \tau$ is
\begin{align}
    \bar{T}_i^j=\sum_{\ell \in [m]} p_{i\ell}\bar{t}^j_\ell =\sum_{\ell \in [m]} p_{i\ell}(t^j + (\theta - T^j_i))_\ell = \theta - T^j_i + \sum_{\ell \in [m]} p_{i\ell}t^j_\ell = \theta. \label{eq:cons}
\end{align}
    
Next we show that action $i$ is a best response to $\tau$. 
Fix an action $i'\ne i$, and let $j$ be the index such that $i' \in S^j$
We have
\begin{align}
    \UAmid{i'}{\bar{t}^j} &= \sum_{\ell \in [m]} p_{i'\ell}\bar{t}^j_\ell -c_{i'}= \sum_{\ell \in [m]} p_{i'\ell}(t^j_\ell + (\theta - T^j_i))-c_{i'} = T^j_{i'} -c_{i'} + \theta - T^j_i, \quad\text{and} \label{eq:util-A1}\\
    \UAmid{i}{\bar{t}^j} &= \bar{T}^j_i - c_i = \theta - c_i, \label{eq:util-A2}
\end{align}
where the last equality holds by \eqref{eq:cons}.
    
On the other hand, since $t^j$ protects action $i$ against the set of actions $S^j$, we have that
\begin{equation}
    T^j_i - c_i = \UAmid{i}{t^j} \geq \UAmid{i'}{t^j} = T^j_{i'} - c_{i'}.
    \label{eq:1111}
\end{equation}

It follows that
\[
    \UAmid{i}{\tau} - \UAmid{i'}{\tau} \geq \UAmid{i}{\bar{t}^j} - \UAmid{i'}{\bar{t}^j} = (T^j_i - c_i)  - (T^j_{i'} - c_{i'})\geq 0.
\]
For the first inequality, we use that 
$\UAmid{i}{\tau} = \UAmid{i}{\bar{t}^j}$, by the consistency of $\tau$, and that $\UAmid{i'}{\tau} \leq \UAmid{i'}{\bar{t}^j}$ since $\bar{t}^j\in \tau$. 
The equality holds by equations~\eqref{eq:util-A1} and \eqref{eq:util-A2}. The final inequality holds by \eqref{eq:1111}.

For the claim regarding the principal's utility note that for any $\bar{t}^j \in \tau$ it holds that
\[
    \UPmid{i}{\bar{t}^j} = R_i - \theta = R_i - \max\{ T_i^1, \ldots, T_i^k\} = \min_{j \in [k]} \{R_i - T_i^j\} = \min_{j \in [k]}\{\UPmid{i}{t^j}\},
\]
where the first equality follows by \eqref{eq:cons}.
\end{proof}

\begin{proof}[Proof of Lemma~\ref{lem:alg1-opt}] 
Consider first the case where Algorithm~\ref{alg:AC_S} returns an ambiguous contract $\langle \tau, i \rangle$. By construction, the collection of payment functions $\{t^1, \ldots, t^k\}$ determined by the algorithm in the for-loop protects action $i$ against partition $\mathcal{S}$. So, by Lemma~\ref{lem:consistent}, the ambiguous contract $\langle \tau, i \rangle$ returned by Algorithm~\ref{alg:AC_S} 
is a (consistent) $k$-ambiguous contract that implements action $i$. 
Moreover, with $\theta = \max_{j \in [k]} T_i^j$ defined as in Algorithm~\ref{alg:shifting}, we have $\tau = \{\bar{t}^1, \ldots, \bar{t}^k\}$, where for each $\bar{t}^j$ it holds that $\bar{t}^j = t^j + (\theta - T_i^j) \cdot \vec{1} = t^j + c^{j} \cdot \vec{1}$. So each $\bar{t}^j \in \tau$ corresponds to a shifted min-pay contract $\langle \bar{t}^j,i \rangle$ for the subinstance $\mathcal{I}_{\{i\} \cup S^j}$ obtained by restricting the original instance to actions $\{i\} \cup S^j$.
    
It remains to show that $\langle \tau,i \rangle$ is an \emph{optimal} IC $k$-ambiguous contract with respect to partition $\mathcal{S}$. To this end, consider an arbitrary IC $k$-ambiguous contract $\langle\tauhatsizek,i\rangle$ such that $\tauhatsizek$ protects action $i$ against partition $\mathcal{S}$. We show that $\UPmid{i}{\hat{\tau}} \leq \UPmid{i}{\tau}$. 
Let $j^\star \in \argmax_{j\in [k]} T_i^j$.
For this $j^\star$, it holds that $\theta = T_i^{j^\star}$, and thus
$\bar{t}^{j^\star} = t^{j^\star} + (\theta - T^{j^\star}_i) \cdot \vec{1} = t^{j^\star}$.  It follows that $\bar{t}^{j^\star} = t^{j^\star}$ is an optimal solution to \textsf{MIN-PAY-LP}$(i)$ on the subinstance $\mathcal{I}_{\{i\} \cup S^{j^\star}}$.  
Since $\hat{t}^{j^\star}$ is also a feasible solution to \textsf{MIN-PAY-LP}$(i)$ on the subinstance $\mathcal{I}_{\{i\} \cup S^{j^\star}}$,
it holds that
$
    \UPmid{i}{\hat{t}^{j^\star}} \le \UPmid{i}{t^{j^\star}}.
$
Combined with the fact that both $\tau$ and $\hat\tau$ are consistent, we get:
\[
    \UPmid{i}{\hat{\tau}} = 
    \UPmid{i}{\hat{t}^{j^\star}} \le \UPmid{i}{t^{j^\star}} = \UPmid{i}{\tau}.
\]

Consider next the case where Algorithm~\ref{alg:AC_S} returns $\textsf{Null}$.
The algorithm returns $\textsf{Null}$ only if there exists $j\in[k]$ for which running \textsf{MIN-PAY-LP}$(i)$ on the subinstance $\mathcal{I}_{\{i\} \cup S^j}$ returns $\textsf{Null}$.
But in this case, there is no payment function that protects action $i$ against $S^j$, concluding the argument.
\end{proof}

\begin{proof}[Proof of Proposition \ref{prop:implementable-with-k-ambiguous}]
$\Rightarrow$: Assume action $i$ is implementable by a $k$-ambiguous contract, and let 
$\langle \tausizek,i\rangle$ 
be a contract of size $k$ implementing action $i$. Define the following sets:
\[
    S^1 = \{i'\in \nminusi \mid \UAmid{i'}{t^1} \le \UAmid{i}{t^1}\}
\]
\[
    S^2 = \{i'\in \left(\nminusi \right) \setminus S^1 \mid \UAmid{i'}{t^2} \le \UAmid{i}{t^2}\}
\]
\vspace*{-0.5cm}
\[
    \vdots
\]
\vspace*{-0.5cm}
\[
    S^k = \left\{ i'\in \left(\nminusi \right)\setminus \bigcup_{l \in [k-1]} S^l \bigg| \UAmid{i'}{t^k} \le \UAmid{i}{t^k} \right\}
\]
and let $\mathcal{S}=\{S^1,S^2,\ldots,S^k\}$. By design, for all $j \in [k]$, it holds that $t^j$ protects action $i$ against $S^j$. 
Therefore, $t^j$ is a solution, not necessarily optimal, to \textsf{MIN-PAY-LP}$(i)$
in the subinstance $\mathcal{I}_{\{i\} \cup S^j}$, and in particular, this LP is feasible. From Proposition \ref{prop:implementable}, it holds that there does not exist a convex combination $\lambda_{i'} \in [0,1]$ of the actions $i' \in S^j$ that yields the same distribution over rewards $\sum_{i' \in S^j}\lambda_{i'} p_{i'j} = p_{ij}$ for all $j$ but at a strictly lower cost $\sum_{i'\in S^j} \lambda_{i'} c_{i'} < c_i$.

$\Leftarrow$: Assume 
there exists a partition $\mathcal{S}$ of $[n]\setminus\{i\}$ to $k$ sets, such that for every set $S\in \mathcal{S}$, 
action $i$ is implementable by a classic contract in $\mathcal{I}_{S\cup\{i\}}$. In this case, let $\{\hat{t}^1,\ldots, \hat{t}^k\}$ 
be a collection of payment function corresponding to solutions of \textsf{MIN-PAY-LP}$(i)$ in the subinstances $\mathcal{I}_{\{i\} \cup S^1},\ldots,\mathcal{I}_{\{i\} \cup S^k}$.
By running Algorithm \ref{alg:shifting} with the input $(i, \{\hat{t}^1,\ldots, \hat{t}^k\})$, we obtain an IC ambiguous contract $\langle \tauhatsizek, i \rangle$ 
of size $k$ implementing action $i$, as needed.
\end{proof}

\section{Additive vs.~Multiplicative Shifts}\label{app:additive-vs-multiplicative}

\begin{figure}[t]
    \centering
    \begin{tabular}{|c|c|c|c|c|c|c|c|c|c|}
    \hline
        \rule{0pt}{3ex} \hspace{2.0mm} rewards:  
        \hspace{2.0mm}&\hspace{2.0mm} $r_1 = 0$ \hspace{2.0mm}&\hspace{2.0mm} $r_2 = 33$ \hspace{2.0mm}&\hspace{2.0mm} $r_3 = 66$ \hspace{2.0mm}&\hspace{2.0mm} $r_4 = 124 $
        \hspace{2.0mm}&\hspace{2.0mm} costs \hspace{2.0mm} \\[1ex] \hline
        \rule{0pt}{3ex}  action $1$: & $1$ & $0$ & $0$ & $0$ & $c_1 = 0$  \\[1ex]
        \rule{0pt}{3ex}  action $2$: & $1/3$ & $5/12$ & $1/12$ & $1/6$ &  $c_2 = 0$  \\[1ex] 
        \rule{0pt}{3ex}  action $3$ & $5/12$ & $1/4$ & $1/6$ & $1/6$ & $c_3 = 0$   \\[1ex]
        \rule{0pt}{3ex}  action $4$ & $1/3$ & $1/3$ & $1/6$ & $1/6$ &  $c_4 = 2/3$  \\[1ex] 
        \rule{0pt}{3ex}  action $5$: & $0$ & $3/4$ & $1/4$ & $0$ & $c_5 = 4/3$  \\[1ex]
          \hline
    \end{tabular}
    \caption{Example demonstrating the necessity of additive rather than multiplicative shifts.}
    \label{fig:additive-vs-multiplicative}
\end{figure}

We present and analyze 
an example, which shows that for general $k$ we need to consider \emph{additive} shifts to the min-pay contracts rather than \emph{multiplicative} ones. Consider the instance with $n = 5$ actions and $m = 4$ outcomes in Figure~\ref{fig:additive-vs-multiplicative}. We want to identify an optimal $k$-ambiguous contract with $k = 2$ payment functions.

We claim that the optimal $2$-ambiguous contract in this example implements action $4$. To see this, first observe that the welfare of action $1$ is $\Welfare_1 = 0$, while the welfare of actions $2,3,5$ is $\Wel_2 = \Wel_3 = \Wel_5 = 479/12 \approx 39.9167$. So the principal's utility from any of these actions is at most $479/12 \approx 39.9167$.
We next derive the optimal $2$-ambiguous contract for action $4$. As established in Section~\ref{sec:alg}, we can find such a contract by considering all partitions of the actions $\{1,2,3,5\}$ into two non-empty sets, solving for the optimal classic contract in each, and balancing payments by applying an additive shift that equalizes expected payments:

\begin{itemize}
    \item Partitions $\{\{1,2,5\},\{3\}\}$ and $\{\{2,5\},\{1,3\}\}$.
    The best way to protect action $4$ against actions $\{1,2,5\}$ or $\{2,5\}$ is via $t^1 = (0,0,8,0)$ with an expected payment $T_4(t^1) = 8/6 = 4/3$, and the best way to protect action $4$ against action $\{3\}$ or actions $\{1,3\}$ is via $t^2 = (0,8,0,0)$ with an expected payment $T_4(t^2) = 8/3.$ Applying the additive balancing routine we obtain $\tau = \{t^1 + (4/3, 4/3, 4/3, 4/3),t^2\}$.
    This (consistent) ambiguous contract implements action $4$, with a principal's utility of $R_4 - 8/3 = 40.$
    \item Partitions $\{\{1,2,3\},\{5\}\}$ and $\{\{2,3\},\{1,5\}\}$ and $\{\{1\},\{2,3,5\}\}$. The best way to protect action $4$ against $\{2,3\}$ is via $t = (0,8,16,0)$ with an expected payment $T_4(t) = 16/3$. The expected payment for protecting action $4$ against $\{1,2,3\}$ and $\{2,3,5\}$ can only be higher. Since $16/3 > 8/3$ the principal's utility from implementing action $4$ via the partitions considered here is strictly less than $40$.
    \item Partitions $\{\{2\},\{1,3,5\}\}$ and $\{\{1,2\},\{3,5\}\}$. The best way to protect action $4$ against $\{3,5\}$ or $\{1,3,5\}$ is via $t = (0, 18, 0, 6)$ with an expected payment of $T_4(t) = 16/3$. Hence, the principal's utility from implementing action $4$ considered in this case is again strictly less than $40$. 
\end{itemize}

We conclude that the optimal ambiguous contract for implementing action $4$ is obtained via the partitions $\{\{1,2,5\},\{3\}\}$ or $\{\{2,5\},\{1,3\}\}$ and yields a principal's utility of $40$. 

Finally, observe that in order to balance the optimal classic contracts for the respective sets, $t^1 = (0,0,8,0)$ and $t^2 = (0,8,0,0)$ in a multiplicative rather than additive manner, we would have to scale $t^1$ to $(0,0,16,0)$.  However, the resulting ambiguous contract $\langle \tau',4\rangle$ with $\tau' = \{(0,0,16,0), (0,8,0,0)\}$ would no longer be IC. This is because under $\tau'$ action $5$ would yield the agent a strictly higher utility than action $4$. Namely,
\begin{align*}
    U_A(5 \mid \tau') &= \min\{U_A(5 \mid (0,0,16,0)),U_A(5 \mid (0,8,0,0))\} = \min\left\{\frac{1}{4} \cdot 16,\frac{3}{4}\cdot 8\right\} - \frac{4}{3} = \frac{8}{3}, \text{while}\\
    U_A(4 \mid \tau') &= \min\{U_A(4 \mid (0,0,16,0)),U_A(4\mid (0,8,0,0))\} = 8/3 - 2/3 = 2,
\end{align*}
and so $U_A(4 \mid \tau') < U_A(5 \mid \tau')$ in violation of IC.

\section{Proofs Omitted from Section~\ref{sec:gap}}\label{app:sec_gap}

\begin{figure}[t]
\begin{center}
\begin{tabular}{|c|cccccc|c|}
\hline
\text{rewards:}& $r_1 $ & $r_2 $ & $\ldots$ & $r_k $ & $r_{k+1} $ & $r_{k+2}$  &  \\[0ex] 
& $ 0$ & $0$ &  & $ 0$ & $0$ & $ \frac{1}{\gamma^{\Tilde{n}-2}}$  & \text{costs} \\[0ex] \hline
action $1$: &$\frac{1}{k}$&$\frac{1}{k}$&$\ldots$&$\frac{1}{k}$&$0$&$0$&$c_1=0$\\
action $i_1$: &$0$&$2\delta$&$\ldots$&$2\delta$& $1-\gamma^{\Tilde{n}-i}-(2k-2)\delta$ & $\gamma^{\Tilde{n}-i}$ & $c_{i_1} = \frac{1}{\gamma^{i-2}}-(i-1)+(i-2)\gamma$\\
$2 \leq i \leq \Tilde{n}-1$ &&&&&&&\\
action $i_2$: &$2\delta$&$0$&$\ldots$&$2\delta$& $1-\gamma^{\Tilde{n}-i}-(2k-2)\delta$  & $\gamma^{\Tilde{n}-i}$ & $c_{i_2} = \frac{1}{\gamma^{i-2}}-(i-1)+(i-2)\gamma$\\
$2 \leq i \leq \Tilde{n}-1$ &&&&&&&\\
$\vdots$&$\vdots$&$\vdots$&$\ddots$&$\vdots$&$\vdots$&$\vdots$&$\vdots$\\
action $i_k$: &$2\delta$&$2\delta$&$\ldots$&$0$& $1-\gamma^{\Tilde{n}-i}-(2k-2)\delta$  & $\gamma^{\Tilde{n}-i}$ & $c_{i_k} = \frac{1}{\gamma^{i-2}}-(i-1)+(i-2)\gamma$\\
$2 \leq i \leq \Tilde{n}-1$ &&&&&&&\\
action $\Tilde{n}$: &$\delta$&$\delta$&$\ldots$&$\delta$& $0$ & $1-k\delta$ & $c_{\Tilde{n}} = \frac{1}{\gamma^{\Tilde{n}-2}}-(\Tilde{n}-1)+(\Tilde{n}-2)\gamma$ 
\\[0ex] \hline
\end{tabular}
\end{center}

\caption{Instance $(c,r,p)$ used in the proof of Theorem~\ref{thm:kovern-new}.}
\label{fig:gap-general_k}
\end{figure}

\begin{proof}[Proof of Theorem \ref{thm:kovern-new}]
Consider any $n,k$ such that $n \geq 3$ and $1 \leq k \leq n-2$. Let $n' = \lfloor \frac{n-2}{k}\rfloor \cdot k +2$. Observe that $n' \leq \frac{n-2}{k}\cdot k +2 = n$. Let $\tilde{n} = \lfloor \frac{n-2}{k} \rfloor + 2$. Observe that $\tilde{n} \geq 3$ since $k \leq n-2$. 
Let $\gamma,\epsilon \in (0,\frac{1}{2})$ and let $\delta = \frac{1}{k} \cdot \epsilon \cdot \gamma^{\tilde{n}-2}$. Note that $\delta \in (0,\frac{1}{4k})$ since $\tilde{n} \geq 3$.

Below we will analyze the instance $\mathcal{I}_{n'}$ with $n' = \lfloor \frac{n-2}{k}\rfloor \cdot k +2 = (\tilde{n}-2) \cdot k + 2$ actions and $m = k+2$ outcomes given in Figure~\ref{fig:gap-general_k}, and we will show that $\rho_k(\mathcal{I}_{n'}) \leq \frac{k}{n' + k - 2}$. The claim then follows from considering the instance $\mathcal{I}_n$ which is identical to $\mathcal{I}_{n'}$ except that it has $n-n'$ dummy actions (e.g., copies of action 1), which brings the total number of actions to $n$. Since the dummy actions do not affect the arguments for the succinctness gap, we will be able to conclude that $\rho_k(\mathcal{I}_{n}) \leq \frac{k}{n' + k - 2}= \frac{1}{\lfloor \frac{n-2}{k} \rfloor + 1}$ as claimed.

\medskip\noindent
\textbf{The hard instance.} We first describe the instance $\mathcal{I}_{n'}$. The rewards of the $m = k+2$ outcomes are $r_j = 0$ for $j \in [k+1]$ and $r_{k+2} = \frac{1}{\gamma^{\tilde{n}-2}}$. 
Action $1$ is a zero-cost and zero-reward action. It puts a probability of $\frac{1}{k}$ on each of the first $k$ outcomes, and a probability of zero on outcomes $k+1$ and $k+2$. Then there are $k$ groups of actions $A_j = \{i_j \mid 2 \leq i \leq \Tilde{n}-1\}$ such that action $i_j$ has a cost of $c_{i_j} = \frac{1}{\gamma^{i-2}}-(i-1)+(i-2)\gamma$ (note that the cost only depends on $i$) and puts a probability of $2\delta$ on outcomes in $[k]\setminus \{j\}$, a probability of $1-\gamma^{\Tilde{n}-i}-(2k-2)\delta$ on outcome $k+1$, and a probability of $\gamma^{\Tilde{n}-i}$ on outcome $k+2$. 
The final action $\Tilde{n}$ will be our target action. That action has a cost of $c_{\Tilde{n}} = \frac{1}{\gamma^{\Tilde{n}-2}}-(\Tilde{n}-1)+(\Tilde{n}-2)\gamma$ and puts a probability $\delta$ on outcomes in $[k]$ and a probability of $1-k\delta$ on outcome $k+2$.

To see that this is a valid construction, first observe that for any $i \geq 2$ it holds that  $\frac{1}{\gamma^{i-2}} - (i-1) + (i-2)\gamma \geq 0$ (by the AM-GM inequality), implying that all costs $c_{i_j}$ for $2 \leq i \leq \tilde{n}-1$ and $1 \leq j \leq k$ and $c_{\tilde{n}}$ are non-negative.
It remains to verify that all quantities that we referred to as probabilities are well defined. To this end, first observe that the probabilities of all actions indeed sum up to $1$ as required. Additionally, we clearly have that $\frac{1}{k}, 2\delta, \delta \in [0,1]$ and that for all $i$ such that $2 \leq i \leq \tilde{n}-1$ it holds that $\gamma^{\tilde{n}-i} \in [0,1]$. We also have $1-k\delta \in [0,1]$ since $\delta \leq \frac{1}{k}.$ Finally, observe that for  all $i$ such that $2 \leq i \leq \tilde{n}-1$ it holds that $1-\gamma^{\Tilde{n}-i}-(2k-2)\delta \leq 1$. 
It remains to show that for any such $i$, it holds that $1-\gamma^{\Tilde{n}-i}-(2k-2)\delta \geq 0$. We split the analysis into two cases based on the value of $k$:

\begin{itemize}
    \item If $k=1$, the term $(2k-2)\delta$ evaluates to exactly $0$, and the condition simplifies to $1-\gamma^{\Tilde{n}-i} \geq 0$. Since $i \le \Tilde{n}-1$, the exponent satisfies $\Tilde{n}-i \ge 1$. Because we chose $\gamma \in (0, 1/2)$, it follows that $\gamma^{\Tilde{n}-i} \le \gamma < 1/2 < 1$. Thus, $1-\gamma^{\Tilde{n}-i} > 0$ holds trivially, and the probabilities are strictly positive without imposing any additional constraints on $\epsilon$.
    \item If $k \ge 2$, we can bound the expression by noting that $\gamma^{\Tilde{n}-i} \le \gamma$ (since $\Tilde{n}-i \ge 1$). Thus, to ensure the probability is non-negative, it suffices to show the stronger condition $1-\gamma \geq (2k-2)\delta$. Because $2k-2 > 0$ for $k \ge 2$, we can safely substitute $\delta = \frac{1}{k} \cdot \epsilon \cdot \gamma^{\Tilde{n}-2}$ and divide by $2k-2$. Rearranging the resulting inequality, we obtain that $\epsilon$ must satisfy $\epsilon \leq \frac{1-\gamma}{\gamma^{\Tilde{n}-2}} \cdot \frac{k}{2k-2}$. To see that this is indeed the case, observe that:
    \[
        \frac{1-\gamma}{\gamma^{\Tilde{n}-2}} \cdot \frac{k}{2k-2} \geq \frac{1-\gamma}{\gamma} \cdot \frac{k}{2k-2} \geq \frac{1}{2} > \epsilon,
    \]
    where the first inequality holds because the condition $1 \le k \le n-2$ guarantees $\Tilde{n} = \lfloor \frac{n-2}{k} \rfloor + 2 \ge 3$, meaning $\Tilde{n}-2 \ge 1$, and the function $\frac{1-\gamma}{\gamma^\kappa}$ is strictly increasing in $\kappa$; the second inequality holds because $\frac{1-\gamma}{\gamma} \ge 1$ (since $1-\gamma \ge \gamma$ for $\gamma \le 1/2$), and the function $\frac{k}{2k-2}$ is strictly decreasing for $k \ge 2$ and bounded below by its asymptote $1/2$ as $k \to \infty$; and the final strict inequality follows directly from our initial choice of $\epsilon \in (0, 1/2)$.
\end{itemize}

    
\medskip
\noindent \textbf{Analysis of the gap.} 
We will first show (in Claim 1) that with an ambiguous contract of size $k+1$, the principal can achieve a utility of $\Tilde{n}-1$ by implementing action $\Tilde{n}$. Then, in Claim 2, we will show that the principal's utility from any ambiguous contract of size at most $k$ that implements action $\Tilde{n}$ is at most $1$.
Finally, in Claim 3 we will show that the principal's utility under any ambiguous contract  that implements an action other than $\Tilde{n}$ is at most $1$. The bound on the succinctness gap follows since 
\[
    \frac{1}{\Tilde{n}-1} = \frac{k}{n'+k-2}.
\]

\medskip\noindent 
    
\textbf{Claim 1:} There exists an ambiguous contract of size $k+1$ that implements action $\tilde{n}$ and gives the principal a utility of $\tilde{n}-1$.

\medskip
To see that there exists an ambiguous contract of size $k+1$ that implements action $\tilde{n}$ and gives the principal a utility of $\tilde{n}-1$ consider the ambiguous contract $\langle \{t^1,t^2,\ldots,t^k,t^{k+1}\},\Tilde{n} \rangle$ 
with 
\begin{align*}
    &t^1 = (\frac{c_{\Tilde{n}}}{\delta},0,\ldots,0,0,0),\\
    &t^2 = (0,\frac{c_{\Tilde{n}}}{\delta},\ldots,0,0,0),\\
    &\ldots,\\
    &t^k =(0,0,\ldots,\frac{c_{\Tilde{n}}}{\delta},0,0 ), \qquad\quad\text{and}\\
    &t^{k+1} = (0,0,\ldots,0,0,\frac{c_{\Tilde{n}}}{1-k\delta}). 
\end{align*}

It is easy to verify that this contract is consistent, and entails an expected payment of $c_{\Tilde{n}}$ for action $\Tilde{n}$. It is IC, because for all actions $i \neq \Tilde{n}$ it gives a minimum payment of zero. 

The claim about the principal's utility follows, 
as the welfare from action $\tilde{n}$ is $W_{\Tilde{n}} = (\Tilde{n}-1) - (\Tilde{n}-2)\gamma - \epsilon$, and $W_{\Tilde{n}} \rightarrow \Tilde{n}-1$ as $\gamma, \epsilon \rightarrow 0$.
    
\medskip\noindent

\textbf{Claim 2:} 
The 
principal's utility from any ambiguous contract of size at most $k$ that implements action $\Tilde{n}$ is at most $1$.

\medskip
To prove this, we will show that the principal's utility for action $\Tilde{n}$ under any payment function $t$ that protects against two or more actions from the set of actions $\Gamma =\; \{ (\Tilde{n}-1)_1, (\Tilde{n}-1)_2, \ldots, (\Tilde{n}-1)_k, 1\}$ is at most $1$. This is sufficient because any ambiguous contract $\langle \tau, \tilde{n}\rangle$ of size $k$ has to have a payment function $t \in \tau$ that protects against two or more actions from $\Gamma$ (since $|\Gamma| = k+1$). Consider any such payment function $t$. 

\medskip

\noindent \textbf{Case 1.} \emph{Payment function $t$ protects action $\tilde{n}$ against action $1$ and action $(\tilde{n}-1)_i$ for some $i \in [k]$.}

\medskip

Since action $\Tilde{n}$ places zero probability on outcome $k+1$, we can without loss of generality assume $t_{k+1} = 0$. To obtain an upper bound on the principal's utility, we compare action $\Tilde{n}$ to action~$1$ and action $(\Tilde{n}-1)_i$. 
Since $t$ protects action $\Tilde{n}$ against action $1$, it follows that
\[
    \delta \cdot \sum_{i' \in[k]} t_{i'}  
    + (1-k\delta) \cdot t_{k+2} - c_{\Tilde{n}} \geq \frac{1}{k} \cdot 
    \sum_{i' \in[k]} t_{i'},
\]
or equivalently
\[
    t_{k+2} - \frac{1}{k} \cdot 
    \sum_{i' \in[k]} t_{i'}
    \geq \frac{c_{\Tilde{n}}}{1-k\delta}. 
\]
An important consequence of this is that 
\begin{equation}
    \label{eq:t4_t1_t2_k}
    t_{k+2} - \frac{1}{k} \cdot  
    \sum_{i' \in[k]} t_{i'}
    \geq 0. 
\end{equation}
On the other hand, since the agent does not want to deviate to action $(\Tilde{n}-1)_i$, it must hold that
\[
    \delta \cdot \sum_{i'\in[k]} t_{i'} 
    + (1-k\delta) \cdot t_{k+2} - c_{\Tilde{n}} 
    \geq  2 \delta \cdot \sum_{i' \in [k], i'\neq i} t_{i'} 
    + \gamma \cdot t_{k+2} - 
    c_{(\Tilde{n}-1)_i} \geq \gamma \cdot  t_{k+2} - 
    c_{(\Tilde{n}-1)_i}.
\]
It must therefore hold that
\begin{align*}
    \gamma \cdot t_{k+2} - 
    c_{(\Tilde{n}-1)_i} &\leq 
    \delta \cdot 
    \sum_{i'\in[k]} t_{i'}
    + (1-k\delta) t_{k+2} - c_{\Tilde{n}}\\
    &= t_{k+2} - c_{\Tilde{n}} -k\delta \cdot \left(t_{k+2}- \frac{1}{k} \cdot 
    \sum_{i'\in[k]} t_{i'}
    \right) \leq t_{k+2} - c_{\Tilde{n}},
\end{align*}
where the second inequality
holds by Equation~\eqref{eq:t4_t1_t2_k}.
Rearranging and substituting $c_{\Tilde{n}}$ and $c_{(\Tilde{n}-1)_i}$, this yields
\[
    t_{k+2} \geq \frac{1}{1-\gamma} \cdot (c_{\Tilde{n}}-c_{(\Tilde{n}-1)_i}) = \frac{1}{\gamma^{\Tilde{n}-2}} - 1.
\]
Hence, the principal's utility from action $\Tilde{n}$ under payment function $t$ 
is at most
\[
    (1-k\delta) \cdot (r_{k+2} - t_{k+2}) = (1-k\delta) \cdot \left(\frac{1}{\gamma^{\tilde{n}-2}} - t_{k+2}\right) \leq 1 - k\delta \leq 1,
\]
as claimed.

\medskip

\noindent\textbf{Case 2.} \emph{Payment function $t$ protects action $\tilde{n}$ against action $(\Tilde{n}-1)_i$ and action $(\Tilde{n}-1)_j$ for some $i,j \in [k]$ such that $i \neq j$.}

\medskip 
As in the previous case, we can assume without loss of generality that $t_{k+1} = 0$. 
Since $t$ protects action $\Tilde{n}$ against action $(\Tilde{n}-1)_i$, it holds that 
\[
    \delta \cdot 
    \sum_{i' \in [k]} t_{i'}
    + (1-k\delta) \cdot t_{k+2} - c_{\Tilde{n}} 
    \geq 2\delta \cdot 
    \sum_{i' \in [k], i' \neq i} t_{i'} + 
    \gamma \cdot t_{k+2} - c_{(\Tilde{n}-1)_i}.
\] 
It must therefore hold that
\begin{align*}
    \delta \cdot (t_i+t_j) 
    + (1-k\delta) \cdot t_{k+2} - c_{\Tilde{n}} 
    &\geq 2\delta \cdot t_j 
    + \delta \cdot \sum_{i' \in [k], i' \neq i,j} t_{i'}  
    + \gamma \cdot t_{k+2} - c_{(\Tilde{n}-1)_i} \\
    &\geq 2\delta \cdot t_{j} 
    + \gamma \cdot t_{k+2} - c_{(\Tilde{n}-1)_i},
\end{align*}
where we used that $\delta \geq 0$ and $t_{i'} \geq 0$ for all $i'$.
Similarly, since $t$ protects action $\Tilde{n}$ against action $(\Tilde{n}-1)_j$, it holds that
\[
    \delta \cdot (t_i+t_j)  
    + (1-k\delta) \cdot t_{k+2} - c_{\Tilde{n}} \geq  2\delta \cdot 
    t_i + \gamma \cdot t_{k+2}  - 
    c_{(\Tilde{n}-1)_j}
\]
By adding both inequalities, 
we obtain
\[
    2\delta \cdot (t_i+t_j) 
    + 2(1-k\delta) \cdot t_{k+2} - 2c_{\Tilde{n}} \geq 2\delta \cdot (t_i+t_j) 
    + 2\gamma \cdot t_{k+2}  - 2 \cdot c_{(\Tilde{n}-1)_i},
\]
where we used that $c_{(\Tilde{n}-1)_i} = c_{(\Tilde{n}-1)_j}$. 

Rearranging this inequality 
we obtain
\[
    \gamma \cdot t_{k+2}  - c_{(\Tilde{n}-1)_i} 
    \leq (1-k \delta) \cdot
    t_{k+2} - c_{\Tilde{n}} \leq t_{k+2} - c_{\Tilde{n}}.
\]
Solving for $t_{k+2}$ and substituting $c_{\Tilde{n}}$ and $c_{(\Tilde{n}-1)_i}$,  
we obtain
\[
    t_{k+2} \geq \frac{1}{1-\gamma} \cdot (c_{\Tilde{n}}-c_{(\Tilde{n}-1)_i}) = \frac{1}{\gamma^{\Tilde{n}-2}} - 1.
\]
Hence the principal's utility from action $\Tilde{n}$ 
is at most
\[
    (1-k\delta) \cdot (r_{k+2} - t_{k+2}) = (1-k\delta) \cdot \left(\frac{1}{\gamma^{\tilde{n}-2}} - t_{k+2}\right) \leq 1 - k\delta \leq 1,
\]
as claimed.

\medskip\noindent

\textbf{Claim 3.} For any ambiguous contract (of any size), that implements any action other than action $\Tilde{n}$, the principal's utility is upper bounded by $1$.

\medskip
Consider an IC ambiguous contract $\langle \tau, i^\star \rangle$ that implements action $i^\star \neq \Tilde{n}$. 
We distinguish between two cases.  
        
\medskip\noindent 
{\bf Case 1.} \emph{Contract $\langle \tau,i^\star \rangle$ implements action $1$, or an action of the form $2_j$ with $j \in [k]$.} 
    
\medskip 
These actions have welfare of at most $1$, implying that the principal's utility cannot exceed $1$. Indeed, for action $1$, the welfare is $\sum_{i' \in [k]} \frac{1}{k} \cdot r_{i'} - c_1 = 0$. On the other hand, for actions of the form $2_j$, the welfare is 
\[
    \sum_{i' \in [k], i' \neq j} 2\delta \cdot r_{i'} + (1-\gamma^{\tilde{n}-2}-(2k-2)\delta) \cdot r_{k+1} + \gamma^{\tilde{n}-2} \cdot r_{k+2} - c_{2_j} \leq \gamma^{\tilde{n}-2} \cdot r_{k+2} = 1.
\]

\medskip\noindent
{\bf Case 2.} \emph{Contract $\langle \tau, i^\star \rangle$ implements an action of type $i_j$ for some $i \ge 3$ and $j\in [k]$.} 
    
\medskip
Suppose $i^\star = i_j$.
Then there must be a payment function $t\in \tau$ that protects action $i_j$  
against action $(i-1)_j$.  
Since action $i_j$ 
places zero probability on  
outcome $j$,
we can assume without loss of generality that $t_j = 0$. 
According to this constraint, we must have 
\begin{align*}
    2\delta \cdot \sum_{i'\in[k], i' \neq j} t_{i'} 
    \;+\; & (1-\gamma^{\Tilde{n}-i}-(2k-2)\delta) \cdot t_{k+1} + \gamma^{\Tilde{n}-i} \cdot t_{k+2} - c_{i_j}  
    \geq \\
    &2\delta \cdot \sum_{i'\in[k], i' \neq j} t_{i'}
    + (1-\gamma^{\Tilde{n}-(i-1)}-(2k-2)\delta) \cdot t_{k+1} + \gamma^{\Tilde{n}-(i-1)} \cdot t_{k+2} - c_{(i-1)_j}, 
\end{align*}
or, equivalently, 
\[
    (1-\gamma^{\Tilde{n}-i}-(2k-2)\delta) \cdot t_{k+1} + \gamma^{\Tilde{n}-i} \cdot t_{k+2} - c_{i_j}  
    \geq  (1-\gamma^{\Tilde{n}-i+1}-(2k-2)\delta) \cdot t_{k+1} + \gamma^{\Tilde{n}-i+1} \cdot t_{k+2} 
    - c_{(i-1)_j}.
\]

Since $1-\gamma^{\Tilde{n}-i}-(2k-2)\delta < 1-\gamma^{\Tilde{n}-i+1}-(2k-2)\delta$ and $\gamma^{\Tilde{n}-i}>\gamma^{\Tilde{n}-i+1}$,  we obtain a lower bound on the expected payment $T_{i_j}(t)$ 
by setting $t_{k+1} = 0$ and finding the smallest $t_{k+2}$ such that 
\[
    \gamma^{\Tilde{n}-i} \cdot t_{k+2} - c_{i_j} 
    \geq \gamma^{\Tilde{n}-i+1} \cdot t_{k+2} -
    c_{(i-1)_j}.
\]
Rearranging and substituting 
$c_{i_j}$ and $c_{(i-1)_j}$, 
this yields 
\[
    t_{k+2} \geq \frac{1}{\gamma^{\Tilde{n}-i}} \cdot 
    \frac{c_{i_j} - c_{(i-1)_j}}{1-\gamma}
    = \frac{1}{\gamma^{\Tilde{n}-i}} \left(\frac{1}{\gamma^{i-2}} - 1 \right).
\]
Thus, the principal's utility from action $i_1$ is at most
\[
    \gamma^{\Tilde{n}-i} \cdot (r_{k+2} - t_{k+2}) = \gamma^{\Tilde{n}-i} \cdot \left(\frac{1}{\gamma^{\tilde{n}-2}} - t_{k+2}\right)\leq \frac{1}{\gamma^{i-2}} - \left(\frac{1}{\gamma^{i-2}} - 1 \right) = 1,
\]
which concludes the proof.
\end{proof}

\section{Monotone Succinct Ambiguous Contracts}
\label{app:monotone}

In this appendix, we explore a variation of our model, in which contracts are required to be monotone non-decreasing in rewards. 

\begin{definition}(monotone payment function)
    The payment function $t: [m] \rightarrow \mathbb{R}_{+}$ is monotone if outcomes entailing larger rewards engender 
    higher payments; formally: $r_j \geq r_{j'}$ $\Longrightarrow$ $t_{j} \geq t_{j'}$ for all $j,j' \in [m]$.
\end{definition}

Recall that by default we assume that outcomes are sorted so that rewards are monotone non-decreasing, i.e., such that $j \geq j'$ implies $r_j \geq r_{j'}$. An equivalent way to express monotonicity of payment functions is to require that $j \geq j'$ implies $t_j \geq t_{j'}$.

A monotone contract is then a tuple $\langle t, i \rangle$ consisting of a monotone payment function and a recommended action $i$. Similarly, a monotone $k$-ambiguous contract $\langle \tau, i \rangle$ is a collection of payment functions $\tau = \{t^1, \ldots, t^k\}$ together with a recommended action $i$ such that each $t^j \in \tau$ is a monotone payment function. The definitions of the agent's and the principal's utility, as well as the notion of an IC contract and an IC ambiguous contract remain unchanged.

Generalizing the LP formulation for classic contracts without the monotonicity requirement, the problem of finding the optimal monotone classic contract for a given action $i$ can be formulated as a linear program. See $\textsf{MON-MIN-PAY-LP}(i)$ in Figure~\ref{fig:minpaylp-monotone}.

\begin{figure}[ht]
\centering
\begin{subfigure}[b]{0.5\textwidth}
        \centering
        \begin{align*}
        \min \quad &\sum_{j} p_{ij} t_j \\
        \text{s.t.} \quad & \sum_j p_{ij} t_j -c_i \geq \sum_j p_{i'j} t_j -c_{i'}  &&\forall i' \neq i\\
        & t_j-t_{j-1} \geq 0 &&\forall j \geq 2\\
        & t_j \geq 0 &&\forall j
    \end{align*}
    \caption{\textsf{MON-MIN-PAY-LP}$(i)$}
    \end{subfigure}
\caption{The monotone minimum payment LP for action $i$.}
\label{fig:minpaylp-monotone}
\end{figure}

As in the case without monotonicity constraints, 
action $i$ can be implemented with monotone payments if $\textsf{MON-MIN-PAY-LP}(i)$ is feasible; and if it is, $\textsf{MON-MIN-PAY-LP}(i)$ finds an optimal monotone classic contract $\langle t, i \rangle$ for implementing action $i$. We refer to any such contract as a \emph{monotone min-pay contract} for action $i$.

\subsection{Optimal Monotone Contracts}

All results in Section \ref{sec:alg} extend to monotone $k$-ambiguous contracts in a natural way. This is because the balancing routine (Algorithm~\ref{alg:shifting}) adds the same constant amount to all payments, and thus preserves monotonicity. 

Generalizing the definition of a shifted min-pay contract, we define a \emph{shifted monotone min-pay contract} as a contract that is obtained by solving the ``monotone min-pay LP'' in Figure~\ref{fig:minpaylp-monotone} for some action $i^\star$ and subinstance $\mathcal{I}_{S^\ell \cup \{i^\star\}}$ for some subset $S^\ell \subseteq [n] \setminus \{i^\star\}$ of the actions other than $i^\star$, and then possibly adding a fixed amount $c_\ell \geq 0$ to each outcome $j \in [m]$. 

We then have the following characterization result, showing that the same separability that applies to possibly non-monotone succinct ambiguous contracts also applies when the payment functions are required to be monotone.

\begin{theorem}[Optimal monotone $k$-ambiguous contracts]\label{thm:opt-mon-k-ambiguous}
Fix any $k$. Suppose action $i^\star$ is implementable with a monotone $k$-ambiguous contract. Then there is an optimal IC monotone $k$-ambiguous contract $\langle \tau = \{t^1, \ldots, t^k\}, i^\star\rangle$ for implementing action $i^\star$ 
that takes the following form: 
\begin{itemize}[noitemsep]
\item There is a partition of the actions $[n]\setminus \{i^\star\}$ into $k$ sets $\{S^1, \ldots, S^k\}$. 
\item For each $\ell \in [k]$, contract $t^\ell$ is a shifted monotone min-pay contract for action $i^\star$ 
for the subinstance $\mathcal{I}_{\{i^\star\} \cup S^\ell}$, obtained by restricting the original instance to actions $\{i^\star\} \cup S^\ell$.
\end{itemize}
\end{theorem}

Similarly, we can extend the characterization of implementable actions from possibly non-monotone succinct ambiguous contracts to succinct ambiguous contracts that are required to be monotone.

\begin{proposition}\label{prop:implementable-with-mon-k-ambiguous}
An action $i$ in an instance $\mathcal{I}$ is implementable by a monotone $k$-ambiguous contract if and only if there exists a partition $\mathcal{S}$ of $[n]\setminus\{i\}$ to $k$ sets, such that for every set $S\in \mathcal{S}$, 
action $i$ is implementable by a classic monotone contract in subinstance $\mathcal{I}_{\{i\} \cup S}$.
\end{proposition}

\subsection{Bounds on the Succinctness Gap}

Clearly, we can also defined the $k$-succinctness gap with respect to monotone $k$-ambiguous contracts and the benchmark being the optimal unrestricted monotone ambiguous contract.

The non-trivial observation here is that the lower bound (positive result) on the $k$-succinctness gap for the case without monotonicity constraints established in Theorem~\ref{thm:k-lower}, in fact also applies to the case when we insist on monotonicity.

\begin{theorem}[Lower bound]
\label{thm:k-mon-lower}
For any $k=1,\ldots,n-1$, the monotone $k$-succinctness gap $\rho^{\textsf{mon}}_k(\Gamma_n)$ satisfies
$\rho^{\textsf{mon}}_k(\Gamma_n) \geq \frac{1}{n-k}.$
\end{theorem}

The reason is that Lemma~\ref{lem:linear} of \cite{DuettingFPS24} is in fact established through a linear contract, which is clearly
monotone. Indeed, all other payment functions that appear in the proof of Theorem~\ref{thm:k-lower} either come from the (already) monotone optimal unrestricted ambiguous contract or have a flat (constant) payment for all outcomes. 

\begin{remark}
The proof of the upper  
bound (i.e., negative result) of Theorem~\ref{thm:kovern-new} uses non-monotone payment functions, and so the argument does not extend in a straightforward way. 
\end{remark}

\subsection{Tractability and Hardness}

Following the same reasoning as in Section~\ref{sec:alg} we can interpret the respective algorithms (which use $\textsf{MON-MIN-PAY-LP}(i)$ rather than $\textsf{MIN-PAY-LP}(i)$) as a natural but possibly na\"ive algorithm. Since the addition of the monotonicity constraint does not change the tractability of the LP, the discussion of the running time in Section~\ref{sec:hardness} remains valid also when we insist on monotonicity.

Interestingly, also the hardness result linking  the computational complexity of the optimal $k$-ambiguous contract problem in instances with $n$ actions to the hardness of $(n,k)$-\textsc{Makespan Minimization} carries over to the monotone case.

\begin{theorem}[Hardness of optimal monotone contracts]
\label{thm:hardness-monotone-k}
    The $k(n)$-ambiguous contract problem with monotonicity constraints is \textsf{NP}-hard for all functions $k: [n] \rightarrow [n]$ such that $(n,k(n))$-\textsc{Makespan Minimization} is \textsf{NP}-hard. 
\end{theorem}

We establish the \textsf{NP}-hardness of the monotone $k$-ambiguous contract problem via a reduction from $(n,k)$-\textsc{Makespan Minimization}. 

\paragraph{Constructing the instance.} 
Given an instance $\mathcal{J}_{n,k}$ of $(n,k)$-\textsc{Makespan Minimization} with values $a_1, \ldots, a_n$, let $A = \sum_{i=1}^n a_i$.  We construct an instance $\mathcal{I}_{\textsf{mon}}(a_1, \ldots, a_n)$ of the monotone $k$-ambiguous contract problem with $n+2$ actions $\{0, 1, \dots, n+1\}$ and $n+1$ outcomes $\{0, 1, \dots, n\}$ as follows.
Let $R = A + 1$. The rewards are strictly increasing: $r_0 = 0$, and $r_j = j \cdot R$ for $j \in [n]$. The actions are defined as follows:
\begin{itemize}
    \item Action $n+1$ (the target action): Action $n+1$ places a probability of $0$ on outcome $0$ and a uniform probability of $1/n$ on outcomes $1, \ldots, n$. The cost of action $n+1$ is $c_{n+1} = \frac{1}{n} \max_{j \in [n]} \frac{a_j}{n-j+1}$.
    \item Action $i \in [n]$: Action $i \in [n]$ is  identical to action $n+1$, except that a probability mass of $1/n$ is shifted from outcome $i$ down to outcome $i-1$. That is, for $i = 1$, we have $p_{1,0} = 1/n$, $p_{1,1} = 0$, and $p_{1,j} = 1/n$ for $j \in \{2, \ldots, n\}$. For $i > 1$, we have $p_{i,0} = 0$, $p_{i,j} = 1/n$ for $j \in \{1, \ldots, i-2\}$, $p_{i,i-1} = 2/n$, $p_{i,i} = 0$, and $p_{i,j} = 1/n$ for $j \in \{i+1, \ldots, n\}$.
    The cost of action $i$ is $c_i = c_{n+1} - \frac{a_i}{n(n-i+1)}$. Note that $c_i \ge 0$ by the definition of $c_{n+1}$.
    \item Action $0$: Action $0$ deterministically yields outcome $0$ with probability $1$. The cost of action $0$ is $c_0 = 0$.
\end{itemize}

Note that the expected reward of action $n+1$ is $R_{n+1} = \frac{1}{n} \sum_{j=1}^n j \cdot R = \frac{n+1}{2} \cdot R$, while the expected reward of action $i \in [n]$ is $R_i = R_{n+1} - \frac{1}{n} r_i + \frac{1}{n} r_{i-1} = R_{n+1} - \frac{R}{n}$. The expected reward of action $0$ is $R_0 = 0$.

\paragraph{Properties of monotone contracts.} 
Before establishing the correspondence between the contracting problem and the original makespan minimization problem, we observe a few properties of monotone contracts in $\mathcal{I}_{\textsf{mon}}(a_1, \ldots, a_n)$.

Since the rewards are strictly increasing ($r_0 < r_1 < \dots < r_n$), the monotonicity constraint requires any valid payment function $t$ to be non-decreasing: $0 \le t_0 \le t_1 \le \dots \le t_n$. For $j \in [n]$ let $\Delta_j = t_j - t_{j-1} \ge 0$  denote the non-negative marginal payments. Thus, for any $j \in \{0\} \cup [n]$, the payment $t_j$ can be expressed as $t_j = t_0 + \sum_{\ell=1}^j \Delta_\ell$.
Using this, for any monotone payment function $t$, the expected payment for action $n+1$ is:
\begin{align}
    \Payment{n+1}{t} = \frac{1}{n} \sum_{j=1}^n t_j = \frac{1}{n} \sum_{j=1}^n \left( t_0 + \sum_{\ell=1}^j \Delta_\ell \right) = t_0 + \frac{1}{n} \sum_{i=1}^n (n-i+1) \Delta_i.
    \label{eq:mon-payment-formula}
\end{align}

The next observation provides a tool for deciding whether a monotone payment function $t$ protects action $n+1$ against action $i \in [n]$.

\begin{observation}
\label{obs:monotone-protect}
In instance $\mathcal{I}_{\textsf{mon}}(a_1, \ldots, a_n)$ monotone payment function $t$ protects action $n+1$ against action $i \in [n]$ if and only if $\Delta_i \geq \frac{a_i}{n-i+1}$.
\end{observation}

\begin{proof}
For any action $i \in [n]$, the payment function $t$ protects action $n+1$ against action $i$ if and only if $\UAmid{n+1}{t} \ge \UAmid{i}{t}$.
This condition simplifies to
\[
\Payment{n+1}{t} - c_{n+1} \ge \Payment{i}{t} - c_i,
\]
or equivalently,
\begin{equation}
\label{eq:protects}
\Payment{n+1}{t} - \Payment{i}{t} \ge c_{n+1} - c_i.
\end{equation}

By the structure of the instance,
\[
\Payment{i}{t} 
= \Payment{n+1}{t} - \frac{1}{n}t_i + \frac{1}{n}t_{i-1}
= \Payment{n+1}{t} - \frac{1}{n}\Delta_i .
\]

Substituting into Equation~(\ref{eq:protects}), we obtain that $t$ protects action $n+1$ against action $i$ if and only if
\[
\frac{1}{n}\Delta_i
\ge c_{n+1} - c_i
= \frac{a_i}{n(n-i+1)}.
\]
Rearranging the inequality yields the claim.
\end{proof}

\paragraph{Completing the reduction.} 
We start by showing that any partition $\bar{\mathcal{S}} = \{\bar{S}^1, \dots, \bar{S}^k\}$ of $[n]$ into $k$ sets,
implies a monotone IC $k$-ambiguous contract that implements action $n+1$ with an expected payment equal to the makespan achieved by $\bar{\mathcal{S}}$ divided by $n$.

\begin{lemma}\label{lem:mon-makespan-to-contract}
In the instance $\mathcal{I}_{\textsf{mon}}(a_1, \ldots, a_n)$, for any partition $\bar{\mathcal{S}}=\{\bar{S}^1,\ldots,\bar{S}^k\}$ of $[n]$, there exists a monotone IC $k$-ambiguous contract $\langle \bar{\tau}, n+1 \rangle$ that implements action $n+1$ such that the expected payment for action $n+1$ is exactly
\[
    \Payment{n+1}{\bar{\tau}} = \frac{1}{n}\max_{i\in[k]}\sum_{j\in\bar{S}^i} a_j.
\]
\end{lemma}

\begin{proof}
Given $\bar{\mathcal{S}} = \{\bar{S}^1, \ldots, \bar{S}^k\}$, we construct a collection of $k$ monotone payment functions $t^1, \dots, t^k$ as follows. For each $\ell \in [k]$, we set $t^\ell_0 = 0$ and define the marginals as $\Delta^\ell_i = \frac{a_i}{n-i+1}$ if $i \in \bar{S}^\ell$, and $0$ otherwise. Because $\Delta^\ell_i \ge 0$, each $t^\ell$ is a valid, monotonically non-decreasing function. By Observation~\ref{obs:monotone-protect}, $t^\ell$ protects action $n+1$ against all actions in $\bar{S}^\ell$. Furthermore, by Equation~\eqref{eq:mon-payment-formula}, the expected payment of $t^\ell$ satisfies: 
\[
    \Payment{n+1}{t^\ell} = \frac{1}{n} \sum_{i \in \bar{S}^\ell} (n-i+1) \frac{a_i}{n-i+1} = \frac{1}{n} \sum_{i \in \bar{S}^\ell} a_i.
\]

Applying the additive balancing routine (Algorithm~\ref{alg:shifting}) to this collection of monotone payment functions yields a monotone IC $k$-ambiguous contract $\bar{\tau}$ implementing action $n+1$ in the subinstance with actions $1, \ldots, n+1$.  
The expected payment of $\bar{\tau}$ is exactly 
\[
T_{n+1}(\bar{\tau}) =  \frac{1}{n}\max_{\ell \in [k]} \sum_{i \in \bar{S}^\ell} a_i = \frac{M_{\bar{\mathcal{S}}}}{n}.
\]

It remains to verify that $\bar{\tau}$ also protects action $n+1$ against action $0$. The agent's utility for action $n+1$ under $\bar{\tau}$ is $M_{\bar{\mathcal{S}}}/n - c_{n+1}$. Because $n-i+1 \ge 1$ for all $i \in [n]$, it holds that $a_i \ge \frac{a_i}{n-i+1}$. Since any valid makespan is at least the maximum single element, we have that $M_{\bar{\mathcal{S}}} \ge \max_i a_i \ge \max_i \frac{a_i}{n-i+1} = n \cdot c_{n+1}$. Thus, $M_{\bar{\mathcal{S}}}/n - c_{n+1} \ge 0$. Furthermore, the shift added to the bottleneck payment function $t^{\ell^\star}$ (i.e., $\ell^\star$ such that   $\Payment{n+1}{t^{\ell^\star}} = M_{\bar{\mathcal{S}}}/n$) is $0$, meaning its baseline payment remains $\bar{t}^{\ell^\star}_0 = 0$. Thus $\UAmid{0}{\bar{\tau}} = 0$, successfully protecting action $n+1$ against action $0$.
\end{proof}

Next we show that the optimal monotone $k$-ambiguous contract in instance $\mathcal{I}_{\textsf{mon}}(a_1, \ldots, a_n)$ must implement action $n+1$.

\begin{proposition}
\label{prop:monotone-must-implement-last-action}
In the instance $\mathcal{I}_{\textsf{mon}}(a_1, \ldots, a_n)$ 
the optimal monotone IC $k$-ambiguous contract must implement  
action $n+1$.  
\end{proposition}

\begin{proof}
We show that the principal's utility from the monotone $k$-ambiguous contract $\langle \tau, n+1\rangle$ derived from the optimal solution $\mathcal{S} = \{S^1, \ldots, S^k\}$ to the makespan minimization problem via Lemma~\ref{lem:mon-makespan-to-contract} is strictly higher than the utility from any other action. Recall that the principal's utility achieved by this contract is $U_P(\langle \tau, n+1\rangle) = R_{n+1} - \frac{M}{n}$, where $M = \max_{i \in [k]} \sum_{j \in S^i} a_j.$
We claim that $R>M$, which implies that $U_P(\langle \tau, n+1\rangle) > R_{n+1} - \frac{R}{n}$.
Indeed, this follows by the fact that $R = A + 1$, and $A \geq M$.

On the other hand, the expected reward of any action $i \in [n]$ is $R_i = R_{n+1} - \frac{1}{n} r_i + \frac{1}{n} r_{i-1} = R_{n+1} - \frac{R}{n}$, while action $0$ has an expected reward of $R_0 = 0$. Because the principal's utility cannot exceed the expected reward, implementing any action $i \in [n]$ or action $0$ yields a utility of at most $R_{n+1} - \frac{R}{n}$. 
Since the principal's utility from $\langle \tau, n+1 \rangle$ is strictly higher than that, we conclude that the optimal monotone IC $k$-ambiguous contract must implement action $n+1$. 
\end{proof}

By Proposition~\ref{prop:monotone-must-implement-last-action}, the optimal monotone IC $k$-ambiguous contract must implement action $n+1$.
We next show how to translate an optimal contract into an optimal solution to the makespan minimization problem.

\begin{proposition}
\label{prop:monotone-contract-to-partition}
Given an optimal monotone IC $k$-ambiguous contract $\langle\tau=\{t^1,\ldots,t^k\},n+1\rangle$ for instance $\mathcal{I}_{\textsf{mon}}(a_1, \ldots, a_n)$, the following partition  $\mathcal{S}=\{S^1,\ldots,S^k\}$ of $[n]$ gives an optimal solution to the $(n,k)$-\textsc{Makespan Minimization} problem in instance $\mathcal{J}_{n,k}$:
\[
    S^1 = \{i\in [n] \mid \UAmid{i}{t^1} \le \UAmid{n+1}{t^1}\}
\]
\[
    S^2 = \{i\in [n] \setminus S^1 \mid \UAmid{i}{t^2} \le \UAmid{n+1}{t^2}\}
\]
\vspace*{-0.5cm}
\[
     \vdots
\]
\vspace*{-0.5cm}
\[
    S^k = \left\{ i\in [n]\setminus \bigcup_{l \in [k-1]} S^l \bigg| \UAmid{i}{t^k} \le \UAmid{n+1}{t^k} \right\}
\]
\end{proposition}

\begin{proof}
We show that the makespan of the partition $\mathcal{S}$ derived from the optimal IC $k$-ambiguous contract $\langle \tau, n+1 \rangle$ is (weakly) smaller than that of any other partition.

We first show that the expected payment of $\tau$ for action $n+1$ is at least $\frac{1}{n}\max_{i\in[k]}\sum_{j\in S^i} a_j$.
Since partition $\mathcal{S}$ is constructed such that $t^i$ protects action $n+1$ against $S^i$, by Observation~\ref{obs:monotone-protect} it holds that for all $i\in[k]$ and $j\in S^i$, $\Delta^i_j \geq \frac{a_j}{n-j+1}$. So, by Equation~\eqref{eq:mon-payment-formula}, the expected payment of $t^i$ for action $n+1$ is at least 
\[
    T_{n+1}(t^i) \geq \frac{1}{n} \sum_{j=1}^{n} (n-j+1)\Delta^i_j \geq \frac{1}{n} \sum_{j \in S_i} (n-j+1)\frac{a_j}{n-j+1} = \frac{1}{n} \sum_{j \in S_i} a_j. 
\] 
Since this is true for all $i\in [k]$ and $\langle\tau , n+1 \rangle$ is consistent, we get the desired inequality:
\begin{align*}
    \Payment{n+1}{\tau}\ge\frac{1}{n}\max_{i\in[k]}\sum_{j\in S^i} a_j.
\end{align*}

Now consider an arbitrary partition $\bar{\mathcal{S}} = \{\bar{S}^1, \ldots, \bar{S}^k\}$ of $[n]$. Using Lemma~\ref{lem:mon-makespan-to-contract}, we obtain a monotone IC $k$-ambiguous contract $\langle \bar{\tau}, n+1 \rangle$ with an expected payment of
\[
T_{n+1}(\bar{\tau}) = \frac{1}{n} \max_{i \in [k]} \sum_{j \in \bar{S}^i} a_j.
\]
Since $\langle \tau, n+1 \rangle$ is an optimal $k$-ambiguous contract, we must have $T_{n+1}(\tau) \leq T_{n+1}(\bar{\tau})$ and thus
\[
\frac{1}{n} \max_{i \in [k]} \sum_{j \in S_i} a_j \leq T_{n+1}(\tau) \leq T_{n+1}(\bar{\tau}) = \frac{1}{n} \max_{i \in [k]} \sum_{j \in \bar{S}^i} a_j.
\]
Since this holds for an arbitrary partition $\bar{\mathcal{S}}$ it must hold for any partition $\bar{\mathcal{S}}$, and so $\mathcal{S}$ is indeed a partition that minimizes the makespan.
\end{proof}

We are now ready to prove Theorem~\ref{thm:hardness-monotone-k}.

\begin{proof}[Proof of Theorem~\ref{thm:hardness-monotone-k}]
The proof follows from noting that the construction of $\mathcal{I}_{\textsf{mon}}(a_1, \ldots, a_n)$ is poly-time, and Proposition~\ref{prop:monotone-contract-to-partition} yields a poly-time algorithm for translating an optimal monotone $k$-ambiguous contract into a makespan-minimizing partition.
\end{proof}

\end{document}